\newcolumntype{Y}{>{\centering\arraybackslash}X}
\newtheorem{theorem}{Theorem}
\newtheorem{lemma}{Lemma}
\newtheorem{corollary}{Corollary}
\newtheorem{definition}{Definition}
\DeclareMathOperator{\Tr}{Tr}
\DeclareMathOperator{\sos}{SOS}
\DeclareMathOperator{\sa}{SA}
\DeclareMathOperator{\sossa}{SOSSA}
\DeclareMathOperator{\lcu}{LCU}
\DeclareMathOperator{\syk}{SYK}
\newcommand{\ket}[1]{| #1 \rangle} 
\newcommand{\bra}[1]{\langle #1|}
\newcommand{\ketbra}[1]{|#1 \rangle \mskip-2mu \langle #1|}
\def\Be{\textsc{{Be}}}
\def\one{{\mathchoice {\rm 1\mskip-4mu l} {\rm 1\mskip-4mu l} {\rm
1\mskip-4.5mu l} {\rm 1\mskip-5mu l}}}
\def\cO{{\cal O}}
\begin{document}
\title{Quantum simulation with sum-of-squares spectral amplification}

\author{Robbie King}
\email{robbieking@google.com}
\affiliation{Google Quantum AI, Venice, CA 90291, United States}
\affiliation{Department of Computing and Mathematical Sciences, California Institute of Technology, Pasadena, CA 91125, United States}

\author{Guang Hao Low}
\affiliation{Google Quantum AI, Venice, CA 90291, United States}

\author{Ryan Babbush}
\affiliation{Google Quantum AI, Venice, CA 90291, United States}

\author{Rolando D. Somma}
\email{rsomma@google.com}
\affiliation{Google Quantum AI, Venice, CA 90291, United States}

\author{Nicholas C. Rubin}
\email{nickrubin@google.com}
\affiliation{Google Quantum AI, Venice, CA 90291, United States}

\begin{abstract}
We present sum-of-squares spectral amplification (SOSSA), a framework for improving quantum simulation relevant to low-energy problems.
We show how SOSSA can be applied to problems like energy and phase estimation
and provide fast quantum algorithms for these problems that significantly improve over prior art. 
To illustrate the power of SOSSA in applications, we consider the Sachdev-Ye-Kitaev model, a representative strongly correlated system, and demonstrate asymptotic speedups over generic simulation methods
by a factor of the square root of the system size. Our results reinforce those observed in [G.H. Low \textit{et al.},  arXiv:2502.15882 (2025)], where SOSSA was used to achieve state-of-the-art gate costs for phase estimation of real-world quantum chemistry systems.
\end{abstract}

\maketitle

\section{Introduction}

\begin{table*}
\renewcommand{\arraystretch}{1.5}
    \begin{tabularx}{\textwidth}{cY|Y|Y|Y}
        \hline\hline
        & & \shortstack[c]{{} \\ \bf{LCU} \\ $\lambda = \lambda_{\lcu}$ \\ {}} & \shortstack[c]{{} \\ \bf{Termwise SA} \\ $\lambda = \lambda_{\lcu}$ \\ $\Delta = \Delta_{\lcu}$} & \shortstack[c]{{} \\ \bf{SOSSA} \\ $\lambda = \lambda_{\sos}$ \\ $\Delta = \Delta_{\sos}$} \\
        \hline
        (a) & {\bf Energy estimation} & $\Theta(\lambda / \epsilon)$ \cite{knill2007optimal} & \multicolumn{2}{c}{$\Theta\big(\sqrt{\Delta\lambda} / \epsilon\big)$ \Cref{lem:fast_estimation_single_term_no_prior}} \\
         & {\bf Phase estimation} & $\mathcal{O}\big(({\lambda}/{(\sqrt{p}\epsilon}))\log\frac{1}{p}\big)$ \cite{berry2024rapidinitialstatepreparation} & \multicolumn{2}{c}{$\mathcal{O}\big(\big({\sqrt{\Delta\lambda}}/{(\sqrt{p}\epsilon)}\big)\log\frac{1}{p}\big)$ \Cref{lem:fast_ground_sate_estimation_no_prior}} \\
        & {\bf Time evolution} & $\Theta(\lambda t+\log\frac{1}{\epsilon})$ \cite{low2017optimal,low2019hamiltonian} & \multicolumn{2}{c}{$\Theta\big(\sqrt{\Delta \lambda} t+\sqrt{\lambda / \Delta}\log\frac{1}{\epsilon})\big)$ \cite{zlokapa2024hamiltonian}} \\
        \hline
        (b) & {\bf SYK model} & $\lambda_{\lcu} \sim N^2$ & $\sqrt{\Delta_{\lcu} \lambda_{\lcu}} \sim N^2$ & $\sqrt{\Delta_{\sos} \lambda_{\sos}} \sim N^{\frac{3}{2}}$ \\
        \hline\hline
    \end{tabularx}
    \caption{\small \raggedright {(a)
    Simulation tasks on Hamiltonian $H$ to precision $\epsilon$ using LCU, termwise SA, and SOSSA. 
    Presented are the query complexities assuming
    access to the Hamiltonian only via the block-encodings $H / \lambda_{\lcu}$, $H_{\sa} / \sqrt{\lambda_{\lcu}}$, and $H_{\sossa} / \sqrt{\lambda_{\sos}}$, respectively. 
    The gate complexities can be determined from the gate complexities of each block-encoding, which can be different, and the additional arbitrary gates in the algorithm. 
    We show: i) estimation of the energy $E=\langle\psi|H|\psi\rangle$, ii) phase estimation of the ground state energy $E$ with initial state $\ket{\psi}$ and ground-state $\ket{\psi_0}$ satisfying $p=|\langle\psi\ket{\psi_0}|^2 >0$, and iii) time-evolution to implement $e^{-itH}$ on a state of energy at most $E$. 
    We assume $-\lambda_{\lcu} \le E \leq -\lambda_{\lcu} + \Delta_{\lcu}$ for termwise SA  and $-\beta \le E \leq -\beta + \Delta_{\sos}$ for SOSSA. The lower bound on the ground state energy  $-\beta$ is obtained in the SOS step and implies $\Delta_{\sos}\ll \Delta_{\lcu}$.
    We also provide adaptive algorithms with improved complexities that do not require knowledge of $\Delta_{\lcu}$ or $\Delta_{\sos}$. (b) Normalization factors for the SYK model demonstrate an asymptotic speedup in system size $N$ using an appropriate SOS. The gate complexities for all block-encodings is similar in this example.
    }}
    \label{tab:results}
\end{table*}

Simulating quantum many-body systems is one of the most heralded and valuable applications of quantum computing. Although efficient quantum algorithms exist for numerous quantum simulation problems \cite{kitaev1995quantum, knill2007optimal, lloyd1996universal, wiebe2010higher, berry2015simulating, low2017optimal}, further improvements are essential to fully realize robust large-scale quantum simulation when accounting for the considerable constant factor overhead associated with fault-tolerant quantum computation \cite{berry2024rapidinitialstatepreparation, rubin2023fault, babbush2021focus}. Quantum algorithm improvements are expected to arise from exploiting specific structures of the simulation instance, thus mitigating worst-case computational costs. Examples of such are exploiting symmetries in chemistry problems or locality in lattice models~\cite{loaiza2023block, Haah2019Lattice}. Here, we introduce a strategy that exploits the low-energy properties of quantum states to improve quantum simulation algorithms.  Our framework relies on two distinct algorithmic ideas: sum-of-squares (SOS) representations of Hamiltonians and spectral amplification (SA). 

The low-energy setting is one of the most promising areas in quantum simulation. This setting is relevant in the study of matter at low temperatures, including quantum phase transitions and the computation of ground-state properties. In this setting, some of the most important simulation tasks are:  i) Estimating the energy of a quantum system with respect to certain quantum state \cite{knill2007optimal}; ii) Estimating the ground state energy of a quantum system by phase estimation \cite{kitaev1995quantum}; iii) Simulating the time evolution of a quantum state under a Hamiltonian \cite{lloyd1996universal}. We will show how SOSSA significantly improves the gate complexity of generic methods for all these problems. To this end, SOSSA combines two key ideas:   first, 
it produces a suitable SOS representation of the given Hamiltonian $H$ shifted by a constant $\beta$ such that $H + \beta \one$ has a small ground-state energy, and second, it uses SA to 
essentially compute the square root of $H + \beta \one$~\cite{somma2013spectral,low2017hamiltonian,zlokapa2024hamiltonian}.  

During the SOS step, $H+\beta \one$ is processed classically and represented as a sum of positive terms. This can modify properties of $H$ that have an impact on the complexity of the simulation algorithm, such as an $\ell_1$-norm that depends on its presentation, or the ground state energy. Ideally, the SOS representation is such that $-\beta$ is a tight lower bound on the ground state energy and that the square roots of the positive terms are not too difficult to simulate. This step relates to the well-known optimization task of finding the best SOS lower bound on the ground state energy, which can be solved efficiently in classical preprocessing using semidefinite programming. During the SA step we produce a different Hamiltonian, the `square root' of $H+\beta \one$, whose eigenvalues are the square roots of those of $H + \beta \one$. 
The low-lying eigenvalues like the ground state energy are `amplified' and, for example, this can reduce the resources needed for phase estimation since the precision requirements
are now less stringent. Similarly, access to the square root of of $H+\beta \one$ allows us to
 reduce the problem of energy estimation to one of phase estimation with worse accuracy.
While the SOS step might introduce additional overheads from the complexity of the terms, the SA step has the potential to reduce it, and the method is useful if the overall combination still provides an improvement in gate complexities. Notably, this occurs in interesting systems.

In this article, we
provide a self-contained description of the SOSSA framework and show how to produce
suitable SOS representations for any (finite dimensional) Hamiltonian so that SA can be applied. This generalizes the results of Ref.~\cite{low2025fast} 
beyond quantum chemistry.
We then use this framework to construct
 improved quantum algorithms for expectation estimation
and ground-state energy (phase) estimation
in the low-energy setting. Last, we apply these quantum algorithms to the Sachdev-Ye-Kitaev (SYK) model
where we demonstrate an asymptotic advantage in terms of the system size
in energy estimation or ground-state energy estimation.
Given that the SYK model exemplifies a strongly correlated condensed matter system, these results highlight the general applicability of SOSSA to many quantum simulation problems. 
The results are summarized in~\Cref{tab:results}, where we also include prior results on time-evolution from Ref.~\cite{zlokapa2024hamiltonian} for reference.

Our SOSSA quantum algorithms improve prior art for: i) energy estimation, where in contrast with Ref.~\cite{simon2024amplifiedamplitudeestimationexploiting} we do not require an upper bound on the expectation to be estimated and the algorithm works for arbitrary (block-encoded) operators, and ii) ground-state phase estimation, where we generalize results in Ref.~\cite{low2025fast} in that we do not need an upper bound on the energy and our method works even if the overlap between the trial state and ground state is $p<1$.  Essentially, our quantum algorithms
  improve the linear  complexities of generic methods on $\lambda$ to $\sqrt{\Delta \lambda}$, where $\lambda$ is a parameter related to the norm of the Hamiltonian--the largest possible energy--and $\Delta$ is a parameter related to the low energy of the initial state. The relevant low-energy instances arise when $\Delta \ll \lambda$.

We conclude that SOSSA provides a useful framework for several quantum simulation tasks, and expect it to be applied to other systems. Furthermore, we complement these findings by providing tight lower bounds for energy and phase estimation that show our quantum algorithms are query optimal in the low-energy setting, and provide a low-depth version for expectation estimation, which scales with the standard quantum limit, that might be of independent interest for near-term applications.

Lastly, we remark that while SOSSA concerns the combination of SOS representations and SA, each has been extensively studied in prior work. SOS is used in the context of approximation algorithms \cite{goemans1995improved,pironio2010convergent}, lower bounds on ground-state energies in quantum chemistry \cite{mazziotti2006quantum, nakata2001variational}, and characterizing quantum correlations \cite{navascues2008convergent}. These methods often appear in the pseudomoment picture, which is dual to the SOS optimization that we consider here. SA was used to achieve a quadratic quantum speedup within the context of adiabatic quantum computation~\cite{somma2013spectral} and was more recently used to obtain improved quantum algorithms for simulating time evolution and phase estimation on low-energy states~\cite{low2017hamiltonian,zlokapa2024hamiltonian,low2025fast}.


\section{SOSSA}

The goal of SOSSA is to improve the gate complexity of simulation tasks in the low-energy sector. To this end, SOSSA uses SA for reducing query complexities first. Quantum signal processing and the related quantum singular value transform provide the modern machinery for quantum simulation tasks, relying on access to a block-encoding of a Hamiltonian $H$~\cite{low2019hamiltonian, gilyen2019quantum, martyn2021grand}. This is a unitary acting on an enlarged space that contains the matrix $H/\lambda$ in one of its blocks, where $\lambda \ge \|H\|$ is needed for normalization. In quantum simulation algorithms we are often interested in both, the query and gate complexities. We define the query complexity to be the number of times the block-encodings are used (including their inverses), and the gate complexity to be the total number of two-qubit gates to implement the algorithm, which includes the number of gates to implement the block-encodings in applications. 

A standard approach to construct the block-encoding uses an efficient presentation of $H$ as a linear combination of unitaries (LCU)~\cite{childs2012hamiltonian,berry2015simulating}. Suppose a Hamiltonian $H$ acts on a system of $N$ qubits and is presented as a sum of terms as
\begin{equation}
\label{eq:HLCUPaulis}
    H = \sum_{j=0}^{R-1} g_j \sigma_j \;,
\end{equation}
where $g_j \in \mathbb R$ are coefficients and $\sigma_j$ are (tensor) products of $N$ Pauli matrices (i.e.~Pauli strings). The query complexity of various quantum simulation tasks through this LCU scales linearly in the so-called $\ell_1$-norm of the Hamiltonian:
\begin{equation}
    \lambda_{\lcu} \coloneqq  \sum_{j=0}^{R-1} |g_j| \;.
\end{equation}
The LCU column of~\Cref{tab:results} presents some known results.

SA allows one to get around the linear cost in $\lambda_{\rm LCU}$  in simulation tasks involving states of low energy. SA can only be applied to Hamiltonians that are positive semidefinite. One way to achieve SA is to produce a square root of the Hamiltonian, so that the low-energy spectrum is amplified. Intuitively, this amplification arises because the square root function is steep near zero, i.e.~$\sqrt{x} \gg x$ for $x \ll 1$, where $x$ denotes a rescaled eigenvalue. See \Cref{fig:sqrt} for an illustration. For phase estimation, a precise estimate of a small eigenvalue of the original Hamiltonian can then be obtained by squaring a coarser estimate of an eigenvalue of its square root. For time evolution, the amplified low-energy spectrum allows for a more efficient approximation to the evolution operator using lower degree polynomials in quantum signal processing. In both cases, SA can lead to a lower query complexity than in the worst case. See Appendix~\ref{sec:sa} for results on SA.
\begin{figure}[htb]
\centering
\includegraphics[width=8cm]{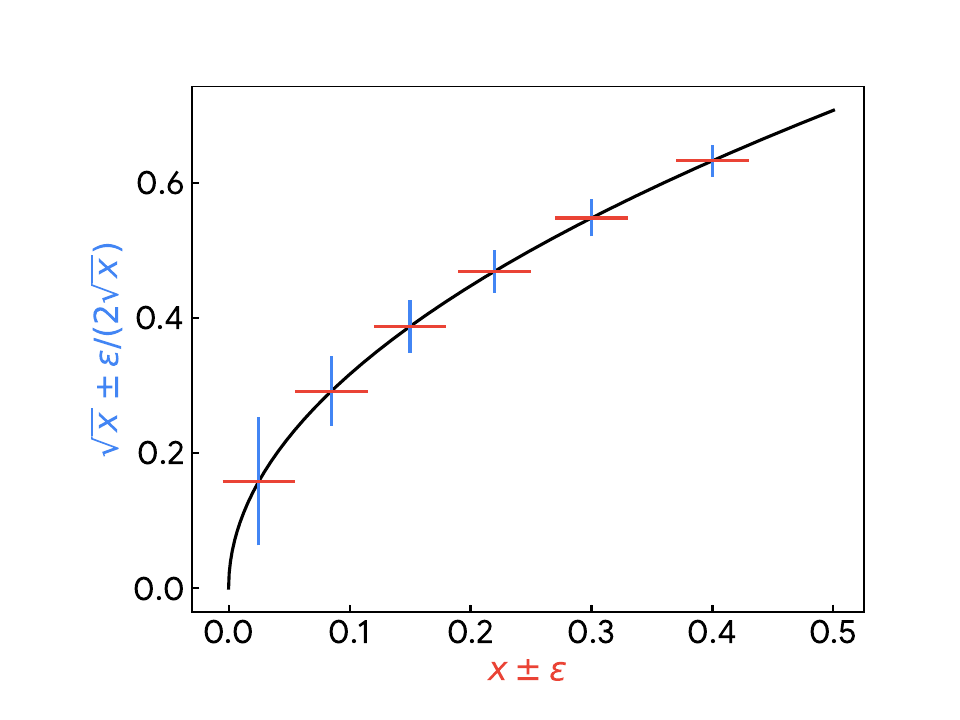}
\caption{Uncertainty propagation through the square root function. By constructing the square root of a positive semidefinite Hamiltonian, we are able to amplify the small eigenvalues due to the divergent behavior of the derivative of $\sqrt x$  near zero. Here $x \ge 0$ denotes a rescaled eigenvalue and the relevant low-energy regime occurs for $x \ll 1$.
The error bars illustrate that an estimate of an eigenvalue $E$ to precision $\epsilon$ can be obtained through an estimate of $\sqrt{E}$ to precision ${\cal O}(\epsilon/\sqrt E)$, which becomes coarser as $E$ decreases, thereby reducing the resources needed for phase estimation.
The approach can be generalized to other functions $f(x)$ whose derivatives are large or divergent as $x \rightarrow 0$ and that could arise from other related constructions, such as the quantum-walk where $f(x)=\arccos(x-1)$ results in a similar quadratic amplification~\cite{somma2013spectral,low2025fast}.} \label{fig:sqrt}
\end{figure}

The easiest way to apply SA to a Hamiltonian presented as in Eq.~\eqref{eq:HLCUPaulis} is to add a shift to each Pauli string $\sigma_j$ so that it becomes positive semidefinite; for example, $\one \pm \sigma_j  \succeq 0$, where $\one$ is the identity matrix. We call this approach `termwise SA', but also anticipate that this approach is not generally effective. During this preprocessing, the shift produces
\begin{align}\label{eq:pauli_sos_projector}
    H + \lambda_{\lcu} \one &= 2 \sum_{j=0}^{R-1} |g_j| \Pi_j \;, 
\end{align}    
where $\Pi_j :=\frac 1 2 ( \one + \text{sign}(g_j) \sigma_j)$ are simple orthogonal projectors since the Pauli strings $\sigma_j$ have eigenvalues $\pm 1$. We can define the spectral amplified operator for this specific example to be:
\begin{equation}
    H_{\sa} = \sqrt{2} \sum_{j=0}^{R-1} |g_j|^{\frac{1}{2}} \ket j \otimes \Pi_j = \sqrt{2} \begin{pmatrix}
    |g_1|^{\frac{1}{2}} \Pi_1 \\ \vdots \\ |g_R|^{\frac{1}{2}} \Pi_R
\end{pmatrix} \;.
\end{equation}
This is a rectangular matrix that acts as a square root of $H + \lambda_{\lcu} \one$ since
\begin{equation}
    H_{\sa}^\dag H_{\sa}^{\!} = H + \lambda_{\lcu} \one.
\end{equation}

Consider now a simulation problem where the corresponding  quantum states are supported in the low-energy subspace. We can introduce a parameter $\Delta_{\lcu}$ that quantifies the low-energy assumption. In particular, for energy estimation
we will assume our states of interest $\ket \psi$
to satisfy $E=\bra \psi H \ket \psi \le -\lambda_{\lcu} + \Delta_{\lcu}$.
For phase estimation
we will assume $\langle\xi|\psi\rangle = 0$ for every eigenstate $\ket \xi$ of $H$ of eigenvalue $\bra \xi H \ket \xi > -\lambda_{\lcu} + \Delta_{\lcu}$. For these problems, by using the block-encoding of $H_{\sa}/\sqrt{\lambda_{\rm LCU}}$ instead of that of $H/\lambda_{\rm LCU}$, SA allows us to achieve an improvement in the query complexity. See the termwise SA results in~\Cref{tab:results}, proven in Thms.~\ref{lem:fast_estimation_single_term_no_prior} and \ref{lem:fast_ground_sate_estimation_no_prior}. The net effect is an improved query complexity linear in $\sqrt{\Delta_{\lcu} \lambda_{\lcu}}$ rather than $\lambda_{\lcu}$. (We can think of $\sqrt{\Delta_{\lcu} \lambda_{\lcu}}$ as the new effective $\ell_1$-norm when considering the low-energy subspace.) The improvement occurs when $\Delta_{\lcu} \ll \lambda_{\lcu}$, which is possible in specific instances where the Hamiltonian $H$ is close to `frustration-free'~\cite{somma2013spectral}.

For this example, an improvement in query complexity ultimately gives an improvement in gate complexity, because the corresponding block-encodings can be implemented with similar gate costs (\Cref{thm:sa}). This would readily achieve the goal of SOSSA, however, for general and frustrated $H$, the termwise preprocessing outlined  might result in a $\Delta_{\lcu}$ that is comparable to $\lambda_{\lcu}$. This severely limits the general applicability of termwise SA, and a different kind of preprocessing is desirable to have a significant reduction in gate costs.

Our strategy is then to use a different representation of the shifted Hamiltonian as an SOS of more general operators $B_j$:
\begin{align}\label{eq:ham_sos_equality}
    H + \beta \one = \sum_{j=0}^{R-1} B_j^\dag B^{\!}_j \;.
\end{align}
Using this representation, which implies $H+\beta \succeq 0$,  gives also a path to
applying SA to general Hamiltonians. Indeed, the previous termwise SA is an example of SOS where $\beta = \lambda_{\lcu}$ and $B_j = |g_j|^{\frac{1}{2}} \Pi_j$. However, we can now consider the $B_j$'s to be linear combinations of Pauli strings $\sigma_j$ or more general operators. It is important, however, to constrain the $B_j$'s so that they can be efficiently block-encoded; see \Cref{mainsec:sos}. 

SOSSA then reduces the overall gate complexity by first reducing the query complexity via SA as much as possible. We do this by finding `good' SOS representations that have the following properties. Let 
\begin{equation}\label{maineq:sossa_sqrt}
    H_{\sossa} := \sum_{j=0}^{R-1} \ket j \otimes B_j  = \begin{pmatrix}
    B_1 \\ \vdots \\ B_R
\end{pmatrix} \;
\end{equation}
be such that $H_{\sossa}^\dag H^{\!}_{\sossa} = H + \beta \one$. Then, we wish to block-encode of $H_{\sossa} /\sqrt{\lambda_{\rm SOS}}$ efficiently, for some normalization factor $\lambda_{\rm SOS} \ge \|H+\beta \one\|$, which is often very different from $\lambda_{\rm LCU}$. Additionally, we wish for the lower bound $-\beta$ to be as close to the ground state energy of $H$ as possible.
For energy estimation, we will assume $\bra \psi H \ket \psi \le -\beta + \Delta_{\sos}$ and for
 phase estimation the low-energy state is supported on the subspace of energies in $[-\beta,-\beta+\Delta_{\sos}]$. 
These parameters give $\sqrt{\Delta_{\sos}\lambda_{\sos}}$, which determines the query complexity when using SA with the Hamiltonian $H + \beta \one$; see the SOSSA results in \Cref{tab:results}. The goal of the good SOS representation is to satisfy $\Delta_{\sos}\lambda_{\sos} \ll \Delta_{\rm LCU}\lambda_{\rm LCU}$ and hence improve upon the query complexity of termwise SA.

After obtaining improved query complexities from the SOS representation, we wish to determine the resulting gate complexities, and we need to account for the gate cost of implementing the block-encoding of $H_{\sossa} /\sqrt{\lambda_{\rm SOS}}$. Unfortunately, this gate cost might be higher than that of implementing the block-encoding of $H_{\rm SA} /\sqrt{\lambda_{\rm SA}}$, since the operators $B_j$ are more general and possibly more difficult to simulate. Nevertheless, we find that these two competing effects ---the improvement in query complexity versus the increase in the gate complexity of each query--- can still give a significant improvement overall, as demonstrated by the examples we studied.

Hence, to obtain improved gate complexities, SOSSA uses SA in combination with good SOS representations to reduce the query complexity as much as possible, even when the gate cost per query can increase.

\section{SOS optimization for SOSSA} \label{mainsec:sos}

To achieve greater SA, we would like $\Delta_{\sos}$ to be as small as possible, so we would like $-\beta$ to be a tight lower bound on the ground state energy. In this section, we show that optimizing the lower bound $-\beta$ can be achieved efficiently in classical preprocessing using semidefinite programming. Optimization of $-\beta$ alone does not directly optimize the total gate cost but can instead improve the query complexity. However, it can be a useful starting point as demonstrated in Ref.~\cite{low2025fast}. We will see in \Cref{mainsec:syk} that this optimization is sufficient to achieve asymptotic speedups in the total gate cost for the SYK model.

We can start by selecting an ansatz for the $B_j$'s as low-degree polynomials of some natural operator set $\cal B$, the `SOS algebra', of size $L$. For example, given the Hamiltonian on $N$ qubits, we can choose a constant $k \ge 1$ and let ${\cal B}$ be the set of $L=\cO(N^k)$ monomials, each being a product of $k$ Pauli operators.
Then, $B_j(\vec{b}_{j}) \in {\cal B}$ 
denotes a polynomial of Pauli strings of weight or degree $k$, where $\vec{b}_{j} \in \mathbb C^L$ are the coefficients of the polynomial.
 For a fermionic Hamiltonian, we could let ${\cal B}$ be the set of degree-$k$ products of fermionic creation and annihilation operators instead, or equivalently in the Majorana operators. Often, we may not want to include all degree-$k$ terms in ${\cal B}$, but only a subset like nearest-neighbor products when considering a system in a lattice.

Having selected the set $\cal B$, we can optimize the lower bound on the ground state energy while constraining the coefficients $\vec b_j$ so that Eq.~\eqref{eq:ham_sos_equality} is satisfied. This optimization can be cast as the following program:
\begin{align}
\min_{\vec{b}_{j} \in \mathbb{C}^{L}} \beta  \quad 
\mathrm{s.t.} \quad H + \beta \one = \sum_{j=0}^{R-1} B_j(\vec{b}_{j})^\dag B_j(\vec{b}_{j}), \nonumber
\end{align}
which can be directly translated into an SDP whose dimension is polynomial in the operator basis size $L$ and whose linear constraints are derived from the algebra of the polynomials; see \Cref{sec:sos}.
Mathematical problems of this form can be solved in polynomial time with respect to the number of variables and constraints~\cite{boyd2004convex, burer2003nonlinear,povh2006boundary}.
The resulting number of terms $R$ is related to the rank of the primal variable of the SDP and is also polynomial in $L$.

The dual problem of this SDP is the pseudomoment problem~\cite{pironio2010convergent, erdahl1978representability} where it is well understood that increasing the complexity of $B_{j}$ by, for example, allowing a larger set of monomials in $\cal B$, and thus the size of the SDP, $-\beta$ can be made arbitrarily close to the true ground-state energy, implying a smaller energy gap $\Delta_{\sos}$. 

Nevertheless, for the quantum algorithm, 
we note that the savings in complexity from a smaller $\Delta_{\sos}$
might be obscured by the higher complexity of simulating the terms $B_j$. 
Additionally, the cost of preprocessing from solving the SDP also increases with $L$, and this classical complexity 
can also be a factor of consideration in the overall optimization for a specific application.

\section{Application to SYK} \label{mainsec:syk}

Using the SOSSA framework, we observe an asymptotic speedup in phase estimation of the ground-state energy  and energy estimation over traditional approaches that use the block-encoding of the full Hamiltonian, which is constructed from an LCU presentation in terms of Pauli strings. The speedup reduces the gate complexity by a factor $\sqrt{N}$, where $N$ is the number of modes. Let $\{\gamma_1,\dots,\gamma_N\}$ be Majorana operators satisfying anticommutation relations:
\begin{equation}
    \gamma_a \gamma_b + \gamma_b \gamma_a = 2 \delta_{ab} \one \;.
\end{equation}
The SYK model is described by a fermionic Hamiltonian containing all degree-$4$ Majorana terms whose coefficients are random Gaussians; that is,
\begin{align}
\label{eq:HSYK}
H_{\syk} &= \frac{1}{\sqrt{N \choose 4}} \sum_{a,b,c,d} g_{abcd} \gamma_a \gamma_b \gamma_c \gamma_d \;, \\
g_{abcd} &\sim \mathcal{N}(0,1) \ \text{i.i.d.} \;.
\end{align}
 When applying SOSSA to the SYK model we will use degree-$2$ or quadratic Majorana operators to construct the SOS representation. 
 In \Cref{lem:syk_sos} we demonstrate that this SOS achieves a lower bound of $\beta = \cO(N)$ with high probability using random matrix theory to obtain a lower bound on the SOS dual problem.
Furthermore, the spectrum of $H_{\syk}$ is known to be contained in the interval $[-c\sqrt N, c\sqrt N]$ for some constant $c>0$, and therefore the energy of any state is such that $\Delta_{\sos} = \mathcal{O}(N)$. 

To compare the performance of SOSSA with standard approaches or termwise SA, we first note $H_{\syk}$ has number of terms scales like $\sim N^4$. Hence, the asymptotic gate complexities to construct the necessary block-encodings for the LCU, termwise SA, and SOSSA approaches is $\Theta(N^4)$ in all cases, which   are  optimal since the number of independent terms in $H_{\syk}$ is $\sim N^4$. We can thus focus on comparing the query complexities only, which are determined by $\lambda_{\lcu}$, $\sqrt{\Delta_{\lcu} \lambda_{\lcu}}$ and $\sqrt{\Delta_{\sos} \lambda_{\sos}}$. We have
\begin{align}
    \lambda_{\lcu} = \frac 1 {\sqrt{\binom{N}{4}}}\sum_{a,b,c,d} |g_{abcd}|
\end{align}
for Eq.~\eqref{eq:HSYK}, and hence
$\lambda_{\lcu} \sim N^2$ with high probability. Termwise SA does not provide any asymptotic improvement, since the scaling of the ground state energy $E$ is dominated by $\lambda_{\lcu}$ and so $\Delta_{\lcu}$ will asymptotically scale linearly with $\lambda_{\lcu}$. Nevertheless, in \Cref{sec:syk} we show that we can implement a block encoding of $H_{\sossa}$ with squared normalization factor $\lambda_{\sos} = \mathcal{O}(N^2)$. Our approach involves the double factorization technique described in \Cref{sec:double_factorization} \cite{von2021quantum}. Since the energy gap is $\Delta_{\sos}  = \mathcal{O}(N)$, we have $\sqrt{\Delta_{\sos} \lambda_{\sos}} = \cO(N^{\frac{3}{2}})$ for SOSSA, giving an asymptotic speedup by a factor of $\sim \sqrt{N}$ over LCU and termwise SA.

\subsection{Higher-degree SOS algebras}
In Ref.~\cite{hastings2022optimizing} it was also demonstrated that the degree-$2$ Majorana SOS is limited to a lower bound scaling like $\beta = \mathcal{O}(N)$ with high probability.
Reference~\cite{hastings2022optimizing}~introduces a degree-3 SOS, where the $B_j$'s in the SOS representation are cubic in the $\gamma_a$'s, and which is able to achieve a tighter lower bound to the ground state energy where $\beta = \Omega(\sqrt{N})$. Their SOS algebra $\cal B$ does not contain all degree-3 Majorana monomials, but rather uses only a particular fragment of the degree-3 terms of size $L=\mathcal{O}(N)$.

We can consider applying SOSSA with the degree-3 SOS from Ref.~\cite{hastings2022optimizing}. The energy gap will improve to $\Delta_{\sos} =\cO( \sqrt{N})$. However, the block-encoding normalization factor scales now as $\lambda_{\sos} \sim N^{\frac{7}{2}}$; see the bound on $\lambda_{\sos}$ in \Cref{sec:sossa}. We obtain $\sqrt{\Delta_{\sos} \lambda_{\sos}} = \cO(N^2)$, recovering the query complexity scaling of LCU and termwise SA. Not considering gate cost of block-encoding, which is likely higher than the termwise LCU method, we recover the termwise SA query cost suggesting higher gate complexities.

This further illustrates that optimization of $-\beta$ alone does not guarantee that we will ultimately obtain improved gate complexities.  The example analyzed here, which regarded a more involved optimization,
readily provided a factor $\sqrt{\Delta_{\sos} \lambda_{\sos}}$ that is asymptotically comparable to 
$\sqrt{\Delta_{\rm LCU} \lambda_{\rm LCU}}$.  
This is a consequence of considering more complex generators for the SOS representations.
In addition to the complexity scalings due to these factors, we often expect 
the block-encodings of the 
SOS representations to be less efficient to implement. 
These effects are important when comparing the overall gate complexity of SOSSA to that of termwise SA.

\section{Conclusion}

We described a framework for fast quantum simulation of low-energy states based on SA and SOS representations. To this end, we developed quantum algorithms for energy and phase estimation that improve over prior art and showed that SOSSA gives asymptotic improvements in gate costs with respect to traditional methods when applied to the SYK Hamiltonian. With the addition of Ref.~\cite{low2025fast}, where SOSSA already provided the state of the art for ground state energy estimation in chemical systems, we expect this framework to be generally useful and applicable to other quantum systems.

\section{Acknowledgments}
We thank Matthew Hastings for the proof of \Cref{lem:syk_sos} and Bill Huggins for helpful discussions.

\bibliographystyle{apsrev4-2}
\bibliography{refs.bib}
\onecolumngrid
\pagebreak

\appendix

\section{Quantum simulation using block-encodings and linear combination of unitaries}

In this section we discuss the LCU results in \Cref{tab:results} for energy estimation, phase estimation, and time evolution. To this end, we introduce the notion of 
a block-encoding of an operator $O \in \mathbb C^{M \times M}$. This block-encoding is a unitary $\Be[O/\lambda]  \in \mathbb C^{M' \times M'}$ acting on an enlarged space ($M' \ge M$) that has $O / \lambda$ in the first block, for some normalization constant $\lambda \ge \|O\|$,
\begin{equation}
    \Be[O/\lambda]:=  \begin{pmatrix}
        O/\lambda & \cdot \\
        \cdot & \cdot
    \end{pmatrix}\;.
\end{equation}
(We use $\|.\|$ for the spectral norm.)
When acting on systems of qubits, 
the first block is specified by the all-zero state of some ancilla register. More explicitly, 
we can write $\bra 0_{\rm a}   \Be[O/\lambda] \ket 0_{\rm a}= O/\lambda$, where `a' denotes an ancillary or `clock' register. The definition can be naturally extended to rectangular operators $O \in \mathbb C^{M \times N}$. In this case we need to invoke one projector for the $M$-dimensional space and one projector for the $N$-dimensional space. Again, associating these projectors with qubit states, we might write 
$\bra 0_{\rm a}   \Be[O/\lambda] \ket 0_{\rm a'}= O/\lambda$, where $\rm a$ and $\rm a'$ are distinct for $M \ne N$.

Block-encodings provide a natural access model for several quantum simulation algorithms. In this case we often assume access to $\Be[H/\lambda]$, the block-encoding of a Hamiltonian $H$ that models certain quantum system. The query complexity of such algorithms is determined by the number of uses to this block-encoding and, to make this complexity optimal, we would like $\lambda$ to be as small as possible (e.g., as close to $\|H\|$ as possible). This is because various quantum simulation tasks that assume access to $\Be[H/\lambda]$ have (optimal) cost depending linearly on $\lambda$, as shown by the following known results.

\begin{theorem}[Energy estimation from block-encoding \cite{knill2007optimal}]
\label{thm:EEBE}
    Let $H \in \mathbb C^{M \times M}$ be a Hermitian operator, $\lambda \ge \|H\|$ be a normalization factor, and $U$ and unitary preparing state $\ket \psi = U \ket 0$. With $\mathcal{O}(\lambda / \epsilon)$ calls to $U$, $\Be[H / \lambda]$, and their inverses, we can measure the energy  (expectation) $\bra \psi H \ket \psi$ to additive precision $\epsilon>0$.
\end{theorem}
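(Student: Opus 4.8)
The plan is to reduce the estimation of $E = \bra\psi H \ket\psi$ to amplitude estimation of a single real-valued amplitude, which achieves the Heisenberg-limited $\cO(\lambda/\epsilon)$ scaling rather than the $\cO(\lambda^2/\epsilon^2)$ of naive sampling. The starting observation is that, because $H$ is Hermitian, $E = \lambda a$ where
\begin{equation}
a := \bra\psi (H/\lambda) \ket\psi = \bra{0_{\rm a}}\bra\psi \, \Be[H/\lambda] \, \ket{0_{\rm a}}\ket\psi \in [-1,1]
\end{equation}
is real; here I use the defining property $\bra{0_{\rm a}} \Be[H/\lambda] \ket{0_{\rm a}} = H/\lambda$. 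Estimating $a$ to additive precision $\epsilon/\lambda$ then yields $E$ to precision $\epsilon$.

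First, I would build a Hadamard-test unitary $A$ that linearizes $a$ into a measurement probability. Introduce a control qubit and prepare $\ket\psi = U\ket 0$ on the system register. Apply a Hadamard to the control, then a controlled-$\Be[H/\lambda]$ (controlled on the control being $\ket 1$), then a second Hadamard to the control. Writing $\ket{\Phi_0} = \ket{0_{\rm a}}\ket\psi$ and $\ket{\Phi_1} = \Be[H/\lambda]\ket{0_{\rm a}}\ket\psi$, a short computation shows that measuring the control in the computational basis yields outcome $0$ with probability $p = \tfrac12 \big(1 + \re\langle\Phi_0|\Phi_1\rangle\big) = \tfrac{1+a}{2}$, since the cross term $\re\langle\Phi_0|\Phi_1\rangle$ projects $\ket{\Phi_1}$ onto its $\ket{0_{\rm a}}$-component and returns exactly $a$. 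One call to $A$ costs one query to $U$ and one controlled query to $\Be[H/\lambda]$.

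Second, I would apply quantum amplitude estimation to $A$, with the ``good'' subspace flagged by the control being $\ket 0$. Constructing the Grover-type iterate $Q = -A\,S_0\,A^\dagger\,S_{\rm good}$ from $A$, $A^\dagger$, the reflection $S_0$ about the all-zero input state, and the reflection $S_{\rm good}$ about the flagged subspace, and running phase estimation on $Q$ with $K = \cO(\lambda/\epsilon)$ iterates, returns an estimate $\tilde p$ obeying $|\tilde p - p| = \cO(1/K) = \cO(\epsilon/\lambda)$ with constant success probability. Setting $\tilde a = 2\tilde p - 1$ gives $|\tilde a - a| = \cO(\epsilon/\lambda)$, hence $\lambda \tilde a$ estimates $E$ to precision $\epsilon$. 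The constant success probability is boosted to $1-\delta$ by taking the median of $\cO(\log(1/\delta))$ independent runs, and the total query count to $U$, $\Be[H/\lambda]$, and their inverses remains $\cO(\lambda/\epsilon)$.

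The main obstacle, and the step deserving the most care, is verifying that the Hadamard test produces a clean linear signal $p = (1+a)/2$ in the presence of the block-encoding ancilla: one must check that measuring only the control qubit (and marginalizing over the ancilla and system registers) isolates the real off-diagonal overlap $\re\langle\Phi_0|\Phi_1\rangle$ rather than a quadratic quantity such as $\|(\one + H/\lambda)\ket\psi\|^2$, which relies on $\ket{\Phi_0}$ being supported entirely on $\ket{0_{\rm a}}$ and on $a$ being real. The remaining ingredients---the $\cO(1/K)$ error guarantee of amplitude estimation, the factor-of-two error propagation from $p$ to $a$, and the median amplification---are standard and I would treat them as routine, while matching lower bounds establishing optimality of $\cO(\lambda/\epsilon)$ follow from the Heisenberg-limit arguments of \cite{knill2007optimal}.
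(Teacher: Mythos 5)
Your proposal is correct and follows essentially the same route the paper takes (and attributes to \cite{knill2007optimal}): reduce the problem to amplitude estimation of the block-encoded expectation $\bra{\psi}(H/\lambda)\ket{\psi}$ to precision $\epsilon'=\epsilon/\lambda$, which costs $\mathcal{O}(1/\epsilon')=\mathcal{O}(\lambda/\epsilon)$ queries; your Hadamard-test unitary followed by the Grover iterate is just an explicit instantiation of the walk-operator/qubitization construction the paper uses in its more detailed variant (\cref{lem:fast_estimation_single_term}), where the eigenphase $\arccos(E/\lambda)$ plays the role of your estimated amplitude. The only cosmetic caveat is that you invoke controlled-$\Be[H/\lambda]$ while the statement grants $\Be[H/\lambda]$ and its inverse, a standard and harmless assumption that the paper itself makes elsewhere.
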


\begin{theorem}[Phase estimation from block-encoding \cite{berry2018improved}]
\label{thm:PEBE}
    Let $H \in \mathbb C^{M \times M}$ be a Hermitian operator and $\lambda \ge \|H\|$ be a normalization factor. With $\mathcal{O}(\lambda / \epsilon)$ calls to $\Be[H / \lambda]$ and its inverse, we can perform phase (eigenvalue) estimation on $H$ to additive precision $\epsilon>0$.
\end{theorem}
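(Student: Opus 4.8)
The plan is to reduce phase estimation on $H$ to phase estimation on a \emph{unitary} quantum walk operator built from the block-encoding, which is the standard qubitization route~\cite{low2019hamiltonian}. The point is that phase estimation is naturally defined for unitaries, so the work is in manufacturing a unitary whose eigenphases encode the eigenvalues of $H$, and then in propagating the estimation error back through that encoding.

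First I would construct the qubitized walk operator $W$ from $\Be[H/\lambda]$, obtained by composing the block-encoding (suitably turned into an involution) with the reflection $2\ket{0_{\rm a}}\bra{0_{\rm a}} - \one$ about the ancilla subspace. Each application of $W$ costs $\cO(1)$ calls to $\Be[H/\lambda]$ and its inverse. The structural fact supplied by qubitization is that $W$ decomposes into a direct sum of $\cO(1)$-dimensional invariant subspaces, one associated with each eigenvalue $E_k$ of $H$, on which $W$ acts as a rotation with eigenphases $\pm\theta_k$ where $\theta_k = \arccos(E_k/\lambda)$. Thus the spectrum of $H$ is encoded (up to the sign of the phase) in the eigenphases of $W$. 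Second, I would run textbook phase estimation on $W$: prepare the input, apply the controlled powers $W^{2^0}, W^{2^1}, \dots, W^{2^m}$ up to $2^m = \cO(1/\delta)$, and invert the quantum Fourier transform to read off an estimate $\tilde\theta$ of some $\theta_k$ to additive precision $\delta$. This uses $\cO(1/\delta)$ applications of $W$, hence $\cO(1/\delta)$ queries to the block-encoding.

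Third, I would recover the energy via $\tilde E = \lambda\cos\tilde\theta$. Propagating the error through this inverse map, $|dE_k| = \lambda|\sin\theta_k|\,|d\theta_k| \le \lambda\,|d\theta_k|$, so an estimate of $\theta_k$ to precision $\delta$ yields an estimate of $E_k$ to precision at most $\lambda\delta$. Choosing $\delta = \cO(\epsilon/\lambda)$ therefore achieves additive precision $\epsilon$ in the energy at total query cost $\cO(\lambda/\epsilon)$, as claimed.

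The main obstacle is controlling the error propagation through the nonlinear map $E_k = \lambda\cos\theta_k$ \emph{uniformly} over the spectrum. The Jacobian $\lambda|\sin\theta_k|$ is largest (equal to $\lambda$) in the bulk where $E_k \approx 0$, which is exactly the worst case that forces the $\lambda/\epsilon$ scaling; near the spectral edges $E_k \approx \pm\lambda$ the derivative vanishes and the requirement is \emph{less} stringent, so the crude bound $\lambda|\sin\theta_k| \le \lambda$ is enough and the linearization does not need to be tracked more carefully. A secondary technical point is verifying that the given block-encoding satisfies the standard-form hypotheses of qubitization (Hermiticity of $H$ together with the existence of the requisite reflections); if it does not, one symmetrizes by adjoining a single ancilla qubit and working with a Hermitian dilation, which alters the query count only by an $\cO(1)$ factor.
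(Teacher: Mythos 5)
Your proposal is correct and follows essentially the same route the paper indicates: run standard quantum phase estimation on the qubitization walk operator $W$ (the paper's \cref{lem:qubitization}), whose eigenphases $\pm\arccos(E_k/\lambda)$ encode the spectrum of $H$, and propagate the phase error back through the $\arccos$ map using the bound $\lambda|\sin\theta_k|\le\lambda$, choosing phase precision $\cO(\epsilon/\lambda)$. Your handling of the non-self-inverse case and the worst-case Jacobian in the bulk of the spectrum matches the paper's remarks, so there is nothing to add.
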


\begin{theorem}[Time evolution from block-encoding \cite{low2019hamiltonian,low2017optimal}] \label{thm:time_evolution}
    Let $H \in \mathbb C^{M \times M}$ be a Hermitian operator and $\lambda \ge \|H\|$ be a normalization factor. With $\mathcal{O}(\lambda t + \log{\frac{1}{\epsilon}})$ calls to $\Be[H / \lambda]$ and its inverse, we can implement the time evolution operator $e^{-iHt}$ to additive precision $\epsilon>0$.
\end{theorem}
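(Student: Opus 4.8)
The plan is to reduce time evolution to the task of applying a polynomial approximation of $e^{-i\lambda t x}$ to the block-encoded operator $H/\lambda$, and then invoke quantum signal processing (QSP) and the quantum singular value transform (QSVT)~\cite{low2019hamiltonian, gilyen2019quantum} to realize that polynomial using $\mathcal{O}(d)$ queries to $\Be[H/\lambda]$, where $d$ is the polynomial degree. Since $\lambda \ge \|H\|$, the eigenvalues of $H/\lambda$ lie in $x \in [-1,1]$, so with $\tau := \lambda t$ the target is the function $x \mapsto e^{-i\tau x}$, and it suffices to control the degree $d$ needed for a uniform $\epsilon$-approximation on $[-1,1]$.

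First I would establish the polynomial approximation via the Jacobi--Anger expansion,
\begin{equation}
    e^{-i\tau x} = J_0(\tau) + 2\sum_{k=1}^{\infty} (-i)^k J_k(\tau)\, T_k(x) \;,
\end{equation}
where $J_k$ are Bessel functions of the first kind and $T_k$ are Chebyshev polynomials. Because $|T_k(x)| \le 1$ on $[-1,1]$ and $J_k(\tau)$ decays superexponentially once $k \gtrsim \tau$, truncating the series at degree $d = \mathcal{O}(\tau + \log(1/\epsilon)) = \mathcal{O}(\lambda t + \log(1/\epsilon))$ yields a polynomial $p(x)$ with $\sup_{x\in[-1,1]} |p(x) - e^{-i\tau x}| \le \epsilon/2$. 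The remaining $\epsilon/2$ of the error budget absorbs the bounded-norm rescaling that QSVT requires, so that $\|p\|_\infty \le 1$.

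Next I would split $p$ into its real and imaginary parts, $\re p$ and $\im p$, which are even and odd real polynomials respectively (matching $\cos(\tau x)$ and $\sin(\tau x)$). Each has definite parity and bounded sup-norm, so QSP/QSVT applies each to $H/\lambda$ using $\mathcal{O}(d)$ queries to $\Be[H/\lambda]$ and its inverse together with single-qubit rotations determined by precomputed phase angles. A one-ancilla linear combination of the two resulting block-encodings then produces a block-encoding of $p(H/\lambda) \approx e^{-iHt}$ to additive precision $\epsilon$, at only constant overhead. Summing the queries gives the claimed $\mathcal{O}(\lambda t + \log\frac{1}{\epsilon})$.

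The step I expect to be the main obstacle is the parity and complex-combination bookkeeping rather than the analytic tail estimate: QSP natively produces real polynomials of fixed parity and bounded norm, so one must verify that the truncated Jacobi--Anger polynomial can be decomposed into such pieces and recombined (via a correctly normalized LCU) without violating the $\|\cdot\|_\infty \le 1$ constraint or inflating the degree, and that the classical computation of the phase angles remains efficient. These points are standard but delicate, whereas the Bessel-tail bound underlying the degree count is comparatively routine.
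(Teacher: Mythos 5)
Your proposal is correct and follows essentially the same route as the paper, which only sketches this result by noting that $e^{-iHt}$ is approximated by a finite series implemented via quantum signal processing whose cost is the largest degree, $\mathcal{O}(\lambda t + \log\frac{1}{\epsilon})$; your Jacobi--Anger truncation, parity split, and LCU recombination are exactly the standard instantiation from the cited references. The only point you gloss over --- that the LCU of the real and imaginary parts subnormalizes the block-encoding by a constant factor, which is repaired by (robust) oblivious amplitude amplification at $\mathcal{O}(1)$ overhead --- is consistent with the level of detail the paper itself provides.
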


We briefly comment on these results. \Cref{thm:EEBE} results for performing amplitude estimation, a problem that also reduces to quantum phase estimation as shown in Ref.~\cite{knill2007optimal}. The Hamiltonian $H$ is not unitary but the amplitude estimation is done with the block-encoding, to estimate an expectation of $\Be[H/\lambda]$. For additive precision $\epsilon'$ this can be done with $\cO(1/\epsilon')$ uses of $\Be[H/\lambda]$ and the inverse. The result follows from choosing $\epsilon'=\epsilon/\lambda$.
\Cref{thm:PEBE} results from the standard use of the quantum phase estimation algorithm~\cite{kitaev1995quantum} but, instead of running phase estimation on the unitary $e^{-i H/\lambda}$, it is ran using the walk operator appearing in qubitization~\cite{low2019hamiltonian}. We describe this operator below in  ~\cref{lem:qubitization}. The benefit of doing this is that the approach does not necessitate of another routine that approximates $e^{-i H/\lambda}$ and the encoding can be done exactly and with less overhead.  \Cref{thm:time_evolution} follows from approximating the action of the evolution $e^{-iHt}$ with a finite series that uses $\Be[H/\lambda]$ and the inverse. The series can be implemented using quantum signal processing and the cost $\cO(\lambda t+ \log \frac 1 \epsilon)$ is essentially the largest degree appearing in the series.
\begin{lemma}[\label{lem:qubitization}Qubitization~\cite{low2019hamiltonian}]
Let $H \in \mathbb C^{M \times M}$ be a Hermitian operator and $\lambda \ge \|H\|$ be a normalization factor. 
Let the quantum walk operator $W:= \textsc{Ref}_{\rm a}\cdot\textsc{Be}[H/\lambda]$, where the reflection $\textsc{Ref}_{\rm a}=2\ket{0}\bra{0}_{\rm a}-\mathbbm 1$ and assume the block-encoding $\textsc{Be}[H/\lambda]$ is self-inverse.
If $\textsc{Be}[H/\lambda]$ is not self-inverse, one can always construct a self-inverse version using one query to controlled-$\textsc{Be}[H/\lambda]$ and its inverse, and two Hadamard gates.
Then for any eigenstate $\ket{\psi_j}$ of $H$ with eigenvalue $E_j$, $W$ has eigenstates $\ket{\psi_{j\pm}}$ with eigenvalues $e^{\pm i \arccos(E_j/\lambda)}$, and $\ket{\psi_j}\ket{0}_{\rm a}=\frac{1}{\sqrt{2}}(\ket{\psi_{j+}}+\ket{\psi_{j-}})$.
\end{lemma}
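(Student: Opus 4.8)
The plan is to reduce the action of $W$ to a direct sum of two-dimensional rotations, one for each eigenvalue $E_j$ of $H$, and read off the eigenvalues and eigenvectors from each block. Write $s_j := E_j/\lambda \in [-1,1]$ and let $U := \Be[H/\lambda]$, taken self-inverse. Introduce the good-subspace projector $\Pi := \one \otimes \ketbra{0}_{\rm a}$, so that the block-encoding property $\bra{0}_{\rm a} U \ket{0}_{\rm a} = H/\lambda$ gives $\Pi U \ket{\psi_j}\ket{0}_{\rm a} = s_j \ket{\psi_j}\ket{0}_{\rm a}$, and note that $\textsc{Ref}_{\rm a} = 2\Pi - \one$.

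First I would fix $j$ with $|s_j| < 1$ and expand the image of $\ket{\psi_j}\ket{0}_{\rm a}$ under $U$. Since $U$ is unitary and its good-subspace component is $s_j \ket{\psi_j}\ket{0}_{\rm a}$, there is a unit vector $\ket{\psi_j^\perp}$ orthogonal to $\ket{\psi_j}\ket{0}_{\rm a}$ and annihilated by $\Pi$ with
\begin{equation}
U \ket{\psi_j}\ket{0}_{\rm a} = s_j \ket{\psi_j}\ket{0}_{\rm a} + \sqrt{1 - s_j^2}\, \ket{\psi_j^\perp}.
\end{equation}
Applying $U$ again and using $U^2 = \one$ pins down $U\ket{\psi_j^\perp} = \sqrt{1-s_j^2}\,\ket{\psi_j}\ket{0}_{\rm a} - s_j \ket{\psi_j^\perp}$, so on the two-dimensional space $\mathcal{S}_j := \spn\{\ket{\psi_j}\ket{0}_{\rm a}, \ket{\psi_j^\perp}\}$ the operator $U$ is the reflection $\left(\begin{smallmatrix} s_j & \sqrt{1-s_j^2} \\ \sqrt{1-s_j^2} & -s_j \end{smallmatrix}\right)$. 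Because $\textsc{Ref}_{\rm a}$ fixes $\ket{\psi_j}\ket{0}_{\rm a}$ and negates $\ket{\psi_j^\perp}$, it acts as $\mathrm{diag}(1,-1)$ on $\mathcal{S}_j$; hence $\mathcal{S}_j$ is $W$-invariant and
\begin{equation}
W|_{\mathcal{S}_j} = \begin{pmatrix} s_j & \sqrt{1-s_j^2} \\ -\sqrt{1-s_j^2} & s_j \end{pmatrix},
\end{equation}
a rotation by the angle $\theta_j := \arccos(s_j)$. Diagonalizing this $2\times 2$ rotation gives eigenvalues $e^{\pm i\theta_j} = e^{\pm i \arccos(E_j/\lambda)}$ with eigenvectors $\ket{\psi_{j\pm}} = \tfrac{1}{\sqrt 2}\big(\ket{\psi_j}\ket{0}_{\rm a} \pm i \ket{\psi_j^\perp}\big)$, which immediately yields $\ket{\psi_j}\ket{0}_{\rm a} = \tfrac1{\sqrt2}(\ket{\psi_{j+}} + \ket{\psi_{j-}})$. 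I would treat the degenerate case $|s_j|=1$ separately: there the orthogonal component vanishes, $\ket{\psi_j}\ket{0}_{\rm a}$ is already a $\pm1$ eigenvector of both $U$ and $\textsc{Ref}_{\rm a}$, and $W\ket{\psi_j}\ket{0}_{\rm a} = \pm\ket{\psi_j}\ket{0}_{\rm a}$, consistent with the formula once one sets $\ket{\psi_{j+}} = \ket{\psi_{j-}}$.

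Finally, for the opening claim I would exhibit the standard symmetrization gadget: adjoin a qubit $b$ and set $\tilde U := \ket{1}\bra{0}_b\otimes U + \ket{0}\bra{1}_b\otimes U^\dagger$, which flips $b$ while applying $U$ or $U^\dagger$ conditioned on $b$; a direct computation gives $\tilde U^2 = \one$. Conjugating $\tilde U$ by a Hadamard on $b$ and using the reference $\ket{0}_b\ket{0}_{\rm a}$ selects $\tfrac12\big((H/\lambda) + (H/\lambda)^\dagger\big) = H/\lambda$ by Hermiticity of $H$, so this conjugated operator is a self-inverse block-encoding of $H/\lambda$ built from one controlled-$U$, one controlled-$U^\dagger$, and two Hadamards, as claimed. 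The main obstacle I anticipate is not the $2\times2$ algebra, which is routine, but the bookkeeping around the definedness of $\ket{\psi_j^\perp}$ (the $|s_j|=1$ case and $H$-eigenspaces of dimension larger than one, where the $\mathcal{S}_j$ must be chosen consistently to tile the whole space), together with verifying that the symmetrized gadget genuinely preserves both the normalization $\lambda$ and the self-inverse property, so that the spectral mapping $E_j \mapsto e^{\pm i\arccos(E_j/\lambda)}$ applies verbatim.
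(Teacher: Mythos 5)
Your proof is correct, but note the paper itself gives no proof of this lemma --- it is stated as a known result imported from Ref.~\cite{low2019hamiltonian}, and your argument is exactly the standard qubitization derivation from that reference: the $2\times 2$ invariant-subspace reduction of $W=\textsc{Ref}_{\rm a}\cdot\Be[H/\lambda]$ to a rotation by $\arccos(E_j/\lambda)$, plus the controlled-$U$/controlled-$U^\dagger$ symmetrization gadget for the self-inverse claim. The only wrinkle you correctly flag is the degenerate case $|E_j|=\lambda$, where the stated decomposition $\ket{\psi_j}\ket{0}_{\rm a}=\tfrac{1}{\sqrt2}(\ket{\psi_{j+}}+\ket{\psi_{j-}})$ cannot hold with two normalized coincident eigenvectors; this imprecision is in the lemma statement itself, not in your argument.
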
 

We also comment on the optimality of these approaches. When given access to the block-encoding, a general method for phase estimation is known to necessitate $\Omega(\lambda/\epsilon)$ uses of the block-encoding and the inverse. Reference~\cite{mande2023tight} proves a similar lower bound. The lower bound naturally extends to \Cref{thm:EEBE} or otherwise we would be able to perform faster quantum phase estimation via energy estimation, contradicting the previous result. For time evolution, the lower bound $\tilde \Omega(t \lambda + \log \frac 1 \epsilon)$ is given in Ref.~\cite{berry2014exponential}. Since these   results are tight, we present them using $\Theta(.)$ notation in \Cref{tab:results}. We note, however, that these refer to the query complexities. For specific instances where more structure is known and can be used, the upper bounds might be improved.

\subsection{Block-encodings from linear combination of unitaries} \label{sec:LCU}

We discussed quantum simulation when having access to $\Be[H/\lambda]$ but in applications the block-encoding must be constructed from some representation of the Hamiltonian. A standard approach is based on the linear combination of unitaries (LCU) method \cite{childs2012hamiltonian,berry2015simulating}. Suppose the Hamiltonian is presented as
\begin{equation} \label{eq:H_LCUa}
    H = \sum_{j=0}^{R-1} g_j \sigma_j
\end{equation}
where the $\sigma_j$'s are unitaries, for example Pauli strings, and $g_j \in \mathbb C$ are coefficients. When applying LCU, $\lambda$ is equal to the $\ell_1$-norm of the linear combination.
\begin{definition}[$\ell_1$-norm in  LCU] \label{def:lcu_lambda}
    Define $\lambda_{\lcu}$ to be the $\ell_1$-norm
    \begin{equation}
    \label{eq:lambdalcu}
        \lambda_{\lcu} = \sum_{j=0}^{R-1} |g_j|\;.
    \end{equation}
\end{definition}

\begin{lemma}[Compilation of LCU~\cite{Childs2017Speedup}] \label{thm:lcu}
    Let $H$ be as in \Cref{eq:H_LCUa} and $\lambda_{\rm LCU} \ge \|H\|$ be as in \Cref{eq:lambdalcu}.
    Then, it is possible to construct a block-encoding of $H$, $\Be[H/\lambda_{\rm LCU}]$, using $\cO(R C_\sigma)$ quantum gates, where $C_\sigma$ is the gate complexity of the $\sigma_j$'s.  The construction generalizes to arbitrary operators $\sigma_j$, which are not necessarily unitary, as long as  $\|\sigma\| \leq 1$ and quantum circuits for the $\Be[\sigma_j]$'s are given. 
\end{lemma}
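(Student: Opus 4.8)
The plan is to realize $\Be[H/\lambda_{\lcu}]$ through the standard \textsc{prepare}--\textsc{select}--\textsc{prepare} scheme. First I would adjoin an index register $\rm a$ on $\lceil \log_2 R\rceil$ qubits and, writing $g_j = |g_j|\, e^{i\phi_j}$, introduce two preparation unitaries: $\prepare_{\rm R}$ taking $\ket{0}_{\rm a} \mapsto \lambda_{\lcu}^{-1/2} \sum_j \sqrt{|g_j|}\, e^{i\phi_j}\ket{j}_{\rm a}$ and $\prepare_{\rm L}$ taking $\ket{0}_{\rm a}\mapsto \lambda_{\lcu}^{-1/2}\sum_j \sqrt{|g_j|}\,\ket{j}_{\rm a}$, where placing the phases only on the right preparation is the device that accommodates complex coefficients. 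With $\select := \sum_j \ketbra{j}_{\rm a}\otimes \sigma_j$, I would set $\Be[H/\lambda_{\lcu}] := \prepare_{\rm L}^\dag \cdot \select \cdot \prepare_{\rm R}$. Acting on $\ket{0}_{\rm a}\ket{\psi}$, the right preparation produces $\lambda_{\lcu}^{-1/2}\sum_j \sqrt{|g_j|}\,e^{i\phi_j}\ket{j}_{\rm a}\ket{\psi}$, then $\select$ attaches $\sigma_j$, and projecting with $\bra{0}_{\rm a}\prepare_{\rm L}^\dag = \lambda_{\lcu}^{-1/2}\sum_k\sqrt{|g_k|}\,\bra{k}_{\rm a}$ collapses the double sum via $\delta_{jk}$ to leave $\lambda_{\lcu}^{-1}\sum_j g_j\sigma_j\ket{\psi} = H\ket{\psi}/\lambda_{\lcu}$, establishing the block-encoding identity $\bra{0}_{\rm a}\Be[H/\lambda_{\lcu}]\ket{0}_{\rm a} = H/\lambda_{\lcu}$. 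The normalization hypothesis $\lambda_{\lcu}\ge\|H\|$ is automatically consistent, since the triangle inequality gives $\|H\|\le\sum_j|g_j|\,\|\sigma_j\| = \lambda_{\lcu}$.

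Next I would count gates. Arbitrary state preparation of $R$ amplitudes on $\lceil\log_2 R\rceil$ qubits costs $\cO(R)$ two-qubit gates, so the two preparation unitaries together contribute $\cO(R)$. For $\select$ I would apply each $\sigma_j$ controlled on the index register using unary iteration, which spends $\cO(R)$ Toffoli gates on the control logic plus $R$ invocations of the $\sigma_j$ circuits at cost $C_\sigma$ each; this gives $\cO(R\,C_\sigma)$. Summing, the dominant term is $\cO(R\,C_\sigma)$, as claimed.

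For the generalization to non-unitary $\sigma_j$ with $\|\sigma_j\|\le 1$ accessed through $\Be[\sigma_j]$, I would replace the direct application of $\sigma_j$ in $\select$ by the controlled block-encoding on a shared ancilla register $\rm a'$, defining $\widetilde{\select} := \sum_j \ketbra{j}_{\rm a}\otimes\Be[\sigma_j]$ and reading the block off the joint all-zero flag $\ket{0}_{\rm a}\ket{0}_{\rm a'}$. The same computation carries through with $\bra{0}_{\rm a'}\Be[\sigma_j]\ket{0}_{\rm a'} = \sigma_j$ in place of the unitary, recovering $H/\lambda_{\lcu}$ on the flagged block. The step I expect to be most delicate is the gate accounting for $\select$: securing $\cO(R\,C_\sigma)$ rather than an additional $\log R$ overhead relies on the unary-iteration trick so that the index-control cost amortizes to a constant per term, and in the non-unitary case one must ensure the ancilla $\rm a'$ is cleanly shared across all $\Be[\sigma_j]$ so that the nested flag projections compose correctly.
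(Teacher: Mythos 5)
Your proposal is correct and follows essentially the same \textsc{prepare}--\textsc{select}--\textsc{prepare} construction and gate accounting as the paper's proof. The only (minor) difference is that you use asymmetric left/right preparation unitaries to carry the phases $e^{i\phi_j}$, whereas the paper uses a single \textsc{prepare} with amplitudes $\sqrt{g_j}$ on both sides --- your variant is in fact slightly cleaner for genuinely complex $g_j$, since the symmetric version produces $|g_j|$ rather than $g_j$ unless the phases are absorbed into the $\sigma_j$.
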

\begin{proof}
    The proof can be found in Ref.\cite{Childs2017Speedup} and here we present a version that uses the notation of this work for completeness. First, we attach a `clock register' of dimension $R$. Define the conditional unitary operator
    \begin{equation}
        \text{SELECT} = \sum_{j=0}^{R-1} \ketbra j_{\rm a} \otimes \sigma_j   \;,
    \end{equation}
    where the $\ket j_{\rm a}\in\mathbb{C}^{R}$ are basis states of $\log_2 (R)$ qubits. Also, define the state preparation unitary $\text{PREPARE}$ that performs
    \begin{align}
        \text{PREPARE} \ket 0_{\rm a} \mapsto \ket{\alpha}_{\rm a}: = \frac{1}{\sqrt{\lambda_{\lcu}}} \sum_{j=0}^{R-1} \sqrt{g_j} \ket j_{\rm a} \;.
    \end{align}
    Then, $\text{PREPARE}^{-1} \cdot \text{SELECT} \cdot \text{PREPARE}$ gives the block-encoding:
    \begin{equation}
        \bra 0 _{\rm a}\text{PREPARE}^{-1} \cdot \text{SELECT} \cdot \text{PREPARE} \ket 0_{\rm a} = \bra a _{\rm a}\text{SELECT}\ket{\alpha} _{\rm a} = \frac{1}{\lambda_{\lcu}} \sum_{j=0}^{R-1} g_j \sigma_j = \bra 0_{\rm a}\Be[H/\lambda_{\rm LCU}]\ket 0_{\rm a}.
    \end{equation}
    Note that $\ket 0_{\rm a}$ specifies the first block of the matrix. Implementing $\text{PREPARE}$ requires accessing the coefficients $g_j$ and needs $\cO(R)$ quantum gates in the worst case. Implementing $\text{SELECT}$ can be done with $\cO(R C_\sigma)$ gates, and this step dominates the cost.
\end{proof}

Often, we will be interested in instances where the $\sigma_j$ refer to Pauli strings acting on $N$ qubits involving a constant number of local Pauli operators, in which case $C_\sigma=\cO(1)$ is constant, or where the $\sigma_j$ refer to a product of a constant number of fermionic operators acting on $N$ sites, in which case, $C_\sigma=\cO(\log N)$ for an optimal fermion-to-qubit ternary-tree mapping~\cite{Jiang2020optimalfermionto}.
Alternatively, more specialized constructions~\cite{BabbushPRX18} for $\text{SELECT}$ in the Jordan-Wigner representation have gate complexity $\cO(N)$, and when $R=\Omega(N)$, the cost of block-encoding is dominated by $\text{PREPARE}$ with cost $\cO{(R\log(N))}$, i.e. $C_\sigma =\cO(\log(N))$.

\section{Quantum simulation by spectral amplification} \label{sec:sa}

In this section we introduce the basic idea of spectral amplification (SA) and provide the results that give Termwise SA and SOSSA in \Cref{tab:results}. The first version of SA was put forward in Ref.~\cite{somma2013spectral} under the name spectral gap amplification, in which the goal was to amplify the spectral gap for faster adiabatic quantum computing of frustration-free Hamiltonians rather than amplifying the whole low-energy spectrum for arbitrary Hamiltonians as we consider here. More recently,  Ref.~\cite{zlokapa2024hamiltonian} used SA to speedup time evolution for the low-energy subspace, in the block-encoding framework, which gives the third row in \Cref{tab:results}. We refer to the approach as SA because it amplifies all the small eigenvalues of eigenvectors in the low part of the spectrum of a positive semidefinite Hamiltonian. 

We give the basic results of SA using block-encodings since this is a natural framework for this approach. Later, we will discuss how to construct these block-encodings. In SA we consider first a Hamiltonian of the form
\begin{equation} \label{eq:sa_form}
    H' = \sum_{j=0}^{R-1} h_j \in \mathbb C^{M \times M} \;,
\end{equation}
where $h_j \succeq 0$ are positive semidefinite Hermitian operators satisfying $\|h_j\| \leq g_j$. We consider a factorization of the $h_j$'s so that $h_j = A_j^\dagger A_j$ for some operators $A_j$. In the following we assume $A_j \in \mathbb C^{M \times M}$ to simplify the exposition.
Next we define the corresponding `spectral amplified' operator $H_{\sa}$ by
\begin{equation} \label{eq:sga_def}
H_{\sa} := \sum_{j=0}^{R-1} \ket j_{\rm a} \otimes A_j   \in \mathbb C^{MR \times M}\;.
\end{equation}
Note that we have enlarged the space and attached an $R$-dimensional clock register `a'. Also note that $\ket j_{\rm a}$ can be replaced by any set of $R$ mutually orthogonal quantum states and does not necessarily have to be a computational basis state. As a block matrix, $H_{\sa}$ is given by
\begin{equation}
H_{\sa} = \begin{pmatrix}
    A_0 \\ \vdots \\ A_{R-1}
\end{pmatrix} \;.
\end{equation}

We will see that the main property that makes SA useful is
\begin{align}
\label{eq:samainproperty}
 H' = \sum_{j=0}^{R-1} A^\dagger_j A^{\!}_j= H_{\sa}^\dag H_{\sa}^{\!}\;.
\end{align}
That is, $H_{\sa}$ acts as a square root of $H'$ and now we can use it to produce other operators where the eigenvalues are changed. SA will then use access to $H_{\sa}$ and, to this end, we will assume access to unitaries $\Be[A_j/a_j]$, $a_j \ge \|A_j\|$, which are the block-encodings of the individual $A_j/a_j$'s.
To this end, it is useful to introduce the following block-encoding normalization factor.
\begin{definition}
    Define $\lambda$ to be
    \begin{align}
        \lambda:= \sum_{j=0}^{R-1} |a_j|^2 \;.
    \end{align}
\end{definition}

We can efficiently implement an appropriate block-encoding of $H_{\sa}$ as follows.

\begin{lemma}[Block-encoding of $H_{\rm SA}$] \label{thm:sa}
    Let $H' \succeq 0$ be a Hamiltonian of the form given in \Cref{eq:sa_form}. Assume access to the individual block-encodings $\Be[A_j/a_j]$ and their inverses. 
    Then, we can implement a block-encoding $\Be[H_{\sa} / \sqrt{\lambda}]$ with $\cO(R)$ calls to the (controlled) $\Be[A_j/a_j]$'s and additional $\cO(R)$ arbitrary two-qubit gates.
\end{lemma}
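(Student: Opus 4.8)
The plan is to mimic the LCU compilation of \Cref{thm:lcu}, but with a state-preparation step tuned to the amplitudes $a_j$ and a $\select$ that applies the given block-encodings $\Be[A_j/a_j]$. I would work with three registers: the clock register $\rm a$ of dimension $R$, the system register $\rm s$ of dimension $M$, and a block-encoding ancilla register $\rm b$ shared by all the $\Be[A_j/a_j]$. Without loss of generality I take $a_j \ge 0$ (absorbing phases into the $A_j$). First I would define $\prepare$ acting on the clock register by $\prepare \ket 0_{\rm a} = \frac{1}{\sqrt{\lambda}} \sum_{j=0}^{R-1} a_j \ket j_{\rm a}$, which is a valid normalized state since $\sum_j a_j^2 = \lambda$. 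Next I would define the conditional unitary $\select = \sum_{j=0}^{R-1} \ketbra j_{\rm a} \otimes \Be[A_j/a_j]$, where each $\Be[A_j/a_j]$ acts on registers $\rm s$ and $\rm b$.

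Then I would set $U := \select \cdot (\prepare \otimes \one)$ and verify the rectangular block-encoding identity, keeping in mind that $H_{\sa}$ maps the $M$-dimensional input (system only) to the $MR$-dimensional output (clock $\otimes$ system), so the input and output ancilla conditions differ, as is natural for a rectangular operator. Acting on an arbitrary input $\ket 0_{\rm a}\ket 0_{\rm b}\ket\phi_{\rm s}$, the $\prepare$ step produces $\frac{1}{\sqrt\lambda}\sum_j a_j \ket j_{\rm a}\ket 0_{\rm b}\ket\phi_{\rm s}$, and $\select$ yields $\frac{1}{\sqrt\lambda}\sum_j a_j \ket j_{\rm a}\,\Be[A_j/a_j]\big(\ket 0_{\rm b}\ket\phi_{\rm s}\big)$. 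Projecting the ancilla onto $\bra 0_{\rm b}$ and using $\bra 0_{\rm b}\Be[A_j/a_j]\ket 0_{\rm b} = A_j/a_j$, the factors of $a_j$ cancel and I recover $\frac{1}{\sqrt\lambda}\sum_j \ket j_{\rm a}\otimes A_j\ket\phi_{\rm s} = H_{\sa}\ket\phi_{\rm s}/\sqrt\lambda$. This establishes $\bra 0_{\rm b}\,U\,\ket 0_{\rm a}\ket 0_{\rm b} = H_{\sa}/\sqrt\lambda$, i.e.\ $U = \Be[H_{\sa}/\sqrt\lambda]$ with the appropriate distinct input/output projectors.

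To confirm this is a legitimate block-encoding I would check $\sqrt\lambda \ge \|H_{\sa}\|$: using \Cref{eq:samainproperty}, $\|H_{\sa}\|^2 = \|H_{\sa}^\dag H_{\sa}^{\!}\| = \|H'\| \le \sum_j \|h_j\| \le \sum_j \|A_j\|^2 \le \sum_j a_j^2 = \lambda$, so $\|H_{\sa}\| \le \sqrt\lambda$. For the cost, preparing an arbitrary $R$-dimensional amplitude state in $\prepare$ uses $\cO(R)$ two-qubit gates, and $\select$ can be realized with $\cO(R)$ calls to the controlled $\Be[A_j/a_j]$'s together with $\cO(R)$ additional gates for iterating over the $\log_2 R$ clock qubits, matching the claimed complexity.

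I expect the construction and gate count to be routine. The main point requiring care is the bookkeeping of the two distinct ancilla conditions forced by the rectangular shape of $H_{\sa}$ — ensuring the clock register is \emph{prepared} (not postselected) on the input side yet \emph{retained} on the output side — and verifying that the amplitudes $a_j$ in $\prepare$ exactly cancel the $1/a_j$ normalizations of the individual block-encodings so that the global normalization is precisely $\sqrt\lambda$ rather than some looser bound.
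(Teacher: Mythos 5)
Your proposal is correct and follows essentially the same route as the paper's proof: a $\prepare$ on the clock register with amplitudes $a_j/\sqrt{\lambda}$ followed by a $\select$ over the controlled $\Be[A_j/a_j]$'s, with the $a_j$ factors cancelling upon projecting the block-encoding ancilla so that the product is exactly $\Be[H_{\sa}/\sqrt{\lambda}]$ with distinct input/output projectors. Your added check that $\|H_{\sa}\| \le \sqrt{\lambda}$ is a sound extra verification the paper leaves implicit.
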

\begin{proof}
Define the unitary
\begin{equation}
\text{SELECT} := \sum_{j=0}^{R-1} \ketbra j _{\rm b} \otimes \Be\left[ \frac{A_j}{a_j}\right]  \;,
\quad 
\bra{0}_{\rm a}\Be\left[ \frac{A_j}{a_j}\right]  \ket{0}_{\rm a}= \frac{A_j}{a_j} \;,
\end{equation}
and also the unitary ${\rm PREPARE}$ that prepares the state of the clock register
\begin{equation}
{\rm PREPARE} \ket 0_{\rm b} \mapsto \ket{\alpha}_{\rm b} := \frac{1}{\sqrt{\lambda}} \sum_{j=0}^{R-1} a_j \ket j_{\rm b}\;.
\end{equation}
Then,
\begin{align}\label{eq:sa:be}\nonumber
\bra{0}_{\rm a}\text{SELECT} \cdot \text{PREPARE} \ket 0_{\rm b}\ket{0}_{\rm a} &= \bra{0}_{\rm a}\text{SELECT} \ket{\alpha}_{\rm b}  \ket{0}_{\rm a}
= \sum_{j=0}^{R-1} \frac{a_j}{\sqrt{\lambda}} \ket j_{\rm b} \otimes \bra{0}_{\rm a}\Be\left[ \frac{A_j}{a_j} \right]\ket{0}_{\rm a} 
= \sum_{j=0}^{R-1} \ket j_{\rm b} \otimes  \frac{A_j} {\sqrt{\lambda}}\\
&\equiv \bra{0}_{\rm a}\Be\left[\frac{H_{\rm SA}}{\sqrt{\lambda}} \right]\underbrace{\ket 0_{\rm b}\ket{0}_{\rm a}}_{\ket{0}_{\rm a'}}. 
\end{align}
This shows that $\text{SELECT} \cdot \text{PREPARE}$
gives the desired block encoding.
For ${\rm SELECT}$ we used the individual controlled $\Be[A_j/a_j]$ once and for ${\rm PREPARE}$
we used $\cO(R)$ arbitrary two-qubit gates.
\end{proof}
Equipped with access to $\Be[H^{}_{\rm SA}/\sqrt{\lambda}]$, we will see in \Cref{sec:sa_energy_estimation} and \Cref{sec:sa_phase_estimation} that various simulation tasks can be performed with a cost depending on $\sqrt{\Delta \lambda}$, if the state is supported on the subspace of energy at most $\Delta >0$ of $H$.
In order to construct the walk operator~\cref{lem:qubitization} underlying these tasks, it suffices to consider a block-encoding of the Hamiltonian
\begin{align}
\label{eq:HSAHermitian}
 {\bf H}_{\rm SA}:=
    \begin{pmatrix}
       {\bf 0} & H_{\sa}^{\dagger} \cr H_{\sa}^{\!} & {\bf 0}
    \end{pmatrix},
\end{align}
which has eigenvalues that coincide with the square roots of those of $H$. (This $ {\bf H}_{\rm SA}$ was used in Ref.~\cite{zlokapa2024hamiltonian} for time evolution in the low-energy subspace of $H$.) 
Since $\sqrt x \gg x$ as $x \rightarrow 0$ this readily provides the desired amplification.
However, for obtaining improved query and gate complexities, we can instead use $\Be[H^{}_{\rm SA}/\sqrt{\lambda}]$ to block-encode a shifted version of
$H'$. More precisely, in \Cref{lem:sos_block_encoding} we construct the following, self-inverse block-encoding:
\begin{align}\label{eq:t2_block_encoding}
\Be\left[\frac{H'}{\frac{1}{2}\lambda}-\one\right].
\end{align}
By construction, any low-energy state of $H' \succeq 0$ with energy $0\le E\ll\lambda$ will correspond to an eigenvalue close to $-1$ of the block-encoded operator 
$ {H'}/(\lambda/2)-\one$. Since the quantum walk operator has now eigenphases $\arccos (E_j/(\lambda/2)-1)$, similar to the square root function,
the non-linearity of $\arccos$ is what allows us to achieve SA.

\begin{lemma}[Block-encoding of $\frac{1}{2}H_{\sa}^\dagger H_{\sa}-\lambda\one$]\label{lem:sos_block_encoding}
Let $H' \succeq 0$  be a Hamiltonian and consider the factorization $H'=H_{\sa}^\dagger H^{\!}_{\sa}$, where $H_{\sa}\in\mathbb{C}^{M\times N}$.
Let  $\Be[H_{\sa}/\sqrt{\lambda}]\in\mathbb{C}^{D\times D}$ be a block-encoding of $H_{\sa}$.
Then we may block-encode either $\Be[\frac{H_{\sa}^\dagger H_{\sa}}{\frac{1}{2}\lambda}-\one]$ using one query to $\Be[H_{\sa}/\sqrt{\lambda}]$ and its inverse, $\mathcal{O}(Q\log(D/M))$ arbitrary two-qubit gates, and one ancillary qubit, or its controlled version using two ancillary qubits.
\end{lemma}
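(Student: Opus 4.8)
The plan is to obtain the target operator by conjugating a reflection with the given block-encoding, exploiting that $U$ appears once and $U^\dagger$ once so that their product reconstitutes $H_{\sa}^\dagger H_{\sa}$ rather than collapsing to $U^\dagger U=\one$. Write $U:=\Be[H_{\sa}/\sqrt{\lambda}]$ and let $\Pi_{\rm out}$, $\Pi_{\rm in}$ be the projectors onto the output and input blocks, so the block-encoding property reads $\Pi_{\rm out}\,U\,\Pi_{\rm in}=H_{\sa}/\sqrt{\lambda}$. These projectors are generally distinct since $H_{\sa}\in\mathbb{C}^{M\times N}$ is rectangular, and this distinction is the point to track carefully. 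First I would define the reflection about the \emph{output} block, $R:=2\Pi_{\rm out}-\one$, and set $V:=U^\dagger R\,U$.

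Next I would verify that $V$ is a block-encoding of the desired operator with respect to the \emph{input} projector. Inserting $R=2\Pi_{\rm out}-\one$ and using $\Pi_{\rm out}^2=\Pi_{\rm out}$ together with $U^\dagger U=\one$,
\begin{align}
    \Pi_{\rm in} V \Pi_{\rm in}
    &= 2\,\Pi_{\rm in} U^\dagger \Pi_{\rm out} U \Pi_{\rm in} - \Pi_{\rm in} U^\dagger U \Pi_{\rm in} \nonumber\\
    &= 2\Big(\tfrac{H_{\sa}}{\sqrt{\lambda}}\Big)^\dagger \tfrac{H_{\sa}}{\sqrt{\lambda}} - \Pi_{\rm in}
     = \frac{H_{\sa}^\dagger H_{\sa}}{\tfrac12\lambda} - \one ,
\end{align}
where on the input block $\Pi_{\rm in}$ acts as the system identity. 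Because $H_{\sa}/\sqrt{\lambda}$ has norm at most one, the eigenvalues of $\tfrac{2}{\lambda}H_{\sa}^\dagger H_{\sa}-\one$ lie in $[-1,1]$, confirming this is a legitimate block-encoding.

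The crucial extra property is that $V$ is self-inverse, which is exactly what lets it feed the walk operator of \Cref{lem:qubitization}. This is immediate from $R^2=\one$: indeed $V^2=U^\dagger R\,U U^\dagger R\,U=U^\dagger R^2 U=\one$. Finally I would account for resources. The construction uses exactly one call to $U=\Be[H_{\sa}/\sqrt{\lambda}]$ and one to its inverse $U^\dagger$; the only remaining ingredient is $R$, a phase flip on the all-zero state of the $\mathcal{O}(\log(D/M))$-qubit output-block register, which decomposes into $\mathcal{O}(\log(D/M))$ two-qubit gates with the aid of one ancillary qubit. The controlled version simply controls this reflection, costing one additional ancilla.

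I expect the main obstacle to be purely bookkeeping rather than conceptual: placing the reflection on the correct (output) block so the two copies of $U$ assemble into $H_{\sa}^\dagger H_{\sa}$, and correctly handling the distinct input/output projectors of the rectangular block-encoding so that the surviving $-\Pi_{\rm in}$ is identified with the system identity $-\one$. Once these are pinned down, the self-inverse property and the query/gate counts follow directly.
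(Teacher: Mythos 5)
Your proposal is correct and matches the paper's own proof essentially verbatim: the paper also forms $\Be[H_{\sa}/\sqrt{\lambda}]^\dagger(\textsc{Ref}_{\rm a}\otimes\one)\Be[H_{\sa}/\sqrt{\lambda}]$ with $\textsc{Ref}_{\rm a}=2\ket{0}\bra{0}_{\rm a}-\one_{\rm a}$ the reflection about the output block (framed there as a QSVT application of the polynomial $2x^2-1$), and obtains the same gate and ancilla counts for the reflection and its controlled version. Your explicit tracking of the distinct input/output projectors and the self-inverse check are consistent with, and slightly more detailed than, the paper's presentation.
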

\begin{proof}
By the definition of block-encodings, given the all-zero state $\ket{0}_{\rm a'}\in\mathbb{C}^{D/N}$, $\ket{0}_{\rm a}\in\mathbb{C}^{D/M}$  and any state $\ket{\psi}_\mathrm{S}\in\mathbb{C}^{N}$,
\begin{align}
\Be\left[\frac{H_{\sa}}{\sqrt{\lambda}}\right]\ket{0}_{\rm a'}\ket{\psi}_\mathrm{S}
=
\ket{0}_{\rm a}\frac{H_{\sa}}{\sqrt{\lambda}}\ket{\psi}_\mathrm{S}+\cdots\ket{0^\perp},
\end{align}
where $|(\bra{0}_{\rm a}\otimes \one_\mathrm{S})\ket{0^\perp}|=0$.
Using quantum singular value transformations~\cite{gilyen2019quantum} with the polynomial $2x^2-1$, we can block-encode 
\begin{align}
\Be\left[\frac{H_{\sa}^\dagger H_{\sa}}{\frac{1}{2}\lambda}-\one\right]=\Be\left[\frac{H_{\sa}}{\sqrt{\lambda}}\right]^\dagger(\textsc{Ref}_{\rm a}\otimes \one)\Be\left[\frac{H_{\sa}}{\sqrt{\lambda}}\right],
\end{align}
where the reflection $\textsc{Ref}_{\rm a}:= 2\ket{0}\bra{0}_{\rm a}-\one_{\rm a}$.
This reflection can be implemented using a multi-controlled-CNOT gate which costs one ancillary qubit and $\mathcal{O}(\log(D/M))$ two-qubit gates~\cite{2017HeToffoli}.
The controlled reflection can be implemented using a multi-controlled-CNOT gate with one additional control, and so uses two additional qubits in total over that of $\Be[H_{\sa}/\sqrt{\lambda}]$.
\end{proof}

\subsection{Expectation estimation by spectral amplification} \label{sec:sa_energy_estimation}

In this section, we present new quantum algorithms summarized in~\cref{tab:expectation_estimation} that exploit SA to improve expectation estimation.
Given a block-encoding $\textsc{Be}[H/\lambda]$ of an arbitrary Hamiltonian $H$ and a unitary preparing the state $\ket{\psi}$, the expectation estimation problem is to estimate $E=\bra{\psi}H\ket{\psi}\in[-\lambda,\lambda]$ to additive error $\epsilon \ge 0$ and confidence at least $1-q\le 1$.
To simplify the presentation, we assume in the following that $E\le0$, though we emphasize that our results are symmetric about $E=0$, meaning that they depend on $-|E|$.
In contrast to the traditional approaches that give $\mathcal{O}(\lambda/\epsilon)$ query complexity, we show that SA leads to explicit scaling with the improved factor $\sqrt{\lambda(\lambda+E)}\le\lambda$, which offer greatest advantage in the limit $E/\lambda \rightarrow -1$, the `low-energy sector', and our generalized routines naturally recover previously known results without SA, which concern a different limit in which  $|E/\lambda|\ll 1$.
In the special case of amplitude estimation~\cite{brassard2002quantum} corresponding to the case where $H/\lambda=\Pi$ is a projection, prior work on `amplified amplitude estimation'~\cite{simon2024amplifiedamplitudeestimationexploiting}
already gives the improved query complexity in this setting.
Nevertheless,
our algorithms improve on this previous work  in these key areas (Row 2 of~\cref{tab:expectation_estimation}):
\begin{itemize}
\item Given an {\textit{a priori}} known upper bound $\Delta\ge \lambda+E$, the query complexity of our non-adaptive algorithm in~\cref{lem:fast_estimation_single_term} readily scales as $\mathcal{O}(\sqrt{\Delta\lambda}/\epsilon)$, improving prior work by a factor $\frac{\Delta}{\Delta-(\lambda+E)}$, which means that our results do not need the known upper bound $\Delta$ to be at least a constant multiplicative factor worse than the \textit{a priori} unknown actual value of $\lambda+E$.
\item Even without any prior knowledge of $E$, such as through the upper bound $\Delta$, the query complexity of our adaptive algorithm in~\cref{lem:fast_estimation_single_term_no_prior} scales like $\mathcal{O}(\sqrt{(\max\{\epsilon,\lambda+E\}\lambda}/\epsilon)$. The dependence on $\lambda+E$ rather than $\Delta$ is a significant improvement as $\lambda+E$ could be arbitrarily smaller than $\Delta$. Moreover, this result also naturally recovers the `super-Heisenberg' scaling of $\mathcal{O}(1/\sqrt{\epsilon})$ when $\epsilon=\Theta(\lambda+E)$ {\textit{without prior knowledge}} on $E$, which in previous work~\cite{simon2024amplifiedamplitudeestimationexploiting} required the specific condition that $\epsilon=\Theta(\lambda+E)=\Theta(\Delta)$.
\item Our results are general and apply to arbitrary $H$ that can be block-encoded, instead of only reflections or sums of reflections as in~\cite{simon2024amplifiedamplitudeestimationexploiting}. Moreover, when we instantiate with $H=2H_{\sa}^\dagger H_{\sa}-\lambda$ in~\cref{lem:sos_block_encoding}, this generalizes the case where $H$ is a reflection and $H_{\sa}$ is a projector to arbitrary block-encoded rectangular operators $H_{\sa}$. 
The low-energy sector is equivalent to assuming small
$\frac 1 \lambda \bra{\psi}H_{\sa}^\dagger H^{\!}_{\sa}\ket{\psi}$, and we give~\cref{cor:amplified_amplitude_estimation} as an example of how our general results expressed in the block-encoding framework easily recover these special cases.
\end{itemize}

\begin{table}[H]
\centering
\begin{tabular}{|c|c|c|c|c|}
\hline\hline
Year&{Reference} & Query complexity $\mathcal{O}(\cdot)$ & Needs $\Delta$ & {Comments} \\
\hline
2002&\cite{brassard2002quantum}& $\frac{\lambda}{\epsilon}\log\left(\frac{1}{q}\right)$&No& Equivalent to $H_{\sa}\propto\Pi$,
\\\cline{1-4}
2024&\cite{simon2024amplifiedamplitudeestimationexploiting}&$\frac{\sqrt{\Delta\lambda}}{\epsilon}\log\left(\frac{1}{q}\right)\left(\frac{\Delta}{\Delta-(\lambda+E)}\right)$ &Yes&  where $\Pi$ is a projector.
\\\hline
2025&\cref{lem:fast_estimation_single_term}&$\frac{\sqrt{ \Delta\lambda}}{\epsilon}\log\left(\frac{1}{q}\right)$ &Yes & Non-adaptive algorithm
\\\hline
2025&\cref{lem:fast_estimation_single_term_no_prior}&$\frac{\sqrt{\max\{\epsilon,\lambda-|E|\}\lambda}}{\epsilon}\log\left(\frac{1}{q}\right)$ &No& Adaptive algorithm
\\
\hline\hline
\end{tabular}
\caption{\label{tab:expectation_estimation}Cost of estimating an expectation $E=\bra{\psi}H\ket{\psi}\in[-\lambda,\lambda]$, to additive error $\epsilon$ and failure probability $q$, given query access to the block-encoding of $\textsc{Be}[H/\lambda]$ and a state preparation unitary for the state $\ket{\psi}$. In some cases, a known upper bound $\Delta\ge \lambda-|E|$ is needed {\em a priori}.
}
\end{table}

In the previous section we considered 
$H'=H_{\sa}^\dagger H_{\sa}$ in~\cref{eq:sa_form} and also $\|H' \|\le \lambda$. 
Using~\cref{lem:sos_block_encoding}, we can shift this to $H = 2H'-\lambda$ so that the relevant eigenvalues of $H$ are in $[-\lambda,\lambda]$.
To ease the exposition and the comparison with other methods, in the following we will assume access to $\textsc{Be}[H/\lambda]$ where $H$ is arbitrary like~\cref{eq:H_LCUa} with eigenvalues are in $[-\lambda,\lambda]$, and the low-energy sector, is that of energies at or near $-\lambda$. 
At a high level, our non-adaptive algorithm~\cref{lem:fast_estimation_single_term} works by performing quantum phase estimation on the walk operator~\cref{lem:qubitization}.
In expectation estimation, this is a walk-operator essentially on the block-encoding of a trivial $1\times 1$ operator $E=\bra{\psi}H\ket{\psi}$.
Due to the $\arccos$ non-linearity, any error $\epsilon'$ in the estimate of phase becomes a smaller square-root error $\epsilon$ in the estimate of $E$ when it is near $-\lambda$.
As the upper bound $\Delta$ is known beforehand, we know how to choose $\epsilon'$ in phase estimation to achieve the desired $\sqrt{E\Delta}$, scaling.

Achieving the optimal scaling of our~\cref{lem:fast_estimation_single_term_no_prior} without prior knowledge of $\Delta$ is significantly more challenging.
Without the upper bound $\Delta$, we cannot make a naive choice of $\epsilon'$ beforehand.
For instance, the conservative choice of $\epsilon'=\Theta({\epsilon}/\lambda)$ is guaranteed to achieve the desired accuracy, but this basically the worst-case scaling.
The next idea is to perform a binary search for $E$ using multiple $i=1,\cdots,i_\text{max}$ iterations of phase estimation to accuracy $\epsilon'_i$ that decreases geometrically and confidence $1-q_i$.
By learning an estimate $\hat{E}$ of $E$ on the fly and computing $\epsilon$ based on $\arccos(\hat{E})$, we can terminate the algorithm when the desired $\epsilon$ is achieved. 
Early termination leads to almost the desired $\tilde{\cO}(\sqrt{(E+\lambda)\lambda}/\epsilon)$ scaling, but this is still be suboptimal by a logarithmic factor as the failure probability $q_i$ accumulates over multiple steps, and without knowing $E$ beforehand, one has to make a worst-case choice of $q_i$ that assumes the number of iterations $i_\text{max}=\Theta(\log(\lambda/\epsilon))$, which leads to $\cO(\log\frac{\log(\lambda/\epsilon)}{q})$ scaling.
Our solution is to perform the binary search in two steps. First, given $\epsilon$, we perform phase estimation to error $\epsilon'=\Theta(\sqrt{\epsilon/\lambda})$. We prove that this is guaranteed to give us either an estimate $\lambda+\hat{E}$ of $\lambda+E$ to constant multiplicative error, or that $0\le \lambda+E\le\epsilon$, and so we terminate the algorithm and return $-\lambda$ as the estimate of $E$ that is correct to additive error $\epsilon$.
Second, using this estimate $\hat{E}$, we now know how many iterations of phase estimation $i_\text{max}$ are required, up to a constant additive factor.
This information is sufficient for us to make a judicious choice of $q_i$ that by a union bound, achieves the final desired confidence with the desired $\cO(\log(\frac{1}{q}))$ scaling.

To prove our first result, we require the following known result on phase estimation.
\begin{lemma}[Phase estimation with confidence intervals~\cite{knill2007optimal,berry2024rapidinitialstatepreparation,gilyen2019quantum} ]
\label{thm:block_encoding_confidence}
Let $U$ be a unitary operator with eigenstates $\ket{\psi_j}$ and corresponding eigenvalues $e^{i\theta_j}$.
Let $\ket{\psi}=\sum_{j}\sqrt{p_j}\ket{\psi_j}$ be an arbitrary superposition.
Then with $\mathcal{O}(\frac{1}{\epsilon}\log{\frac{1}{q}})$ calls to controlled-$U$, its inverse, and one copy of $\ket{\psi}$, with probability $p_j$ we estimate $\hat{\theta}_j$ such that $|(\hat{\theta}_j-\theta_j) \mod 2\pi|\le\epsilon$ with confidence $1-q$.
\end{lemma}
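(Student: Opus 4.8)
The plan is to prove \Cref{thm:block_encoding_confidence} by reducing it to the standard textbook statement of quantum phase estimation applied to a single eigenstate, and then carefully handling the superposition $\ket{\psi}=\sum_j \sqrt{p_j}\ket{\psi_j}$ via the linearity of the phase estimation circuit together with boosting of the success probability. First I would recall the basic phase estimation primitive: given controlled-$U$ and an eigenstate $\ket{\psi_j}$ with eigenvalue $e^{i\theta_j}$, running the standard circuit with $t=\mathcal{O}(\log(1/\epsilon))$ ancillary qubits and an inverse quantum Fourier transform produces an estimate $\hat\theta_j$ satisfying $|(\hat\theta_j-\theta_j)\bmod 2\pi|\le\epsilon$ with some constant probability (say $\ge 2/3$ or $4/\pi^2$, depending on the precise analysis). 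The number of controlled-$U$ applications here is $\mathcal{O}(1/\epsilon)$, since the powers $U^{2^k}$ up to $k=\mathcal{O}(\log(1/\epsilon))$ contribute a geometric sum dominated by the top term $2^{t}=\mathcal{O}(1/\epsilon)$.

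Next I would invoke linearity to handle the superposition. Because the phase estimation circuit acts coherently and the register entangles each eigencomponent $\ket{\psi_j}$ with its own estimate register, running the circuit on $\ket{\psi}=\sum_j\sqrt{p_j}\ket{\psi_j}$ and measuring the estimate register yields, with probability $p_j$, the branch in which we are effectively estimating $\theta_j$. Conditioned on landing in branch $j$, the conditional distribution of the outcome is exactly that of the single-eigenstate case, so the estimate $\hat\theta_j$ obeys the desired $\epsilon$-accuracy guarantee with the same constant probability. This is the step where one must be slightly careful that the measurement collapse does not correlate the ``which-$j$'' outcome with the ``accuracy-within-$\epsilon$'' event in a harmful way; the key observation is that the estimate register outcome determines both simultaneously, and the constant-probability accuracy bound holds pointwise for each $j$, so by the law of total probability the joint statement holds.

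The final step is confidence amplification from a constant success probability to $1-q$. The standard approach is to repeat the primitive $\mathcal{O}(\log(1/q))$ times and take the median of the outcomes; a Chernoff/Hoeffding bound shows the median lies within $\epsilon$ of $\theta_j$ except with probability $q$. I would note that each repetition requires a fresh copy of $\ket{\psi}$ in general, but the lemma as stated asks only for ``one copy of $\ket{\psi}$''; here I would point out that with access to the state-preparation unitary this is not an issue, or alternatively appeal to the coherent median-finding / amplitude-amplification variants in the cited references~\cite{knill2007optimal,berry2024rapidinitialstatepreparation} that achieve the $\mathcal{O}(\frac{1}{\epsilon}\log\frac{1}{q})$ query scaling while preserving the per-branch probability $p_j$. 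Multiplying the per-repetition cost $\mathcal{O}(1/\epsilon)$ by the $\mathcal{O}(\log(1/q))$ repetitions gives the claimed total query complexity $\mathcal{O}(\frac{1}{\epsilon}\log\frac{1}{q})$.

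The main obstacle I anticipate is the interaction between the superposition and the confidence boosting: naive repetition destroys the coherent superposition, yet the lemma promises per-branch probabilities $p_j$ with only a single copy of $\ket{\psi}$. The clean resolution is to fold the median-finding into a single coherent circuit (computing several phase estimates into separate registers and coherently extracting their median before any measurement), so that the final measurement still projects onto branch $j$ with probability $p_j$ while the boosted accuracy guarantee holds within that branch; this is precisely the technique established in the cited works, so I would cite rather than re-derive it, and simply verify that the query count and the probabilistic guarantee compose as stated.
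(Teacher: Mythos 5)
The paper does not prove this lemma: it is imported verbatim as a known result with citations to~\cite{knill2007optimal,berry2024rapidinitialstatepreparation,gilyen2019quantum}, so there is no in-paper proof to compare against. Your reconstruction is the standard argument and is correct: single-eigenstate phase estimation at cost $\mathcal{O}(1/\epsilon)$ with constant success probability, linearity over the eigendecomposition (which works because controlled-$U$ preserves each $\ket{\psi_j}$ exactly, so the final state is $\sum_j\sqrt{p_j}\ket{\Phi_j}\ket{\psi_j}$ with orthogonal branches), and a coherent median over $\mathcal{O}(\log\frac{1}{q})$ internal repetitions before any measurement. One small caution: the alternative you float in passing --- boosting by preparing fresh copies of $\ket{\psi}$ and repeating independently --- would \emph{not} deliver the stated per-branch guarantee, since each independent run collapses to a possibly different branch $j$ and the median across runs is then meaningless; only the coherent-median route (or repeated rounds on the \emph{same} system register, which is legitimate precisely because eigenstates are preserved) gives the ``with probability $p_j$, accurate for $\theta_j$ with confidence $1-q$'' statement, and you do correctly identify this as the resolution in your final paragraph.
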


Next, we use the high-confidence quantum phase estimation
to obtain our first version of improved expectation or energy estimation.
\begin{theorem}[Energy estimation by spectral amplification]\label{lem:fast_estimation_single_term}
Let $H\in\mathbb{C}^{N\times N}$ be a Hamiltonian and assume access to the block-encoding $\Be[H/\lambda]\in\mathbb{C}^{D\times D}$.
Let $\ket \psi \in\mathbb{C}^{N}$ be prepared by the state preparation unitary $P$, such that $P\ket{0}_\mathrm{S}=\ket \psi$ and assume it satisfies $\bra \psi H \ket \psi \le -\lambda + \Delta$, for some known $\Delta >0$.
Then $E=\bra \psi H \ket \psi$ can be estimated to additive error $\epsilon$ and confidence $1-q$ using 
$Q=\mathcal{O}(\frac{\sqrt{\lambda\Delta}}{\epsilon}\log\frac{1}{q})$ queries to the block-encoding, state preparation unitary, and their inverses, two ancillary qubits, and $\mathcal{O}(Q\log(D/N))$ arbitrary two-qubit gates.
\end{theorem}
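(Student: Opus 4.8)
The plan is to reduce expectation estimation to a single run of phase estimation on a qubitization walk operator built from a $1\times 1$ block-encoding of the scalar $E/\lambda$, and then to turn a coarse phase estimate into a fine energy estimate by exploiting the steepness of $\arccos$ near $E=-\lambda$ (\Cref{fig:sqrt}).

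First I would construct a block-encoding of the scalar $E/\lambda=\bra\psi H\ket\psi/\lambda$. Since $\ket\psi=P\ket 0_\mathrm{S}$, I conjugate the given $\Be[H/\lambda]$ by the state preparation unitary, defining $A:=(\one_\mathrm{a}\otimes P)^\dagger\,\Be[H/\lambda]\,(\one_\mathrm{a}\otimes P)$, so that $\bra 0_\mathrm{a}\bra 0_\mathrm{S}A\ket 0_\mathrm{a}\ket 0_\mathrm{S}=\bra\psi H\ket\psi/\lambda=E/\lambda$. This is exactly a (trivial) block-encoding of the $1\times 1$ operator $E/\lambda$ whose flag is the all-zero state of the combined $(\mathrm a,\mathrm S)$ register, costing two extra queries to $P$ and $P^\dagger$ per use of $\Be[H/\lambda]$; note that $E/\lambda\in[-1,1]$ is real because $H$ is Hermitian.

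Next I would apply qubitization (\Cref{lem:qubitization}) to $A$: after converting $A$ to a self-inverse block-encoding at the cost of one control qubit and two Hadamards, the walk operator $W=\textsc{Ref}\cdot A$, with $\textsc{Ref}=2\ket 0\bra 0-\one$ the reflection about the flag, acts on the two-dimensional invariant subspace containing the input $\ket 0_\mathrm{a}\ket 0_\mathrm{S}$ as a rotation with eigenphases $e^{\pm i\theta}$, where $\theta=\arccos(E/\lambda)$, and the input decomposes as $\tfrac{1}{\sqrt 2}(\ket{w_+}+\ket{w_-})$. The conceptual point is that, because $A$ block-encodes the scalar $E/\lambda$, the relevant subspace carries a single pair of eigenphases $\pm\theta$, so phase estimation returns $\theta=\arccos(E/\lambda)$ deterministically up to sign rather than a random eigenvalue of $H$. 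Running phase estimation with confidence intervals (\Cref{thm:block_encoding_confidence}) on $W$ with this input, to additive phase precision $\epsilon'$ and confidence $1-q$, uses $\mathcal{O}(\tfrac{1}{\epsilon'}\log\tfrac1q)$ controlled applications of $W$, and hence the same order of queries to $\Be[H/\lambda]$, $P$ and their inverses. It returns $\hat\theta$ estimating $+\theta$ or $-\theta$; since $\cos$ is even, either outcome gives $\hat E:=\lambda\cos\hat\theta$.

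The crux, and the step I expect to be the main obstacle, is the error propagation through $\cos$ near the endpoint $E=-\lambda$, which is exactly where the amplification lives. By the mean value theorem, $|\hat E-E|=\lambda|\cos\hat\theta-\cos\theta|\le\lambda\,\epsilon'\,\max_\xi|\sin\xi|$ over $\xi$ between $\hat\theta$ and $\theta$. Writing $\theta=\pi-\phi$, we have $\cos\phi=-E/\lambda$, so $1-\cos\phi=(\lambda+E)/\lambda\le\Delta/\lambda$ and hence $\phi=\mathcal{O}(\sqrt{\Delta/\lambda})$; consequently $\max_\xi|\sin\xi|=\mathcal{O}(\sqrt{\Delta/\lambda})+\epsilon'$ and $|\hat E-E|=\mathcal{O}(\sqrt{\lambda\Delta})\,\epsilon'+\lambda(\epsilon')^2$. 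Choosing $\epsilon'=\Theta(\epsilon/\sqrt{\lambda\Delta})$ makes the linear term $\mathcal{O}(\epsilon)$, and the quadratic term is also $\mathcal{O}(\epsilon)$ because $\epsilon'=\mathcal{O}(\sqrt{\Delta/\lambda})$ in the only nontrivial regime $\epsilon<\Delta$ (if $\epsilon\ge\Delta\ge\lambda+E$ one returns $-\lambda$, already an $\epsilon$-accurate estimate, using no queries). This yields $Q=\mathcal{O}(\tfrac{\sqrt{\lambda\Delta}}{\epsilon}\log\tfrac1q)$. Finally, the gate and ancilla accounting follows \Cref{lem:sos_block_encoding}: each walk step needs the reflection $\textsc{Ref}$, implementable by a multi-controlled-CNOT with one ancilla and $\mathcal{O}(\log(D/N))$ two-qubit gates, which together with the single control qubit of the self-inverse construction yields two ancillary qubits and $\mathcal{O}(Q\log(D/N))$ additional arbitrary gates overall.
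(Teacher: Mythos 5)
Your proposal is correct and follows essentially the same route as the paper's proof: conjugate $\Be[H/\lambda]$ by the state preparation unitary to block-encode the scalar $E/\lambda$, qubitize to get a walk with eigenphases $\pm\arccos(E/\lambda)$, run high-confidence phase estimation, and propagate the phase error through the steep $\arccos$ near $E=-\lambda$ with $\epsilon'=\Theta(\epsilon/\sqrt{\lambda\Delta})$. Your error-propagation step is in fact more explicit than the paper's (which only states the bound $\epsilon\le\sqrt{2\Delta\lambda}\,\epsilon_{\mathrm{PEA}}$), since you track the second-order term $\lambda(\epsilon')^2$ and the trivial regime $\epsilon\ge\Delta$.
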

\begin{proof}
Observe that the following quantum circuit is a block-encoding of the $1\times 1$ matrix $(E/\lambda)$:
\begin{align}\label{eq:expectation_walk}
U:= \textsc{Be}[E/\lambda]=(\one_{\rm a}\otimes P^\dagger)\Be\left[\frac{H}{\lambda}\right](\one_{\rm a}\otimes P)\;.
\end{align}
By Qubitization in~\cref{lem:qubitization}, the quantum walk operator $W=\textsc{Ref}_{\mathrm{RS}}\cdot\Be[E/\lambda]$ has eigenvectors $\ket{\pm}:=\frac{1}{\sqrt{2}}(\ket{0}_{\mathrm{S},{\rm a}}\pm \ket{0^\perp})$ with eigenphases $\pm \arccos(E/\lambda)$.

We now apply high-confidence quantum phase estimation in~\cref{thm:block_encoding_confidence} to this quantum walk operator with input state $\ket{0}_{\mathrm{S},{\rm a}}$.
For additive error $\epsilon_{\mathrm{PEA}}$ and confidence at least $1-q$, this requires $\mathcal{O}(\frac{1}{\epsilon_{\mathrm{PEA}}}\log\frac{1}{q})$ queries to $W$.
Note that we will obtain a phase with the $+$ or $-$ sign with probability $1/2$ as the input state $\ket{0}_{\mathrm{S},{\rm a}}$ is a uniform superposition of $\ket{\pm}$.
However, this sign does not impact our estimate and can be ignored.
An error $\epsilon_\mathrm{PEA}$ in the phase translates to a smaller error $\epsilon$ in the expectation
due to the $\arccos$ function. Indeed,
using the derivative to propagate errors,
it is possible to show
\begin{align}
    \epsilon \le \sqrt{2\Delta\lambda}\; \epsilon_{\mathrm{PEA}} \;,
\end{align}
and hence choosing 
$\epsilon_\mathrm{PEA}=\epsilon/(\sqrt{2\Delta\lambda})$
suffices. This gives the desired complexity.
\end{proof}

We now show that the same scaling is achieved even without prior knowledge on the upper bound for $\bra{\psi}H|\psi\rangle$.
This requires a decision version of phase estimation called gapped phase estimation.
Given a unitary $U$ and and an eigenstate $\ket{\psi_j}$ with eigenvalue $e^{i\theta_j}$, the task is to decide with high probability whether $\theta_j$ is either in the interval $\mathcal{I}_1$ or $\mathcal{I}_2$, where these intervals are disjoint and separated by some angular gap $2\epsilon$.
Searching for $\theta_j$ like in~\cref{thm:block_encoding_confidence} can be reduced to a binary search using gapped phase estimation with optimal query complexity.
The following version of gapped phase estimation in particular uses no ancillary qubits beyond the one needed for controlled-$U$. 
\begin{lemma}[Gapped phase estimation;~\protect{Appendix D of~\cite{low2024quantumlinearalgorithmoptimal}}]
\label{thm:gapped_phase_estimation}
Let $U$ be a unitary operator and $\ket \psi$
be an eigenstate satisfying $U\ket{\psi}=e^{i\theta}\ket{\psi}$. For any eigenstate $U\ket{\psi}=e^{i\theta}\ket{\psi}$ and any $\theta_0$ and $\epsilon$ satisfying $0<\epsilon\le\theta_0\le\epsilon+\theta_0\le\frac{\pi}{2}$, ${q}>0$, there is a unitary $\textsc{Gpe}_{\epsilon,\theta_0,{q}}$ that prepares the state
\begin{align}
\textsc{Gpe}_{\epsilon,\theta_0,{q}}\ket{1}\ket{\psi}=(\alpha(\theta)\ket{0}+\beta(\theta)\ket{1})\ket{\psi},\quad|\alpha(\theta)|^2+|\beta(\theta)|^2=1,
\end{align}
where
\begin{align}
\label{eq:gpe_thm_theta_domain_1}
\forall|\theta|&\in[0,\theta_0-\epsilon],&|\alpha(\theta)|^2&\le 2{q},&|\beta(\theta)-1|&\le {q},\\
\label{eq:gpe_thm_theta_domain_2}
\forall|\theta-\pi|&\in[0,\theta_0-\epsilon],&|\alpha(\theta)|^2&\le 2{q},&|\beta(\theta)+1|&\le {q},\\
\label{eq:gpe_thm_theta_domain_2}
\forall|\theta|&\in[\theta_0+\epsilon,\pi-\theta_0-\epsilon],&|\beta(\theta)|^2&\le 2{q},&|\alpha(\theta)-1|&\le {q},
\end{align}
using $Q=\mathcal{O}(\frac{1}{\epsilon}\log\frac{1}{{q}})$ queries to controlled-$U$ and its inverse, $\mathcal{O}(Q)$ arbitrary two-qubit gates, and no ancillary qubits.
\end{lemma}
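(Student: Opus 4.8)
The plan is to recognize \textsc{Gpe} as a \emph{filtering} operation on the flag qubit: we wish to rotate the flag conditioned on a bounded polynomial of the eigenphase $\theta$ of $U$, chosen so that the $\ket 1$-amplitude $\beta(\theta)$ realizes a smoothed three-level staircase. First I would set up quantum signal processing adapted to a unitary oracle rather than a block-encoding: interleaving controlled-$U$, controlled-$U^\dagger$, and single-qubit rotations on the flag qubit produces, on each eigenspace $U\ket\psi = e^{i\theta}\ket\psi$, a $2\times 2$ transformation whose $\ket 1 \to \ket 1$ entry is a Laurent polynomial $\beta(\theta) = P(e^{i\theta})$ with $|P|\le 1$ on the unit circle, together with a complementary entry $\alpha(\theta)=Q(e^{i\theta})$ satisfying $|P|^2+|Q|^2=1$~\cite{gilyen2019quantum}. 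The number of queries equals the Laurent degree of $P$, and the flag qubit is the only register used beyond what controlled-$U$ already requires, matching the ``no ancillary qubits'' claim. Thus the lemma reduces to an approximation-theory statement: exhibit a bounded $P$ of degree $\cO(\frac1\epsilon\log\frac1q)$ with the prescribed behavior.

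Second, I would identify and construct the target. Reading off the three regimes, $\beta(\theta)$ should approximate the even (in $\theta$) function equal to $+1$ on $|\theta|\le\theta_0$, to $0$ on $\theta_0\le|\theta|\le\pi-\theta_0$, and to $-1$ on $|\theta|\ge\pi-\theta_0$, with transition bands of half-width $\epsilon$ centred at the two thresholds $|\theta|=\theta_0$ and $|\theta|=\pi-\theta_0$. I would build this as a difference of two smoothed step functions located at $\theta_0$ and $\pi-\theta_0$, each obtained from the standard polynomial approximation of $\mathrm{sign}$ (equivalently $\mathrm{erf}$-type filters) whose transition region has width $\epsilon$ and whose error outside it is at most $q$; such approximations have degree $\cO(\frac1\epsilon\log\frac1q)$~\cite{gilyen2019quantum, low2024quantumlinearalgorithmoptimal}. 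The crucial point---and the reason both $U$ and $U^\dagger$ appear---is to work directly in the angle $\theta$ via Laurent polynomials in $e^{i\theta}$, so that the transition width is controlled in $\theta$ and the degree is \emph{independent} of $\theta_0$; approximating the same staircase as an ordinary polynomial in $\cos\theta$ would instead cost a factor $1/\sin\theta_0$, since a width-$\epsilon$ transition at $\theta_0$ maps to one of width $\epsilon\sin\theta_0$ in $\cos\theta$. The constraints $0<\epsilon\le\theta_0$ and $\theta_0+\epsilon\le\pi/2$ guarantee the pass and stop bands are disjoint and symmetric about $\pi/2$, so the two smoothed steps do not interfere.

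Finally, I would verify the stated inequalities and resource counts. Choosing $P$ as above with pointwise error $\le q$ and designing the complementary $Q$ to be real and positive in the stop band gives $|\beta-1|\le q$, $|\beta+1|\le q$, and $|\alpha-1|\le q$ in the respective regions; the $|\alpha|^2\le 2q$ and $|\beta|^2\le 2q$ bounds then follow from $|\alpha|^2+|\beta|^2=1$, since e.g.\ $|\beta-1|\le q$ forces $|\alpha|^2=1-|\beta|^2\le 2q$. The query count is $\cO(\frac1\epsilon\log\frac1q)$ by the degree bound, with $\cO(Q)$ single-qubit rotations and no extra ancilla. I expect the main obstacle to be the approximation-theoretic core: producing a single bounded Laurent polynomial that simultaneously achieves both transitions with error $q$, width $\epsilon$, and modulus $\le 1$ everywhere (needed for realizability), together with a valid complementary polynomial whose phase makes $\alpha$ genuinely close to $+1$---not merely $|\alpha|$ close to $1$---in the stop band. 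Pinning down the constants and the phase conventions of $Q$ while keeping the degree uniform in $\theta_0$ is where the real work lies; the signal-processing realization is then routine.
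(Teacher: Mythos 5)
The paper does not prove this lemma at all: it is imported verbatim as a known result from Appendix~D of Ref.~\cite{low2024quantumlinearalgorithmoptimal}, so there is no in-paper argument to compare yours against. Judged on its own terms, your sketch follows what is essentially the standard route to such statements (and, as far as the construction in the cited reference goes, the same basic route): realize the flag-qubit amplitudes as a QSP/Laurent-polynomial pair $(\beta,\alpha)=(P(e^{i\theta}),Q(e^{i\theta}))$ built from controlled-$U$ and controlled-$U^\dagger$, choose $P$ to approximate the three-level staircase with transition bands of width $\epsilon$ via smoothed-sign filters of degree $\mathcal{O}(\frac1\epsilon\log\frac1q)$, and read off the stated inequalities. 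Your observation about working in the angle variable (Laurent polynomials in $e^{i\theta}$) rather than in $\cos\theta$, to keep the degree independent of $\theta_0$, is exactly the right reason the query count has no $1/\sin\theta_0$ factor, and your derivation of $|\alpha|^2\le 2q$ from $|\beta-1|\le q$ via normalization is correct.

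That said, your proposal stops short of the only part that carries real content, and you say so yourself: you must actually exhibit a single bounded Laurent polynomial achieving both transitions simultaneously with $|P|\le 1$ everywhere on the circle (required for QSP realizability), \emph{and} a complementary $Q$ whose phase can be fixed so that $\alpha(\theta)$ is close to $+1$ rather than merely $|\alpha(\theta)|$ close to $1$ in the stop band. The complementary polynomial in QSP is not freely choosable --- its phase is constrained by the completion step --- so the claim ``design $Q$ to be real and positive in the stop band'' needs an explicit construction or an appeal to a completion lemma that controls phases, not just moduli. Also note a small reading slip: the lemma's hypotheses place the flat regions at $|\theta|\le\theta_0-\epsilon$ and the stop band at $[\theta_0+\epsilon,\pi-\theta_0-\epsilon]$, so the transition bands are the \emph{annuli} $(\theta_0-\epsilon,\theta_0+\epsilon)$ of full width $2\epsilon$ centred at $\theta_0$ and $\pi-\theta_0$; your description (``half-width $\epsilon$'') is consistent with this, but you should make sure the disjointness condition you invoke is $\theta_0+\epsilon\le\pi/2$, which is exactly what the hypothesis $\epsilon+\theta_0\le\frac{\pi}{2}$ supplies. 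With the completion step supplied (e.g., by citing the explicit construction in Appendix~D of Ref.~\cite{low2024quantumlinearalgorithmoptimal}), your outline would be a correct proof.
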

This version of gapped phase estimation produces measurement probabilities that are symmetric about $\theta=\pi/2$.
In situations where it is necessary to distinguish between the $\theta$ or $\pi-\theta$ branches, one may always perform another round of phase estimation on, say, $e^{i\pi/2}U$, using~\cref{eq:gpe_thm_theta_domain_1}.
However in the application to quantum walk operations, a subtlety is  that every eigenstate $\ket{\psi_j}$ of the block-encoded operator maps to a equal superposition of two eigenstate $\ket{\psi_{j,\pm}}$ of the walk operator with eigenvalues $e^{\pm i\arccos\theta_j}$.
Hence distinguishing between the $\pm$ branches requires the following approach of controlled gapped phase estimation.
\begin{corollary}[Controlled gapped phase estimation]\label{cor:ctrl_gapped_phase_estimation}
Let $U$ be a unitary operator. For any eigenstate $U\ket{\psi}=e^{i\theta}\ket{\psi}$ and any $\theta_0,\epsilon$ satisfying $0<\epsilon\le\theta_0\le\epsilon+\theta_0\le\frac{\pi}{2}$, ${q}>0$, there is a unitary $\textsc{CGpe}_{\epsilon,\theta_0,{q}}$ that prepares the state
\begin{align}
\textsc{CGpe}_{\epsilon,\theta_0,{q}}\ket{+}\ket{1}\ket{\psi}=(\alpha_0(\theta)\ket{01}+\alpha_1(\theta)\ket{11}+\gamma(\theta)\ket{-0})\ket{\psi},\quad|\alpha_0(\theta)|^2+|\alpha_1(\theta)|^2+|\gamma(\theta)|^2=1,
\end{align}
where
\begin{align}
\forall|\theta|&\in[0,\theta_0-\epsilon],&1-|\alpha_0(\theta)|^2&\le{q},\\
\forall|\theta-\pi|&\in[0,\theta_0-\epsilon],&1-|\alpha_1(\theta)|^2&\le{q}.
\end{align}
using $Q=\mathcal{O}(\frac{1}{\epsilon}\log\frac{1}{{q}})$ queries to controlled-$U$ and its inverse, $\mathcal{O}(Q)$ arbitrary two-qubit gates, and no ancillary qubits.
\end{corollary}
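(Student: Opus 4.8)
The plan is to build $\textsc{CGpe}$ as a thin coherent wrapper around the gapped phase estimation unitary of \Cref{thm:gapped_phase_estimation}, whose only new job is to resolve, inside the two ``edge'' regions, the branch $\theta\approx 0$ from the branch $\theta\approx\pi$ into \emph{orthogonal} computational-basis states of a single control qubit. Recall what $\textsc{Gpe}$ already delivers on an eigenstate $U\ket{\psi}=e^{i\theta}\ket\psi$: for $\theta$ near $0$ the flag qubit is $\beta(\theta)\ket 1$ with $\beta\approx +1$; for $\theta$ near $\pi$ it is $\beta(\theta)\ket 1$ with $\beta\approx -1$; and for $\theta$ in the middle it is $\alpha(\theta)\ket 0$ with $\alpha\approx +1$, with all the ``wrong'' amplitudes controlled by $|\alpha|^2\le 2q$ or $|\beta|^2\le 2q$. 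The near-$0$ and near-$\pi$ cases are therefore already separated by $\textsc{Gpe}$ into edge-vs-middle, and the \emph{only} remaining information distinguishing them is the sign of $\beta$, a phase that is invisible to a computational-basis measurement of the flag alone.

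First I would introduce one control qubit (the first output register, the sole extra qubit) prepared in $\ket +$, and realize $\textsc{CGpe}$ as an interferometer that runs $\textsc{Gpe}$ in superposition over the control: on the unshifted signal $U$ conditioned on $\ket 0_c$, and on a phase-shifted signal conditioned on $\ket 1_c$, where the shift is injected for free by modifying the QSP phases inside $\textsc{Gpe}$ and so costs no extra queries. Shifting exchanges the roles of the near-$0$ and near-$\pi$ regions, so the relative sign between the two control branches now carries the information that $\textsc{Gpe}$ alone buried in $\mathrm{sign}(\beta)$. I would then interleave single-qubit Cliffords on the control with a controlled phase between the control and the $\textsc{Gpe}$ flag, arranged so that (i) in the middle region, where the flag is $\ket 0$ in both branches, the control picks up a relative sign and rotates to $\ket -$, producing the term $\gamma(\theta)\ket{-0}$, and (ii) in the edge region, where the flag is $\ket 1$, the buried sign is mapped onto the computational basis of the control, giving $\alpha_0(\theta)\ket{01}$ near $0$ and $\alpha_1(\theta)\ket{11}$ near $\pi$. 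Since this calls $\textsc{Gpe}$ a constant number of times, the query count remains $Q=\mathcal O(\tfrac1\epsilon\log\tfrac1q)$ with $\mathcal O(Q)$ two-qubit gates and no ancilla beyond the control.

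The analysis is then routine propagation of the three-regime bounds of \Cref{thm:gapped_phase_estimation} through a constant-depth, query-free post-processing. In each regime the target amplitude is built from the ``correct'' large amplitude ($|\beta\mp 1|\le q$ in the edges, $|\alpha-1|\le q$ in the middle) while the off-terms are built from the small quantities ($|\alpha|^2\le 2q$ or $|\beta|^2\le 2q$), so a few triangle inequalities give $1-|\alpha_0(\theta)|^2\le q$ for $\theta$ near $0$ and $1-|\alpha_1(\theta)|^2\le q$ for $\theta$ near $\pi$, after absorbing absolute constants into $q$ (which changes $Q$ only by a constant factor). Normalization $|\alpha_0|^2+|\alpha_1|^2+|\gamma|^2=1$ is automatic from unitarity. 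Finally, since the statement is per eigenstate, it extends by linearity to the walk-operator superposition $\tfrac1{\sqrt2}(\ket{\psi_{j+}}+\ket{\psi_{j-}})$ that motivates the corollary.

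The main obstacle is the sign bookkeeping in the edges: $\textsc{Gpe}$ renders near-$0$ and near-$\pi$ only as $\pm\ket 1$, and the most obvious reference---a symmetric $\pi$-shift, i.e.\ running the second branch on $-U$---fails, because it flips \emph{both} edge signs identically and therefore separates the two branches by a mere global phase rather than into orthogonal control states. The construction must break this symmetry (through an asymmetric reference or a flag-controlled phase) while simultaneously keeping the middle region pinned to $\ket{-0}$ and keeping the shifted branch inside the valid domain $0<\epsilon\le\theta_0\le\theta_0+\epsilon\le\tfrac\pi2$ required by \Cref{thm:gapped_phase_estimation}. Making all three regimes come out right at once, each with $\mathcal O(q)$ error, is the delicate step; everything else is bookkeeping.
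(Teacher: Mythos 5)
You have correctly identified the central obstacle---$\textsc{Gpe}$ encodes the near-$0$ versus near-$\pi$ distinction only in the sign of $\beta(\theta)$, a relative phase that must be converted into orthogonal computational-basis states of a control qubit by interference against some reference---but your proposal does not actually resolve it. You propose running $\textsc{Gpe}$ in both branches of the control (on $U$ in one branch and on a phase-shifted signal in the other), then observe that the obvious symmetric shift fails because it flips both edge signs identically, and finally defer the fix to an unspecified ``asymmetric reference or flag-controlled phase,'' conceding that making all three regimes work simultaneously is ``the delicate step.'' That delicate step is precisely the content of the corollary, so as written the construction is incomplete: no concrete circuit is exhibited, and no error bound of the form $1-|\alpha_0(\theta)|^2\le q$ is derived.

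The paper's construction shows the difficulty dissolves once you choose the reference to be the \emph{identity} rather than a second, shifted run of $\textsc{Gpe}$. Define $\textsc{CGpe}_{\epsilon,\theta_0,q'}:=\ket{0}\bra{0}\otimes\one+\ket{1}\bra{1}\otimes\textsc{Gpe}_{\epsilon,\theta_0,q'}$ and apply it to $\ket{+}\ket{1}\ket{\psi}$. Because the $\textsc{Gpe}$ flag is initialized in $\ket{1}$, the idle control-$\ket{0}$ branch contributes a fixed amplitude $+1$ on $\ket{1}$, while the active branch contributes $\beta(\theta)$ on $\ket{1}$ (and $\alpha(\theta)$ on $\ket{0}$); a single Hadamard on the control then interferes these to give amplitudes $\tfrac{1+\beta(\theta)}{2}$ on $\ket{01}$ and $\tfrac{1-\beta(\theta)}{2}$ on $\ket{11}$, with the middle-regime weight collected on $\tfrac{1}{\sqrt{2}}\alpha(\theta)\ket{-0}$. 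Since $|\beta(\theta)\mp1|\le q'$ in the two edge regimes and $|\alpha(\theta)|^2\le 2q'$ there, the failure probability is at most $\tfrac{q'^2}{4}+q'\le\tfrac{5}{4}q'\le q$ for $q'=\tfrac{4}{5}q$. This uses one controlled call to $\textsc{Gpe}$, needs no shifted signal (so your worry about keeping a shifted branch inside the domain of \cref{thm:gapped_phase_estimation} is moot), no controlled phase between control and flag, and no extra qubits. If you want to salvage your route, the repair is to replace your ``phase-shifted $\textsc{Gpe}$'' reference branch with the trivial one; without that (or an equivalent explicit symmetry-breaking choice), the proof does not go through.
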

\begin{proof}
Let $\textsc{CGpe}_{\epsilon,\theta_0,{q}'}:= \ket{0}\bra{0}\otimes \one+\ket{1}\bra{1}\otimes\textsc{Gpe}_{\epsilon,\theta_0,{q}'}$ where $\textsc{Gpe}_{\epsilon,\theta_0,{q}}$ is from~\cref{thm:gapped_phase_estimation}.
Then
\begin{align}\nonumber
&\textsc{Cgpe}_{\epsilon,\theta_0,{q}'}\frac{1}{\sqrt{2}}(\ket{0}+\ket{1})\ket{1}\ket{\psi}
=\left[\frac{\ket{0}+\beta(\theta)\ket{1}}{\sqrt{2}}\right]\ket{1}\ket{\psi}+\frac{1}{\sqrt{2}}\alpha(\theta)\ket{1}\ket{0}\ket{\psi}.
\end{align}
Now apply the Hadamard gate to obtain the state
\begin{align}
\ket{\chi(\theta)}=\left[\frac{1+\beta(\theta)}{2}\ket{0}+\frac{1-\beta(\theta)}{2}\ket{1}\right]\ket{1}\ket{\psi}+\frac{1}{\sqrt{2}}\alpha(\theta)\ket{-}\ket{0}\ket{\psi}.
\end{align}
If $|\theta|\in[0,\theta_0-\epsilon]$, then from~\cref{eq:gpe_thm_theta_domain_1}, the probability that we do not measure $\ket{01}$ is at most 
\begin{align}
1-|\alpha_0(\theta)|^2
&=\frac{|1-\beta(\theta)|^2}{4}+\frac{|\alpha(\theta)|^2}{2}
\le\frac{{q}^{\prime2}}{4}+{q}'\le\frac{5}{4}{q}'\le{q},
\end{align}
where we choose ${q}'=\frac{4}{5}{q}$.
Similarly, using~\cref{eq:gpe_thm_theta_domain_2} if $|\theta-\pi|\in[0,\theta_0-\epsilon]$, the probability that we do not measure $\ket{11}$ is also at most ${q}$.
\end{proof}

We are now ready to prove the general result of this section, giving the first row of \Cref{tab:results}.

\begin{theorem}[Energy estimation by spectral amplification with no prior]\label{lem:fast_estimation_single_term_no_prior}
Let $H\in\mathbb{C}^{N\times N}$ be a Hamiltonian and assume access to the block-encoding $\Be[H/\lambda]\in\mathbb{C}^{D\times D}$.
Let $\ket \psi \in\mathbb{C}^{N}$ be prepared by the state preparation unitary $P$, such that $P\ket{0}=\ket \psi$.
Then $E=\bra{\psi}H \ket \psi$ can be estimated to any additive error $\epsilon$ and confidence $1-{q}$ using 
$Q=\mathcal{O}(\frac{\sqrt{\max\{\epsilon,\lambda-|E|\}\lambda}}{\epsilon}\log\frac{1}{{q}})$ queries to the block-encoding, state preparation unitary, and their inverses, two ancillary qubits and $\mathcal{O}(Q\log(D/N))$ arbitrary two-qubit gates.
\end{theorem}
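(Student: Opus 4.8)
The plan is to reuse the walk-operator reduction from \cref{lem:fast_estimation_single_term} and replace its single, precision-$\epsilon_{\mathrm{PEA}}$ phase estimation by a two-stage \emph{adaptive} procedure that learns the scale of $\lambda-|E|$ on the fly. As there, set $U:=\Be[E/\lambda]=(\one_{\rm a}\otimes P^\dagger)\Be[H/\lambda](\one_{\rm a}\otimes P)$ and let $W=\textsc{Ref}\cdot U$ be the associated walk operator, which by \cref{lem:qubitization} has eigenphases $\pm\phi$ with $\phi=\arccos(E/\lambda)$ on the two branches that $\ket0$ excites in equal superposition. Assuming $E\le0$ by symmetry gives $\phi\in[\pi/2,\pi]$, and writing $\delta=\pi-\phi$ the identity $\lambda-|E|=\lambda(1+\cos\phi)=2\lambda\sin^2(\delta/2)\simeq\tfrac12\lambda\delta^2$ shows $\delta\simeq\sqrt{2(\lambda-|E|)/\lambda}$ and that an estimate of $\phi$ to additive error $\epsilon_{\mathrm{PEA}}$ propagates to an estimate of $E$ to additive error $\epsilon\simeq\lambda\sin\phi\,\epsilon_{\mathrm{PEA}}\simeq\sqrt{2\lambda(\lambda-|E|)}\,\epsilon_{\mathrm{PEA}}$. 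The target phase precision is therefore $\epsilon_{\mathrm{PEA}}=\Theta\!\big(\epsilon/\sqrt{\lambda(\lambda-|E|)}\big)$, which cannot be chosen in advance since $\lambda-|E|$ is unknown.

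\emph{Coarse stage.} First I would run phase estimation on $W$ (via \cref{thm:block_encoding_confidence}, or a short gapped-search using \cref{cor:ctrl_gapped_phase_estimation}) to additive phase precision $\epsilon'=\Theta(\sqrt{\epsilon/\lambda})$ and confidence $1-q/2$, at cost $\mathcal{O}(\sqrt{\lambda/\epsilon}\,\log\frac1q)$. I would then establish a clean dichotomy for the resulting estimate $\hat\delta$: either it certifies $\delta=\mathcal{O}(\sqrt{\epsilon/\lambda})$, i.e.\ $0\le\lambda-|E|\lesssim\epsilon$, in which case the algorithm halts and returns $-\lambda$, which is within $\epsilon$ of $E$; or $\delta\gtrsim\sqrt{\epsilon/\lambda}$, so that $\epsilon'$ is at most a constant fraction of $\delta$ and hence $\widehat{\lambda-|E|}\simeq\tfrac12\lambda\hat\delta^2$ is accurate to a constant multiplicative factor. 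Because $\sqrt{\lambda/\epsilon}=\sqrt{\lambda\epsilon}/\epsilon\le\sqrt{\lambda(\lambda-|E|)}/\epsilon$ whenever $\lambda-|E|>\epsilon$, this coarse cost is always subdominant to the claimed budget $\mathcal{O}(\sqrt{\max\{\epsilon,\lambda-|E|\}\lambda}/\epsilon\cdot\log\frac1q)$, and it saturates it exactly in the terminating case.

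\emph{Fine stage.} Knowing $\lambda-|E|$ to a constant factor fixes both the final precision $\epsilon_{\mathrm{PEA}}$ and the number of refinement steps $i_{\max}=\Theta\!\big(\log((\lambda-|E|)/\epsilon)\big)$ up to an additive constant. I would then refine $\phi$ by a binary search through precisions $\epsilon'_i=\epsilon'2^{-i}$, $i=1,\dots,i_{\max}$, each step a controlled gapped phase estimation (\cref{cor:ctrl_gapped_phase_estimation}) with per-step failure probability $q_i$. The crucial point, and the reason the coarse stage is needed, is the failure-budget allocation: choosing $q_i\propto q\,2^{-(i_{\max}-i)}$ keeps $\sum_i q_i\le q/2$ by a union bound while forcing the dominant late steps to have $\log\frac1{q_i}=\mathcal{O}(\log\frac1q)$. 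Since step $i$ costs $\mathcal{O}(\tfrac{2^i}{\epsilon'}\log\frac1{q_i})$ and the $2^i$ weight makes the geometric sum dominated by $i=i_{\max}$, the total fine-stage cost collapses to $\mathcal{O}(\tfrac1{\epsilon_{\mathrm{PEA}}}\log\frac1q)=\mathcal{O}(\sqrt{\lambda(\lambda-|E|)}/\epsilon\cdot\log\frac1q)$. Adding the two stages gives the stated $\mathcal{O}(\sqrt{\max\{\epsilon,\lambda-|E|\}\lambda}/\epsilon\cdot\log\frac1q)$, with the two ancillae and $\mathcal{O}(Q\log(D/N))$ two-qubit gates inherited from the walk-operator construction of \cref{lem:fast_estimation_single_term}.

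The main obstacle I anticipate is precisely this fine-stage budgeting: a flat allocation $q_i=q/i_{\max}$, or any allocation fixed without knowledge of $i_{\max}$, reintroduces a $\log i_{\max}=\log\log(\lambda/\epsilon)$ factor into the dominant step, so the whole argument hinges on using the coarse estimate to weight the budget geometrically toward the last iterations. Secondary care will be needed to (i) make the $\arccos$ error propagation rigorous uniformly down to the regime $\lambda-|E|\lesssim\epsilon$, where the linearization $\sin\phi\simeq\delta$ degrades and one must instead argue directly from $\lambda-|E|=2\lambda\sin^2(\delta/2)$, and (ii) handle the equal $\pm\phi$ superposition produced by $W$, which is exactly what the controlled gapped phase estimation of \cref{cor:ctrl_gapped_phase_estimation} is designed to resolve.
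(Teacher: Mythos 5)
Your proposal is correct and follows essentially the same route as the paper's proof: the same walk-operator reduction, the same two-stage adaptive scheme (a coarse phase estimation at precision $\Theta(\sqrt{\epsilon/\lambda})$ that either terminates with $E\approx-\lambda$ or yields a constant-multiplicative estimate of $\lambda-|E|$, followed by a gapped-phase-estimation binary search whose iteration count and failure budget are set from that estimate), and the same key observation that the per-step confidences must be weighted toward the late iterations to avoid the $\log\log(\lambda/\epsilon)$ overhead. The only cosmetic differences are your halving schedule and geometric failure allocation versus the paper's thirds ($r=2/3$) and inverse-square allocation $q_i\propto q/(i_{\max}-i+1)^2$, both of which achieve the same $\mathcal{O}(\log\frac1q)$ total.
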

\begin{proof}
In~\cref{lem:fast_estimation_single_term}, we constructed a unitary walk operator $W$.
Our discussion is made clearer by instead considering the walk operator $-W$, which has eigenvectors $\ket{\pm}$ and eigenphases $\theta_\pm:=\pi\mp \arccos(\alpha)$, where $\alpha:=\frac{E}{\lambda}\in[-1,1]$.
We find it convenient to define $o:=\alpha+1=E/\lambda+1\in[0,2]$. 
We can choose the principle range of $\theta_\pm$ to be symmetric about $0$.
Hence $\theta_\pm=\pm\theta\in\pm[0,\pi]$, that is, $\theta:=|\theta_+|=|\theta_-|$, and $\theta-\pi/2$ is symmetric around $\alpha=0$.
We search for $\theta$ using gapped phase estimation~\cref{thm:gapped_phase_estimation} in two steps.
In the first step, we assume that $\alpha\in[-1,0]\Rightarrow\theta\in[0,\pi/2]$ and estimate $\theta$ to additive error $\epsilon'$. 
As the bounds on the probabilities $|\alpha(\theta)|^2, |\beta(\theta)|^2$ of gapped phase estimation are also symmetric about $\pi/2$, gapped phase estimation on $-W$ cannot distinguish between cases $\theta$ or $\pi-\theta$.
However, the sign of $\beta(\theta)\approx -\beta(\pi-\theta)$ is sensitive to these cases.
In the second step, we therefore distinguish between cases using controlled-gapped phase estimation~\cref{cor:ctrl_gapped_phase_estimation}.

Let the iterations of the search be indexed by $i=0,\cdots,i_\text{max}-1$, for some $i_\text{max}>0$.
At each iteration, $\theta$ is known to be contained in an interval $\mathcal{I}_i:=[\mathcal{I}_{i,\text{l}},\mathcal{I}_{i,\text{r}}]$ with high probability $p_i$. 
Hence, $\theta\in\mathcal{I}_0=[0,\pi/2]$ with probability $p_0=1$.
At each iteration $i\ge0$, we split $\mathcal{I}_i$ into thirds
\begin{align}
\mathcal{I}_{i,\uparrow}&:=\left[\mathcal{I}_{i,\text{l}}+\frac{1}{3}|\mathcal{I}_{i}|,\mathcal{I}_{i,\text{r}}\right],
\quad
\mathcal{I}_{i,\downarrow}:=\left[\mathcal{I}_{i,\text{l}},\mathcal{I}_{i,\text{r}}-\frac{1}{3}|\mathcal{I}_{i}|\right],
\end{align}
and we will assign $\mathcal{I}_{i+1}$ to be either $\mathcal{I}_{i,\uparrow}$ or $\mathcal{I}_{i,\downarrow}$.
Note that the width $|\mathcal{I}_{i}|=\frac{\pi}{2}r^{i}$, where $r=2/3$.

We determine this assignment using gapped phase estimation $\textsc{Gpe}_{\varphi_i,\theta_i,{q}_i}$, with $\theta_i=\frac{\mathcal{I}_{i,\text{l}}+\mathcal{I}_{i,\text{r}}}{2}$ at the midpoint of $\mathcal{I}_i$ and $\varphi_i=\frac{1}{6}|\mathcal{I}_{i}|$.
From~\cref{thm:gapped_phase_estimation}, this prepares a state 
\begin{align}
\textsc{Gpe}_{\varphi_i,\theta_i,{q}_i}\ket{0}\frac{\ket{+}+\ket{-}}{2}=\frac{1}{\sqrt{2}}\sum_{x\in\{+,-\}}(\alpha(\pm\theta)\ket{0}+\beta(\pm\theta)\ket{1})\ket{x}.
\end{align}
We measure the $\{\ket{0},\ket{1}\}$ register to obtain outcome $\ket{m}$. Then we set
\begin{align}
\mathcal{I}_{i+1}=\begin{cases}
\mathcal{I}_{i,\downarrow},\quad m=0,\\
\mathcal{I}_{i,\uparrow},\quad m=1.
\end{cases}
\end{align}
The estimate of $\theta$ in $\mathcal{I}_{i}$ converts to an estimate of 
\begin{align}
\alpha&\in-\cos(\mathcal{I}_{i})=[-\cos(\mathcal{I}_{i,\text{l}}),-\cos(\mathcal{I}_{i,\text{r}})]
=
[-\cos(\theta_{i}-|\mathcal{I}_{i}|/2),-\cos(\theta_{i}+|\mathcal{I}_{i}|/2)],
\\\label{eq:fast_expectation_estimation_epsilon_1}
\epsilon_i&=|\cos(\mathcal{I}_{i})|=2\sin(\theta_i)\sin(|\mathcal{I}_i|/2),
\\
E&\in\mathcal{H}_i=[\mathcal{H}_{i,\text{l}},\mathcal{H}_{i,\text{r}}],\quad \mathcal{H}_i:= -\lambda\cos\mathcal{I}_{i},
\\
\epsilon_i'&:=|\mathcal{H}_i|=\lambda\epsilon_i,
\end{align}
where $\epsilon_i$ and $\epsilon_i'$ is the additive error to which $\alpha$ and $E$ is known respectively.

The probability that this assignment is incorrect, that is $\theta\notin\mathcal{I}_{i+1}$ is given by the maximum of the probabilities
\begin{align}
\mathrm{Pr}[m=1|\theta\in\mathcal{I}_{i,\uparrow}\backslash\mathcal{I}_{i,\downarrow}]&=\frac{|\alpha(\theta)|^2+|\alpha(-\theta)|^2}{2}\le{q}_i,
\\
\mathrm{Pr}[m=0|\theta\in\mathcal{I}_{i,\downarrow}\backslash\mathcal{I}_{i,\uparrow}]&=\frac{|\beta(\theta)|^2+|\beta(-\theta)|^2}{2}\le{q}_i.
\end{align}
We choose ${q}_{i}=\frac{6}{\pi^2}\frac{{q}}{(i_\text{max}-i+1)^{2}}$.
Then by a union bound, the failure probability of amplitude estimation after all $i_\text{max}$ steps is at most
\begin{align}
1-p_{i_\text{max}}\le\sum_{i=1}^{i_\text{max}}{q}_{i}\le\sum_{l=1}^{\infty}\frac{6}{\pi^2}\frac{{q}}{i^{2}}={q}.
\end{align}
The query complexity of all the $\textsc{GPE}$ steps is then
\begin{align}
Q&=\mathcal{O}\left(\sum_{i=0}^{i_{\text{max}}-1}\frac{1}{r^i}\log\frac{1}{{q}_i}\right)
=
\mathcal{O}\left(\sum_{i=0}^{i_{\text{max}}-1}\left(\frac{3}{2}\right)^i\left((i_\text{max}-i+1)+\log\frac{1}{{q}}\right)\right)
=
\mathcal{O}\left(\left(\frac{3}{2}\right)^{i_\text{max}}\log\frac{1}{{q}}\right).
\end{align}
When the search is complete, there are two cases of interest.
\begin{outline}[enumerate]
    \1 {$\mathcal{I}_{i_\text{max},\text{l}}=0$}: This implies that $\alpha\in[-1,-\cos{|\mathcal{I}_{i_\text{max}}|}]$, and that $o$ is small, that is $o\le\epsilon_{i_\text{max}}=2\sin^2(|\mathcal{I}_{i_\text{max}}|/2)\le\frac{\pi^2}{8}r^{2i}$. Note further that $o\le 2\sin^2(|\mathcal{I}_{i_\text{max}}|/2)$ if and only if $\mathcal{I}_{i_\text{max},\text{l}}=0$. 
    \1 {$\mathcal{I}_{i_\text{max},\text{l}}>0$}: This implies that $\alpha\ge-\cos{\mathcal{I}_{i_\text{max},\text{l}}}$ and we have an estimate of $o\ge1-\cos{\mathcal{I}_{i_\text{max},\text{l}}}$ that is bounded away from zero.
    Let $i_{\text{f}}$ be the first iteration where $\mathcal{I}_{i_\text{f},l}>0$. 
    Then $\mathcal{I}_{i_\text{f},\text{l}}=\frac{1}{2}|\mathcal{I}_{i_{\text{f}}}|$ and 
    \begin{align}
    \theta&\in\left[\frac{1}{2}|\mathcal{I}_{i_{\text{f}}}|,\frac{3}{2}|\mathcal{I}_{i_{\text{f}}}|\right]=\left[\frac{1}{2},\frac{3}{2}\right]\frac{\pi}{2}r^{i_\text{f}},\\
    o&\in\left[1-\cos\left(\frac{1}{2}|\mathcal{I}_{i_{\text{f}}}|\right),1-\cos\left(\frac{3}{2}|\mathcal{I}_{i_{\text{f}}}|\right)\right],
    \\
    \epsilon_{i_\text{f}}&=2\sin\left(|\mathcal{I}_{i_{\text{f}}}|\right)\sin\frac{|\mathcal{I}_{i_{\text{f}}}|}{2}\le4\sin^2\frac{|\mathcal{I}_{i_{\text{f}}}|}{2}=4\left(1-\cos^2\frac{|\mathcal{I}_{i_{\text{f}}}|}{2}\right)\le8\left(1-\cos\frac{|\mathcal{I}_{i_{\text{f}}}|}{2}\right)\le 8o
    \end{align}
    As $\epsilon_{i}$ decreases monotonically with $i$, this implies that we obtain a estimate of $o$ to constant multiplicative error.
    For any later iteration $i>i_\text{f}\ge0$, observe that $|\mathcal{I}_{i_\text{f}}|\le\frac{\pi}{2}\frac{2}{3}$. Hence $\sin\frac{|\mathcal{I}_{i_\text{f}}|r^{i-i_\text{f}}}{2}\le \frac{\pi}{3} r^{i-i_\text{f}}\sin\frac{|\mathcal{I}_{i_\text{f}}|}{2}$, $\sin\frac{3|\mathcal{I}_{i_\text{f}}|}{2}\le \frac{3}{2}\sin{|\mathcal{I}_{i_\text{f}}|}$, and
    \begin{align}\nonumber
    \epsilon_{i}&=2\sin\left(\theta_i\right)\sin\frac{|\mathcal{I}_{i}|}{2}
    =2\sin\left(\theta_i\right)\sin\frac{|\mathcal{I}_{i_\text{f}}|r^{i-i_\text{f}}}{2}
    \le
    2\sin\frac{3|\mathcal{I}_{i_\text{f}}|}{2}\sin\frac{|\mathcal{I}_{i_\text{f}}|r^{i-i_\text{f}}}{2}\\
    &
    \le
    \pi r^{i-i_\text{f}}\sin|\mathcal{I}_{i_\text{f}}|\sin\frac{|\mathcal{I}_{i_\text{f}}|}{2}
    =\pi r^{i-i_\text{f}}\epsilon_{i_\text{f}}\le8\pi r^{i-i_\text{f}}o.
    \end{align}
\end{outline}

We now evaluate the final error $\epsilon_{i_\text{max}}'$ of the estimate of $E$ for some choice of $i_\text{max}$.
\begin{outline}[enumerate]
\1 
Let $i_\text{max,1}:=\lceil\frac{1}{2}\log_{1/r}(\frac{\pi^2\lambda}{8\epsilon'})\rceil$. Choose $i_\text{max}=i_\text{max,1}+d$, where $d=\lceil\log_{1/r}(16\pi)\rceil$, and ${q}=\Delta_1$. The query complexity is
\begin{align}
Q_1=\mathcal{O}\left(\frac{\sqrt{\lambda}}{\sqrt{\epsilon'}}\log\frac{1}{\Delta_1}\right),
\end{align}
and we show that we either obtain an estimate of $E$ to error at most $\epsilon'$, or obtain an estimate of $(\lambda-|E|)\in[0,\lambda]$ to at most a constant multiplicative error of $\frac{1}{2}$ as follows.
\2 Case $\mathcal{I}_{i_\text{max,1},l}=0$:
\begin{align}
\epsilon'_{i_\text{max,1}}\le\lambda\frac{\pi^2}{8}r^{2i_{\text{max},1}}\le \epsilon'.
\end{align}
\2 Case $\mathcal{I}_{i_\text{max,1},l}>0$: 
\begin{align}
\epsilon'_{i_\text{max,1}+d}\le\lambda8\pi r^{d+i_{\text{max},1}-i_\text{f}}o=8\pi r^{d+i_{\text{max},1}-i_\text{f}}h\le 8\pi r^{d}(\lambda-|E|)\le \frac{1}{2}(\lambda-|E|).
\end{align}
Hence
\begin{align}
&\lambda-|E|\le \lambda-|\mathcal{H}_{i_\text{max},\text{r}}|\le \lambda-|E|+\epsilon'_{i_\text{max}}\le\frac{3}{2}(\lambda-|E|)\le\frac{3}{2}(\lambda-|\mathcal{H}_{i_\text{max},\text{r}}|).
\end{align}
\1 
Let $i_\text{max,2}:=\lceil\log_{1/r}(\frac{\sqrt{(\lambda-|\mathcal{H}_{i_\text{max},\text{r}}|)\lambda}}{c\epsilon'})\rceil$, where $c=\frac{2}{\sqrt{3}\pi}$. Choose $i_\text{max}=i_\text{max,2}$, and ${q}=\Delta_2$. 
The query complexity is
\begin{align}
Q_2=\mathcal{O}\left(\frac{\sqrt{(\lambda-|\mathcal{H}_{i_\text{max},\text{r}}|)\lambda}}{\epsilon'}\log\frac{1}{\Delta_2}\right),
\end{align}
and we obtain an estimate of $E$ to error at most $\epsilon'$ as follows.
\begin{align}\nonumber
\epsilon_{{i_\text{max}}}&=2\sin(\theta_i)\sin(|\mathcal{I}_i|/2)
\le
\sin(\theta_i)\frac{\pi}{2}r^{i_\text{max}}
\le
\sin(\arccos{(-(o+\epsilon_{{i_\text{max}}}-1))})\frac{\pi}{2}r^{i_\text{max}}
\\
&=
\sqrt{(2-o-\epsilon_{{i_\text{max}}}) (o+\epsilon_{{i_\text{max}}} ) }\frac{\pi}{2}r^{i_\text{max}}
\le
\sqrt{2\left(o+\epsilon_\text{max} \right )}
\frac{\pi}{2}r^{i_\text{max}}.
\\\nonumber
\epsilon_{{i_\text{max}}}'&=\lambda\epsilon_{{i_\text{max}}}
\le
\lambda\sqrt{2\left(o+\epsilon_{{i_\text{max}}} \right )}
\frac{\pi}{2}r^{i_\text{max}}
\le
\sqrt{2\lambda}\sqrt{(\lambda-|E|)+\epsilon_{{i_\text{max}}}'}
\frac{\pi}{2}r^{i_\text{max}}
\\
&
\le
c\frac{\pi}{\sqrt{2}}\sqrt{\frac{(\lambda-|E|)+\epsilon_{{i_\text{max}}}'}{\lambda-|\mathcal{H}_{i_\text{max},r}|}}\epsilon'
\le
c\frac{\sqrt{3}\pi}{2}\epsilon'\le \epsilon'.
\end{align}
\end{outline}

If $\mathcal{I}_{i_{{\max},2},\text{l}}\ge1-\epsilon'/2$, the estimate for $E$ in the full range $[-\lambda,\lambda]$ is already correct to error $\epsilon'$ and we terminate the algorithm.
Otherwise, we now have to determine which of $h\in[-\lambda,-\epsilon'/2)$ or $h\in(\epsilon'/2,\lambda]$ is true.
This is accomplished using controlled gapped phase estimation from~\cref{cor:ctrl_gapped_phase_estimation} using $\textsc{CGpe}_{\pi/2-\mathcal{I}_{i_\text{max},\text{r}},\pi/2,\Delta_3}$.
With failure probability at most $\Delta_3$, we measure and obtain $\ket{01}$ if $\theta\in[0,\mathcal{I}_{i_\text{max},\text{r}}]$, and obtain $\ket{11}$ if $\theta\in[\pi-\mathcal{I}_{i_\text{max},\text{r}},\pi]$.
Using the inequality
\begin{align}
\frac{\pi}{2}-\mathcal{I}_{i_\text{max},\text{r}}
&=
\frac{\pi}{2}-\arccos(|\mathcal{H}_{i_\text{max},\text{r}}|/\lambda)
=
\arcsin(|\mathcal{H}_{i_\text{max},\text{r}}|/\lambda)
\ge\frac{|\mathcal{H}_{i_\text{max},\text{r}}|}{\lambda}\ge\frac{|E|}{\lambda}\;.
\end{align}
This has query complexity
\begin{align}
Q_3=\mathcal{O}\left(\frac{\lambda}{\max(\epsilon',|E|)}\log\frac{1}{\Delta_3}\right)=
\begin{cases}
\mathcal{O}(\log\frac{1}{\Delta_3}),&|E|\ge\lambda/2,\\
\mathcal{O}(\frac{\sqrt{\lambda}\sqrt{\lambda-|E|}}{\epsilon'}\log\frac{1}{\Delta_3}),&|E|<\lambda/2.
\end{cases}
\end{align}

We then set $\Delta_1=\Delta_2=\Delta_3={q}'/3$. Then the overall query complexity to estimate $E$ to additive error $\epsilon'$ and failure probability at most ${q}'$ is
\begin{align}
Q=Q_1+Q_2+Q_3=\mathcal{O}\left(\frac{\sqrt{\lambda}}{\epsilon'}\left(\sqrt{\epsilon'}+\sqrt{\lambda-|E|}\right)\log\frac{1}{{q}'}\right).
\end{align}
The gate complexities and qubit overhead follow from the construction of the walk operator and gapped phase estimation procedure.
Finally, observe that $\sqrt{\lambda}\sqrt{\lambda-|E|}\le\sqrt{\lambda+E}\sqrt{\lambda-E}$, and also that $x+y=\mathcal{O}(\max\{x,y\})$.

\end{proof}

As a bonus, we now give a proof of amplified amplitude amplification, which is a special case of~\cref{lem:fast_estimation_single_term_no_prior} instantiated with the spectral amplified block-encoding~\cref{lem:sos_block_encoding}.
This improves on recent work, e.g.~\cite[Lemma 2]{simon2024amplifiedamplitudeestimationexploiting}, which required a known upper $\Delta\ge |\bra{\psi}\Pi|\psi\rangle|$, and had poorer query complexity $\mathcal{O}(\frac{\sqrt{\Delta}}{\epsilon}\log(\frac{1}{{q}})\frac{\Delta}{\Delta-|\bra{\psi}\Pi|\psi\rangle|})$.
\begin{corollary}[Amplified Amplitude Estimation]\label{cor:amplified_amplitude_estimation}
Let the projector $\Pi^2=\Pi$ be block-encoded by $\Be[\Pi]$.
Let the state preparation orcale $P\ket{0}=|\psi\rangle$.
Then $\bra{\psi}\Pi|\psi\rangle$ can be estimated to additive error $\epsilon$ with confidence $1-{q}$ using $\mathcal{O}\left(\frac{\sqrt{\max\{\epsilon,\bra{\psi}\Pi\ket{\psi}(1-\bra{\psi}\Pi\ket{\psi})\}}}{\epsilon}\log\frac{1}{{q}}\right)$ queries to $\Be[\Pi]$, $P$ and their inverse.
\end{corollary}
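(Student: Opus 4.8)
The plan is to obtain this corollary as a direct specialization of the no-prior energy-estimation result \Cref{lem:fast_estimation_single_term_no_prior}, with the Hamiltonian supplied by the spectral-amplified block-encoding of \Cref{lem:sos_block_encoding}. Since $\Pi$ is a projector it is Hermitian and satisfies $\Pi=\Pi^\dagger\Pi$ with $\|\Pi\|=1$, so I would instantiate \Cref{lem:sos_block_encoding} with $H_{\sa}=\Pi$ and $\lambda=1$, identifying the given $\Be[\Pi]$ with $\Be[H_{\sa}/\sqrt{\lambda}]$. This produces, from one query to $\Be[\Pi]$ and its inverse plus $\mathcal{O}(1)$ extra gates, the self-inverse block-encoding $\Be[2\Pi-\one]$ of the reflection $\tilde H:=2\Pi-\one$ (indeed $(2\Pi-\one)^2=\one$), whose spectrum lies in $\{-1,+1\}\subseteq[-1,1]$.

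Next I would run \Cref{lem:fast_estimation_single_term_no_prior} on $\tilde H$ with normalization $\lambda_{\rm eff}=1$ and state preparation $P$, which estimates $E=\bra{\psi}\tilde H\ket{\psi}=2a-1$, writing $a:=\bra{\psi}\Pi\ket{\psi}\in[0,1]$. The only bookkeeping is a rescaling: an additive error $\epsilon$ on $a$ corresponds to an additive error $2\epsilon$ on $E$, while the theorem's low-energy parameter is $\lambda_{\rm eff}-|E|=1-|2a-1|=2\min\{a,1-a\}$. Substituting into the theorem's complexity $\mathcal{O}\big(\sqrt{\max\{2\epsilon,\lambda_{\rm eff}-|E|\}\,\lambda_{\rm eff}}/(2\epsilon)\,\log\tfrac1q\big)$ and using the elementary equivalence $a(1-a)=\Theta(\min\{a,1-a\})$ yields exactly $\mathcal{O}\big(\sqrt{\max\{\epsilon,a(1-a)\}}/\epsilon\,\log\tfrac1q\big)$.

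Two small points complete the argument. First, \Cref{lem:fast_estimation_single_term_no_prior} is stated under the assumption $E\le0$ but is symmetric about $E=0$; the case $a>1/2$ (i.e. $E>0$) is covered by this symmetry, and the resulting parameter $\lambda_{\rm eff}-|E|=2(1-a)$ is again $\Theta(a(1-a))$, so both regimes give the same scaling. Second, since each query to $\Be[2\Pi-\one]$ costs $\mathcal{O}(1)$ queries to $\Be[\Pi]$ and its inverse, the query count is expressed in terms of $\Be[\Pi]$ and $P$ as claimed. I expect no genuine obstacle here: the content is the identification $H_{\sa}=\Pi$ together with the observation that the exact $\arccos$ error-propagation factor $\sqrt{\lambda_{\rm eff}^2-E^2}=2\sqrt{a(1-a)}$ is precisely what the theorem's bound $\sqrt{\lambda_{\rm eff}(\lambda_{\rm eff}-|E|)}$ captures up to a constant; the main care is tracking the factor of $2$ between $a$ and $E$ and confirming $a(1-a)=\Theta(\min\{a,1-a\})$ across the full range of $a$.
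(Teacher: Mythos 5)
Your proposal is correct and follows essentially the same route as the paper: instantiate \Cref{lem:sos_block_encoding} with $H_{\sa}=\Pi$ and $\lambda=1$ to block-encode $2\Pi-\one$, then apply \Cref{lem:fast_estimation_single_term_no_prior} and observe that $\lambda-|E|=1-|2a-1|=\Theta(a(1-a))$ for $a=\bra{\psi}\Pi\ket{\psi}$. The only difference is that you spell out the constant-factor bookkeeping (the factor of $2$ between $a$ and $E$, and the equivalence $a(1-a)=\Theta(\min\{a,1-a\})$) which the paper leaves implicit.
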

\begin{proof}
Let $H=2\Pi/\lambda-\one$, where $\lambda=1$. Using~\cref{lem:sos_block_encoding}, we block-encode $\Be[H]$ using one query to $\Be[\Pi]$ and its inverse. 
As $E=\bra{\psi}H\ket{\psi}=2\bra{\psi}\Pi\ket{\psi}-1$,~\cref{lem:fast_estimation_single_term_no_prior} states that the query complexity to estimate $\bra{\psi}H\ket{\psi}$ to additive error $\epsilon$ and confidence $1-{q}$ is 
\begin{align}
\mathcal{O}\left(\frac{\sqrt{\lambda\max\{\epsilon,\lambda-|E|\}}}{\epsilon}\log\frac{1}{{q}}\right)
=\mathcal{O}\left(\frac{(\sqrt{\max\{\epsilon,\bra{\psi}\Pi\ket{\psi}(1-\bra{\psi}\Pi\ket{\psi})\}}}{\epsilon}\log\frac{1}{{q}}\right)
\end{align}
\end{proof}
%

\subsection{Phase estimation by spectral amplification} \label{sec:sa_phase_estimation}
In this section, we present new quantum algorithms summarized in~\cref{tab:ground_state_estimation} that exploit SA to improve phase estimation of the energy of quantum ground states.
Given a block-encoding $\textsc{Be}[H/\lambda]$ and a unitary $P$ preparing the state $\ket{\psi}$, such that the overlap $p>0$ with the ground state $\ket{\psi_0}$ is known {\em a priori}, the goal is to estimate the corresponding ground state energy $E$, where $H\ket{\psi_0}=E\ket{\psi_0}$, to additive error $\epsilon$ and confidence $1-q$.
In contrast to the typical approach that scale like $\cO(\lambda/\epsilon)$ queries, we show that SA leads to explicit scaling with the improved factor $\sqrt{\lambda(\lambda+E)}\le\lambda$.
Our generalized routines naturally recover previously known results without SA, which is the $|E|\ll\lambda$ limit, in addition to achieving better qubit or query complexities.
Our results in terms of queries to $\textsc{Be}[H/\lambda]$ is most general: If there is an operator $H'=H_{\sa}^\dagger H_{\sa}$ and we are provided the block-encoding $\textsc{Be}[H_{\sa}/\sqrt{\lambda}]$, then~\cref{lem:sos_block_encoding} informs us that we may identify $H/\lambda\leftarrow (2H_{\sa}^\dagger H_{\sa}-\lambda)/\lambda$.
If the ground state energy of $H_{\sa}^\dagger H_{\sa}$ is small, we automatically realize SA as the factor $E_0=\bra{\psi_0}H_{\sa}^\dagger H_{\sa}\ket{\psi_0}-\lambda$ is close to $-\lambda$.
We note that the expectation estimation results of~\cref{sec:sa_energy_estimation} could be reduced to the specific $p=1$ case of this section.
Out algorithms improve on previous work (Row 3 of~\cref{tab:ground_state_estimation}) in a few key areas:
\begin{itemize}
    \item In the case where $\ket{\psi}$ is supported on the eigenstates with energy less than an {\textit{a priori}} known upper bound $\Delta\ge\lambda+E$, our non-adaptive algorithm~\cref{thm:sa_phase_estimation} returns an estimate of the energy of one of these eigenstates to additive error $\epsilon$ using $\mathcal{O}(\sqrt{\lambda \Delta}/\epsilon)$ queries. 
    \item In the case where we specifically want the ground state energy, for which, like previous work, we have no ${\textit{a priori}}$ known upper bound, our adaptive algorithm~\cref{lem:fast_ground_sate_estimation_no_prior} matches the scaling of previous results in the parameters $p,q$, and scale with the improved factor $\sqrt{\lambda\max\{\epsilon,\lambda+E\}}/\epsilon$.
    This factor is upper bounded by previous results, e.g. the $|E|\ll\lambda$ case. Moreover, it exhibits novel super-Heisenberg scaling like $\mathcal{O}(\sqrt{\lambda/{\epsilon}})$ when $\epsilon=\Theta(\lambda+E)$, a result which was previously unknown.
    \item Through the use of improved gapped phase estimation techniques~\cref{thm:gapped_phase_estimation}, we reduce qubit overhead from a logarithimic factor to just a constant $2$. 
    This could be relevant in practical implementations of the algorithm.
\end{itemize}

At a high-level, the proof our non-adaptive algorithm~\cref{thm:sa_phase_estimation} is very similar to that of~\cref{lem:fast_estimation_single_term} -- perform phase estimation to accuracy $\epsilon'=\lambda/\epsilon$, and propagate the $\arccos$ nonlinearity with the {\textit{a priori}} known upper bound $\Delta$.
The proof of the adaptive algorithm~\cref{lem:fast_ground_sate_estimation_no_prior} also mirrors that of~\cref{lem:fast_estimation_single_term_no_prior} in that it uses binary search by multiple iterations of gapped phase estimation, and also performs gapped phase estimation in two steps: First to a phase estimation accuracy of $\epsilon'=\sqrt{\epsilon/\lambda}$, which is guaranteed to either tell us that $E$ is $-\lambda$ to error $\epsilon$ and so we terminate the algorithm, or give us an estimate $\lambda+\hat{E}$ of $\lambda+E$ to constant multiplicative error. Second, use the estimate $\hat{E}$ to choose the number of additional phase estimation and their accuracy and confidence parameters.

\begin{table}[H]
\centering
\begin{tabular}{|c|c|c|c|c|c|}
\hline\hline
\multirow{2}{*}{Year}&\multirow{2}{*}{Reference} & \multicolumn{2}{c|}{Query complexity $\mathcal{O}(\cdot)$}  & Extra &   \\
\cline{3-4}
&&Block-encoding& State preparation&qubits& Comments\\
\hline
2017&\cite{ge2018fastergroundstatepreparation}&$\frac{\lambda^{3/2}}{\sqrt{p}\epsilon^{3/2}}\log^{\Theta(1)}\left(\frac{\lambda}{p\epsilon q}\right)$&$\frac{1}{\sqrt{p}}\sqrt{\frac{\lambda}{\epsilon}}\log^{\Theta(1)}\left(\frac{\lambda}{p\epsilon q}\right)$ &$\mathcal{O}\left(\log{\left(\frac{1}{\epsilon}\right)}\right)$&
\\\hline
2020&\cite{Lin2020nearoptimalground}&$\frac{\lambda}{\sqrt{p}\epsilon}\log\left(\frac{\lambda}{\epsilon}\right)\log\left(\frac{1}{p}\right)\log\left(\frac{\log(\lambda/\epsilon)}{q}\right)$&$\frac{1}{\sqrt{p}}\log\left(\frac{\lambda}{\epsilon}\right)\log\left(\frac{\log(\lambda/\epsilon)}{q}\right)$ & $\mathcal{O}\left(\log\left({\frac{1}{p}}\right)\right)$&
\\\hline
2024&\cite{berry2024rapidinitialstatepreparation}&$\frac{\lambda}{\sqrt{p}\epsilon}\log\left(\frac{1}{p}\right)\log\left(\frac{1}{q}\right)$&$\frac{1}
{\sqrt{p}}\log\left(\frac{\lambda}{\epsilon}\right)\log\left(\frac{\log(\lambda/\epsilon)}{q}\right)$& $\mathcal{O}\left(\log{\left(\frac{\lambda\log(1/q)}{\epsilon p}\right)}\right)$&
\\\hline
2025&\cite{low2025fast},\cref{thm:sa_phase_estimation} &$\frac{\sqrt{\Delta\lambda}}{\epsilon}\log\left(\frac{1}{q}\right)$&$1$ & $\mathcal{O}\left(\log(1/\epsilon)\right)$& $((\lambda+E)\;\text{of}\;\ket{\psi})\le\Delta$, $p=1$.
\\\hline
2025&\cref{lem:fast_ground_sate_estimation_no_prior} &$\frac{\sqrt{\max\{\epsilon,\lambda+E\}\lambda}}{\sqrt{p}\epsilon}\log\left(\frac{1}{p}\right)\log\left(\frac{1}{q}\right)$&$\frac{1}
{\sqrt{p}}\log\left(\frac{\lambda}{\epsilon}\right)\log\left(\frac{\log(\lambda/\epsilon)}{q}\right)$ & 2&
\\
\hline\hline
\end{tabular}
\caption{\label{tab:ground_state_estimation}Cost of estimating an {\textit a priori} unknown ground state energy $E$ of $H$ to additive error $\epsilon$ and failure probability $q$, given query access to the block-encoding of $\textsc{Be}[H/\lambda]$ and a state preparation unitary for a trial state with overlap $\sqrt{p}$ with the ground state.}
\end{table}

First we consider phase estimation, the result in the first line of \Cref{tab:results}. This is the key result which enabled the improved quantum chemistry compilations in Ref.~\cite{low2025fast}.
The following proof is based on ${\bf H}_{\sa}$ for simplicity; a similar proof follows using $H_{\sa}$ combined with~\cref{lem:sos_block_encoding}.

\begin{theorem}[Phase estimation with SA] \label{thm:sa_phase_estimation}
    Let $H \succeq 0$ be a Hamiltonian of the form given in \Cref{eq:sa_form}. Let $\ket \psi$ be a low-energy eigenstate supported on the subspace of energy of $H$ at most $\Delta > 0$. We can perform phase estimation of $H$ on the state $\ket \psi$ to additive precision $\epsilon$ with $\mathcal{O}(\sqrt{\Delta \lambda} / \epsilon)$ calls to $\Be[H_{\sa} / \sqrt{\lambda}]$, the block-encoding of $H_{\sa}$, and its inverse.
\end{theorem}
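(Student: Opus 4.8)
The plan is to reduce phase estimation of $H$ to ordinary phase estimation of a qubitization walk operator whose eigenphases encode $\arccos$ of the \emph{square roots} of the eigenvalues of $H$, and then to invert this nonlinearity. Since $H = H_{\sa}^\dagger H_{\sa}$ by \Cref{eq:samainproperty}, the Hermitian dilation $\mathbf{H}_{\sa}$ of \Cref{eq:HSAHermitian} has nonzero eigenvalues $\pm\sqrt{E_j}$, where $E_j \ge 0$ are the eigenvalues of $H$; its eigenvectors at $\pm\sqrt{E_j}$ are $\tfrac{1}{\sqrt 2}(\ket{0}_{\rm a}\ket{\xi_j} \pm \ket{1}_{\rm a}\ket{u_j})$, where $H\ket{\xi_j}=E_j\ket{\xi_j}$ and $\ket{u_j}\propto H_{\sa}\ket{\xi_j}$. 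First I would assemble $\Be[\mathbf{H}_{\sa}/\sqrt\lambda]$ from the given $\Be[H_{\sa}/\sqrt\lambda]$ and its inverse with $\mathcal{O}(1)$ overhead via the standard Hermitian dilation, and then form the walk operator $W$ of \Cref{lem:qubitization}. By that lemma, $W$ has eigenphases $\pm\arccos(\mu)$ for every eigenvalue $\mu=\pm\sqrt{E_j/\lambda}$ of $\mathbf{H}_{\sa}/\sqrt\lambda$.

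Second, I would fix the input. Embedding the low-energy state as $\ket{0}_{\rm a}\ket{\psi}$ and expanding $\ket{\psi}=\sum_j c_j\ket{\xi_j}$ with $c_j=0$ whenever $E_j>\Delta$, the identity $\ket{0}_{\rm a}\ket{\xi_j}=\tfrac{1}{\sqrt2}(\ket{\phi_{j+}}+\ket{\phi_{j-}})$ shows that running high-confidence phase estimation (\Cref{thm:block_encoding_confidence}) on $W$ with this input returns, with probability $|c_j|^2$, an estimate $\hat\phi$ of a phase $\phi$ with $\cos\phi=\sqrt{E_j/\lambda}$, i.e. $E_j=\lambda\cos^2\phi$. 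The $\pm$ branch and the $\phi\mapsto\pi-\phi$ ambiguities are harmless, since $\cos^2$ is insensitive to both, so no branch disambiguation is needed to recover the energy, and I may take $\phi\in[0,\pi/2]$.

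Third---and this is the crux---I would run the phase estimation to additive accuracy $\epsilon_{\mathrm{PEA}}$ in $\phi$ and return $\hat E=\lambda\cos^2\hat\phi$, propagating the error through the squared cosine. Writing $|\hat E-E_j|=\lambda|\cos\hat\phi-\cos\phi|\,|\cos\hat\phi+\cos\phi|$ and using the Lipschitz bound $|\cos\hat\phi-\cos\phi|\le|\hat\phi-\phi|\le\epsilon_{\mathrm{PEA}}$ together with the \emph{low-energy} guarantee $\cos\phi=\sqrt{E_j/\lambda}\le\sqrt{\Delta/\lambda}$, I would obtain
\begin{align}
|\hat E-E_j|\le 2\sqrt{\Delta\lambda}\,\epsilon_{\mathrm{PEA}}+\lambda\,\epsilon_{\mathrm{PEA}}^2.
\end{align}
Choosing $\epsilon_{\mathrm{PEA}}=\Theta(\epsilon/\sqrt{\Delta\lambda})$ makes the first term $\mathcal{O}(\epsilon)$; the quadratic term is subleading in the relevant regime $\epsilon=\mathcal{O}(\Delta)$ (where $\epsilon_{\mathrm{PEA}}\le\sqrt{\Delta/\lambda}$), so $|\hat E-E_j|\le\epsilon$. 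The expected obstacle is precisely this propagation: the naive derivative estimate $|dE|\approx\lambda|\sin(2\phi)|\,|d\phi|=2\sqrt{E_j(\lambda-E_j)}\,|d\phi|$ must be upgraded to a rigorous finite-difference bound valid \emph{uniformly} over the whole window $E_j\le\Delta$ while controlling the quadratic correction. Here $\Delta\le\lambda$ automatically (as $\lambda\ge\|H\|$), and it is exactly the smallness of $\cos\phi$ that drives the amplification.

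Finally, the cost. Phase estimation to accuracy $\epsilon_{\mathrm{PEA}}$ uses $\mathcal{O}(1/\epsilon_{\mathrm{PEA}})$ calls to $W$ (\Cref{thm:block_encoding_confidence}), and each call to $W$ uses $\mathcal{O}(1)$ calls to $\Be[\mathbf{H}_{\sa}/\sqrt\lambda]$ plus the ancilla reflection of \Cref{lem:qubitization}, hence $\mathcal{O}(1)$ calls to $\Be[H_{\sa}/\sqrt\lambda]$ and its inverse. Substituting $\epsilon_{\mathrm{PEA}}=\Theta(\epsilon/\sqrt{\Delta\lambda})$ yields the claimed $\mathcal{O}(\sqrt{\Delta\lambda}/\epsilon)$ query complexity.
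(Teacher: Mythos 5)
Your proposal is correct and follows essentially the same route as the paper: build the Hermitian dilation $\mathbf{H}_{\sa}$ from $\Be[H_{\sa}/\sqrt{\lambda}]$, run phase estimation to obtain an estimate of $\pm\sqrt{E_j}$ (equivalently, of the walk eigenphase $\arccos(\pm\sqrt{E_j/\lambda})$), and invert the square-root nonlinearity, choosing the phase-estimation accuracy $\Theta(\epsilon/\sqrt{\Delta\lambda})$ to get the claimed query count. Your finite-difference bound $|\hat E - E_j|\le 2\sqrt{\Delta\lambda}\,\epsilon_{\mathrm{PEA}}+\lambda\,\epsilon_{\mathrm{PEA}}^2$ is in fact slightly cleaner than the paper's squaring argument, which assumes $\epsilon\le\sqrt{E}$ and implicitly needs $E$ replaced by its upper bound $\Delta$ when setting the precision.
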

\begin{proof}
    Consider the related problem of performing quantum phase estimation on the Hamiltonian ${\bf H}_{\rm SA}$ of Eq.~\eqref{eq:HSAHermitian} within additive precision $\epsilon >0$. This can be done by using the block-encoding $\Be[{\bf H}_{\rm SA}/\sqrt \lambda]$, and this block encoding can be easily implemented with one call to (controlled) $\Be[ {H}_{\rm SA}/\sqrt \lambda]$ and one call to its inverse. (An explicit construction of this block-encoding is in Ref.~\cite{zlokapa2024hamiltonian}.) Quantum phase estimation necessitates $\cO(\sqrt \lambda/\epsilon)$ uses of these block-encodings for this precision, which is known as the `Heisenberg limit'~\cite{berry2018improved,knill2007optimal}. Next we note that 
    \begin{align}
     ({\bf H}_{\rm SA})^2 = \begin{pmatrix}
         H & {\bf 0} \cr {\bf 0} & H_{\rm SA}^{\!}
         H_{\rm SA}^{\dagger}
     \end{pmatrix}    \;,
    \end{align}
    implying that, if $E \ge 0$ is the desired eigenvalue of $H$ (i.e., $H \ket \psi = E \ket \psi)$, $\pm \sqrt E$ are also eigenvalues of ${\bf H}_{\rm SA}$. Indeed, the (at most) two dimensional subspace spanned by $\{\ket \psi \ket 0 , {\bf H}_{\rm SA} \ket \psi \ket 0\}$ is invariant under the action of ${\bf H}_{\rm SA}$ and is the one that gives rise to the $\pm \sqrt {E}$. Hence, phase estimation using ${\bf H}_{\rm SA}$ with initial state $\ket \psi \ket 0$ will produce an estimate of $\sqrt E$ or $-\sqrt E$.
    
    Let $\zeta$ be such an estimate; for example consider $\sqrt E$ in which case we have $|\zeta - \sqrt E| \le \epsilon$. (The analysis can be applied to the case of $-\sqrt E$.) If we take the square we obtain
    \begin{align}
    E - 3 \sqrt E \epsilon  \le  (\sqrt E - \epsilon)^2 \le  \zeta ^2 \le (\sqrt E + \epsilon)^2 \le E + 3 \sqrt E \epsilon \;,
    \end{align}
    where we assumed the precision to satisfy $\epsilon \le \sqrt E$. Hence, if we set $\epsilon \le \epsilon/(3 \sqrt E)$ we can obtain $E$ within additive precision $\epsilon$, which is the desired goal.
    
    To this end, we run quantum phase estimation with ${\bf H}_{\rm SA}$, initial state $\ket \psi \ket 0$, and additive precision $\epsilon \le \epsilon/(3 \sqrt E)$. The number of calls to (controlled) $\Be[ {H}_{\rm SA}/\sqrt \lambda]$ and its inverse will be $\cO(\sqrt{E \lambda}/\epsilon )$. The result follows from the assumption $E \le \Delta$.
\end{proof}

Previous work~\cite{berry2024rapidinitialstatepreparation} summarized in~\cref{tab:ground_state_estimation} solves this problem with a query complexity of 
\begin{align}
\label{eq:rapid_state_prep_BE_queries}
\text{Queries to}\;\Be\left[\frac{H}{\lambda}\right]&=\mathcal{O}\left(\frac{\lambda}{\sqrt{p}\epsilon}\log\left(\frac{1}{p}\right)\log\left(\frac{1}{q}\right)\right),
\\
\text{Queries to}\;P&=\mathcal{O}\left(\frac{1}
{\sqrt{p}}\ln\left(\frac{\lambda}{\epsilon}\right)\ln\left(\frac{\ln\left({\lambda}/{\epsilon}\right)}{\sqrt{q}}\right)\right).
\end{align}
The key idea is reducing the estimation of $E$ to a fuzzy binary search problem.
Roughly speaking, at each iteration $i=0,\cdots,i_{\text{max}}-1$, an interval $\mathcal{I}_i$ containing $E$ with high confidence $1-q_i$ is found with query complexity  $\mathcal{O}(\frac{1}{|\mathcal{I}_i|}\log(\frac{1}{q_i}))$.
As $i$ increases, the width $|\mathcal{I}_i|$ is chosen to decrease geometrically.
Then by a union bound over, the final iteration estimates $E_{\sos}$ with additive error $|\mathcal{I}_i|/2$, failure probability $q=\sum_{i=0}^{i_{\text{max}}-1}q_i$.
Now suppose that $E$ is close to $-\lambda$, such as by a appropriate $H_{\sa}$ construction.
Then a modification of exploiting the $\arccos$ walk nonlinearity in the original proof of~\cref{eq:rapid_state_prep_BE_queries} leads to an improved spectral amplified query complexity.
\begin{theorem}[Fast ground state energy estimation by spectral amplification with no prior]\label{lem:fast_ground_sate_estimation_no_prior}
Let any Hermitian $H\in\mathbb{C}^{N\times N}$ have ground state $\ket{\psi_0}$ with unknown energy $E$.
Let $|\psi\rangle$ be prepared by the state preparation unitary $P\ket{0}_\mathrm{S}=|\psi\rangle$ such that the overlap $|\langle \psi|\psi_0\rangle|\ge \sqrt{p}$.
Let  $\Be[H/\lambda]\in\mathbb{C}^{D\times D}$ be a block-encoding of $H_{\sa}$.
Then if $E\in[-\lambda,0]$, it can be estimated to any additive error $\epsilon$ and confidence $1-q$ using 
\begin{align}
\label{eq:rapid_state_prep_BE_SA_queries}
\text{Queries to}\;\Be\left[\frac{H}{\lambda}\right]&=\mathcal{O}\left(\frac{\sqrt{\max\{\epsilon,\lambda-|E|\}\lambda}}{\sqrt{p}\epsilon}\log\left(\frac{1}{p}\right)\log\left(\frac{1}{q}\right)\right),
\\
\text{Queries to}\;P&=\mathcal{O}\left(\frac{1}
{\sqrt{p}}\log\left(\frac{\lambda}{\epsilon}\right)\log\left(\frac{\log(\lambda/\epsilon)}{q}\right)\right),
\end{align}
and their inverses, and two ancillary qubits and $\mathcal{O}(Q\log(D/N))$ arbitrary two-qubit gates.
\end{theorem}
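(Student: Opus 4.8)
The plan is to reduce estimation of the ground-state energy $E$ to a fuzzy binary search over the walk phase $\theta=\arccos(E/\lambda)$, combining the $\arccos$ error propagation that underlies \Cref{thm:sa_phase_estimation} with the finite-overlap search of Ref.~\cite{berry2024rapidinitialstatepreparation}. First I would form the qubitization walk operator $W=\textsc{Ref}_{\rm a}\cdot\Be[H/\lambda]$ of \Cref{lem:qubitization}, so that $\ket{\psi}\ket{0}_{\rm a}=P\ket{0}\ket{0}_{\rm a}$ decomposes over the walk eigenstates, the ground-state branch carrying weight at least $p$ and eigenphases $\pm\arccos(E/\lambda)$. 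Following \Cref{lem:fast_estimation_single_term_no_prior}, I would work with $-W$ so that the principal phase is symmetric about $0$, and resolve the $\pm$-branch ambiguity using controlled gapped phase estimation (\Cref{cor:ctrl_gapped_phase_estimation}).

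The spectral amplification comes entirely from the $\arccos$ nonlinearity. Since $E\in[-\lambda,0]$, the relevant phase lies near the boundary where $\arccos$ is steep, so a $\theta$-estimate of additive accuracy $\epsilon_\theta$ yields an $E$-estimate of accuracy $\epsilon\approx\lambda\sin(\theta)\,\epsilon_\theta=\sqrt{(\lambda-|E|)(\lambda+|E|)}\,\epsilon_\theta=\mathcal{O}(\sqrt{\lambda(\lambda-|E|)}\,\epsilon_\theta)$, exactly the error-propagation bound already used in \Cref{thm:sa_phase_estimation} and \Cref{lem:fast_estimation_single_term_no_prior}. Inverting this relation converts a Heisenberg-limited search in $\theta$ into the advertised $\sqrt{\lambda\max\{\epsilon,\lambda-|E|\}}$ scaling in $E$, where the $\max$ with $\epsilon$ covers the regime $\lambda-|E|<\epsilon$ in which the estimate $-\lambda$ is already accurate.

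I would then run the adaptive two-step search of \Cref{lem:fast_estimation_single_term_no_prior} essentially verbatim. The first step performs gapped phase estimation to $\theta$-accuracy $\epsilon'=\Theta(\sqrt{\epsilon/\lambda})$ and either certifies $0\le\lambda-|E|\le\epsilon$, in which case we return $-\lambda$, or produces an estimate $\hat{E}$ with $\lambda-|\hat{E}|$ within a constant multiplicative factor of $\lambda-|E|$. The second step uses $\hat{E}$ to fix the number of refinement iterations $i_\text{max}=\Theta(\log_{1/r}(\sqrt{(\lambda-|E|)\lambda}/\epsilon))$ with $r=2/3$, and to choose per-iteration confidences $q_i\propto (i_\text{max}-i+1)^{-2}$, so that a union bound gives total failure probability $q$ while the geometrically growing query cost is dominated by the last iteration, leaving only an $\mathcal{O}(\log\frac{1}{q})$ factor in the block-encoding count.

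The finite overlap $p<1$ is the genuinely new ingredient relative to \Cref{lem:fast_estimation_single_term_no_prior}, and the hard part will be ensuring that every binary-search decision tracks the \emph{ground} state rather than an excited component of weight $1-p$. Following Ref.~\cite{berry2024rapidinitialstatepreparation}, each threshold test is arranged so that the ground-state branch, of amplitude $\sqrt{p}$, is detected with $\Theta(1)$ probability via amplitude amplification; this contributes the $1/\sqrt{p}$ factor to the block-encoding queries, and the per-decision failure probability must be pushed below the $\sqrt{p}$ signal, contributing the $\log(1/p)$ factor. The state-preparation count matches Ref.~\cite{berry2024rapidinitialstatepreparation} because $P$ enters only through the amplitude-amplification reflections, whose number is set by $\sqrt{p}$ and the iteration and confidence-boosting counts rather than by the finer phase-estimation precision. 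The qubit overhead reduces to $2$ because the gapped phase estimation of \Cref{thm:gapped_phase_estimation} and \Cref{cor:ctrl_gapped_phase_estimation} uses no ancilla beyond the control qubit, and the gate count $\mathcal{O}(Q\log(D/N))$ follows from the reflection cost as in \Cref{lem:fast_estimation_single_term_no_prior}. The principal obstacle is to verify that the amplitude-amplification wrapper and the $\arccos$ error propagation compose without degrading either the $\sqrt{p}$ overlap dependence or the spectral-amplification factor, and that the adaptive choice of $i_\text{max}$ from the first step remains valid when the measured phase originates from a superposition rather than from a single eigenstate.
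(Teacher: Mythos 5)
Your proposal follows the paper's proof essentially step for step: the same qubitization walk on $-W$, the same ternary-interval search driven by gapped phase estimation, the same two-stage adaptive choice of $i_\text{max}$ with per-iteration confidences $q_i\propto(i_\text{max}-i+1)^{-2}$ and a union bound, and the same $\arccos$ error propagation giving the $\sqrt{\lambda\max\{\epsilon,\lambda-|E|\}}$ factor. The one point you flag as the ``principal obstacle'' --- composing the overlap-$p$ machinery with the phase search --- is resolved in the paper not by an amplitude-amplification wrapper in the style of Ref.~\cite{berry2024rapidinitialstatepreparation} but by a self-referential use of the paper's own amplified expectation estimation: running $\textsc{Gpe}_{\varphi_i,\theta_i,\delta_i}$ with $\delta_i=p/4$ on the full superposition $\ket{\psi}$ makes the flag-qubit probability $\kappa$ satisfy $\kappa\le p/4$ on one side of the threshold and $\kappa\ge 3p/4$ on the other, so each ternary decision reduces to estimating the expectation $\kappa=\bra{\psi}O^\dagger O\ket{\psi}$ of the block-encodable operator $O=(\ket{0}\bra{0}\otimes \one)\textsc{Gpe}_{\varphi_i,\theta_i,\delta_i}$ to additive error $p/2$, which \cref{lem:fast_estimation_single_term} or \cref{lem:fast_estimation_single_term_no_prior} does with $\mathcal{O}(p^{-1/2}\log(1/q_i))$ repetitions of a GPE circuit that itself costs $\mathcal{O}(r^{-i}\log(1/p))$ walk queries. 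This cleanly accounts for each of the $1/\sqrt{p}$, $\log(1/p)$, and $\log(1/q)$ factors and for the $P$-query count, and it makes the composition with the $\arccos$ propagation automatic, since each decision still only asks which third of $\mathcal{I}_i$ contains $\theta$; your sketch is otherwise correct.
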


\begin{proof}[Proof of~\cref{eq:rapid_state_prep_BE_SA_queries}.]
We find it convenient to work in units of phase.
Then using qubitization~\cref{lem:qubitization} on $\textsc{Be}[H/\lambda]$, we form a quantum walk $W$.
For each eigenstate $\ket{\psi_j}$ of $H\ket{\psi_j}=E_{j}\ket{\psi_j}$, let $\ket{\psi_j}=\frac{1}{\sqrt{2}}(\ket{\psi_j^+}+\ket{\psi_j^-})$, where $\ket{\psi_j^\pm}$ are all mutually orthogonal.
Then the eigenphases of the quantum walk are 
\begin{align}
-W\ket{\psi_j^{\pm}}=\exp\left(i\Theta_{\pm}\right),
\quad
\Theta_{j,\pm}=\pi\mp i \arccos\left(\alpha_j\right),
\quad\alpha_j=o_j-1,
\quad
o_j=\frac{E_j+\lambda}{\lambda},
\end{align}
where the minimum energy is $E:= E_{0}$.
Let $\alpha:= \alpha_0$, $o:= o_0\in[0,2]$, and $\theta:=|\Theta_{0,\pm}|\in[0,\pi]$, where we choose the principal range of $\Theta_{j,\pm}\in[-\pi/2,\pi/2]$.
Let us further assume that $o\in[0,1]$, hence $\theta\in \mathcal{I}_0:=[0,\pi/2]$.

Let us now perform a search for $\theta$.
Let the iterations of the search be indexed by $i=0,\cdots,i_\text{max}-1$, for some $i_\text{max}>0$
At each iteration, $\theta$ is known to be contained in an interval $\mathcal{I}_i:=[\mathcal{I}_{i,\text{l}},\mathcal{I}_{i,\text{r}}]$ with high probability $p_i$. 
At each iteration $i\ge0$, we split $\mathcal{I}_i$ into thirds
\begin{align}
\mathcal{I}_{i,\uparrow}&:=\left[\mathcal{I}_{i,\text{l}}+\frac{1}{3}|\mathcal{I}_{i}|,\mathcal{I}_{i,\text{r}}\right],
\quad
\mathcal{I}_{i,\downarrow}:=\left[\mathcal{I}_{i,\text{l}},\mathcal{I}_{i,\text{r}}-\frac{1}{3}|\mathcal{I}_{i}|\right],
\end{align}
and we will assign $\mathcal{I}_{i+1}$ to be either $\mathcal{I}_{i,\uparrow}$ or $\mathcal{I}_{i,\downarrow}$.
Note that the width $|\mathcal{I}_{i}|=\frac{\pi}{2}r^{i}$, where $r=2/3$.

We determine this assignment using gapped phase estimation $\textsc{Gpe}_{\varphi_i,\theta_i,\delta_i}$~\cref{thm:gapped_phase_estimation}, with $\theta_i=\frac{\mathcal{I}_{i,\text{l}}+\mathcal{I}_{i,\text{r}}}{2}$ at the midpoint of $\mathcal{I}_i$ and $\varphi_i=\frac{1}{6}|\mathcal{I}_{i}|$.
This prepares a state 
\begin{align}\nonumber
&\textsc{Gpe}_{\varphi_i,\theta_i,\delta_i}\ket{0}\ket{0}\ket{\psi_t}
\\\nonumber
&=\textsc{Gpe}_{\varphi_i,\theta_i,\delta_i}\ket{0}\left(\sqrt{\frac{p}{2}}(\ket{\psi_{0,+}}+\ket{\psi_{0,-}})+\sum_{j>0}\cdots\sum_{\pm}\ket{\psi_{j,\pm}}\right)
\\
&=\sqrt{\frac{p}{2}}\sum_{\pm}(\alpha(\pm\theta)\ket{0}+\beta(\pm\theta)\ket{1})\ket{\psi_{0,\pm}})+\sum_{j>0}\cdots\sum_{\pm}(\alpha(\Theta_{j,\pm})\ket{0}+\beta(\Theta_{j,\pm})\ket{1})\ket{\psi_{j,\pm}}.
\end{align}
Let us measure the $\{\ket{0},\ket{1}\}$ register to obtain outcome $\ket{m}$. 
Let $\kappa=|(\ket{0}\bra{0}\otimes I)\textsc{Gpe}_{\varphi_i,\theta_i,\delta_i}\ket{0}\ket{0}\ket{\psi_t}|^2$ be the probability that we obtain outcome $m=1$.
Assuming that $\theta$ is promised to be one of the two following cases:
\begin{outline}[enumerate]
\1 $\theta>\theta_i+\varphi_i$: The probability that we obtain outcome $m=1$ is
\begin{align}
\kappa
=\sum_{j}|\langle\psi_j|\psi_t\rangle|^2\frac{1}{2}\sum_{\pm}|\beta(\Theta_{j,\pm})|^2
\le
\sum_{j}|\langle\psi_j|\psi_t\rangle|^2\max_{j,\pm}|\beta(\Theta_{j,\pm})|^2=\max_{j,\pm}|\beta(\Theta_{j,\pm})|^2
\le
\delta_i.
\end{align}
\1 $\theta\le\theta_i-\varphi_i$: The probability that we obtain outcome $m=1$ is
\begin{align}
\kappa\nonumber
&=\sum_{j}|\langle\psi_j|\psi_t\rangle|^2\frac{1}{2}\sum_{\pm}|\alpha(\Theta_{j,\pm})|^2
=
\frac{p}{2}\sum_{\pm}(1-|\alpha(\pm\theta)|^2)+
\sum_{j>0}|\langle\psi_j|\psi_t\rangle|^2\frac{1}{2}\sum_{\pm}|\alpha(\Theta_{j,\pm})|^2\\
&\ge
p(1-\delta_i).
\end{align}
\end{outline}
Let us choose $\delta_i=p/4$. 
Hence, we can determine which of the two cases hold by deciding whether $\kappa\le\delta_i=p/4$ or $\kappa\ge p(1-\delta_i)\ge p(1-p/4)\ge3p/4$.
Now define the operator $O$ and state $\ket{\psi}$ to be
\begin{align}
O&:=(\ket{0}\bra{0}\otimes I)\textsc{Gpe}_{\varphi_i,\theta_i,\delta_i},\quad
\ket{\psi}=\ket{0}\ket{0}\ket{\psi_t},\quad
\Rightarrow\quad \kappa=\bra{\psi}O^\dagger O\ket{\psi}.
\end{align}
As we can block-encode $\Be[O/1]$ using a single controlled-$\textsc{Not}$ gate and $1$ query to $\textsc{Gpe}_{\varphi_i,\theta_i,\delta_i}$, we can solve this decision problem by applying either~\cref{lem:fast_estimation_single_term} or~\cref{lem:fast_estimation_single_term_no_prior} to find an estimate $\hat{\kappa}$ of $\kappa\le 3p/4$ to additive error $p/2$ and failure probability $q_i$ using $Q_{P,i}=\mathcal{O}\left(\frac{1}{\sqrt{p}}\log\frac{1}{q_i}\right)$ queries to $P$ and its inverse, and $Q_{H,i}=\mathcal{O}\left(\frac{1}{\sqrt{p}r^{i}}\log\frac{1}{p}\log\frac{1}{q_i}\right)$ queries to $\Be[H/{\lambda}]$.

Hence, we make the assignment
\begin{align}
\mathcal{I}_{i+1}=\begin{cases}
\mathcal{I}_{i,\downarrow},& \hat{\kappa}> p/4,\\
\mathcal{I}_{i,\uparrow}, &\hat{\kappa} \le p/4.
\end{cases}
\end{align}
The estimate of $\theta$ in $\mathcal{I}_{i}$ converts to an estimate of 
\begin{align}
\alpha&\in-\cos(\mathcal{I}_{i})=[-\cos(\mathcal{I}_{i,\text{l}}),-\cos(\mathcal{I}_{i,\text{r}})]
=
[-\cos(\theta_{i}-|\mathcal{I}_{i}|/2),-\cos(\theta_{i}+|\mathcal{I}_{i}|/2)],
\\\label{eq:fast_expectation_estimation_epsilon_1}
\epsilon_i&=|\cos(\mathcal{I}_{i})|=2\sin(\theta_i)\sin(|\mathcal{I}_i|/2),
\\
E+\lambda&\in\mathcal{H}_i=[\mathcal{H}_{i,\text{l}},\mathcal{H}_{i,\text{r}}],\quad \mathcal{H}_i:=\lambda(1-\cos\mathcal{I}_{i}),
\\
\epsilon_i'&:=|\mathcal{H}_i|=\lambda\epsilon_i,
\end{align}
where $\epsilon_i$ and $\epsilon_i'$ is the additive error to which $\alpha$ and $E$ is known respectively..

The remainder of the proof mirrors that of~\cref{lem:fast_estimation_single_term_no_prior} starting from~\cref{eq:fast_expectation_estimation_epsilon_1}.
We choose $q_i=\frac{6}{\pi^2}\frac{q/2}{(i_\text{max}-i+1)^2}$.
Then by a union bound, the failure probability after all $i_\text{max}$ steps is at most $q/2$ and we have identified with at least probability $1-q/2$ that $\theta\in\mathcal{I}_{i_\text{max}}$ with additive error at most $|\mathcal{I}_{i_\text{max}}|=\frac{\pi}{2}r^{i_\text{max}}$.
The total query complexity is then 
\begin{align}
Q_{H}&=\sum_{i=0}^{i_\text{max}-1}\mathcal{O}\left(\frac{1}{\sqrt{p}r^{i}}\log\frac{1}{p}\log\frac{1}{q_i}\right)
=\mathcal{O}\left(\left(\frac{2}{3}\right)^{i_\text{max}}\frac{1}{\sqrt{p}}\log\frac{1}{p}\log\frac{1}{q}\right),
\\
Q_P&=\sum_{i=0}^{i_\text{max}-1}\mathcal{O}\left(\frac{1}{\sqrt{p}}\log\frac{1}{q_i}\right)
=\mathcal{O}\left(\frac{i_\text{max}}{\sqrt{p}}\log\frac{i_\text{max}}{q}\right).
\end{align}
We now make two choices of $i_\text{max}$:
\begin{outline}[enumerate]
    \1 Let $i_\text{max}=i_{\text{max},1}+\Theta(1)$, where $i_{\text{max},1}=\frac{1}{2}\log_{1/r}\frac{\lambda}{\epsilon'}$. 
    Then we either determine that $E+\lambda\le\epsilon'$ and terminate the algorithm, or obtain an estimate $\hat{E}=\Theta(E)$ to constant multiplicative error. 
    \1 Let $i_\text{max}=i_{\text{max},2}+\Theta(1)$, where $i_{\text{max},2}=\log_{1/r}\frac{\sqrt{\hat{E}\lambda}}{\epsilon'}$. 
    Then we estimate $E$ to additive error at most $\epsilon'$.
\end{outline}
The overall query complexity for these two loops over different $i_{\text{max}}$ is
\begin{align}
Q_{H}&
=\mathcal{O}\left(\frac{\sqrt{\lambda}}{\sqrt{p}\epsilon'}(\sqrt{\epsilon'}+\sqrt{E+\lambda})\log\frac{1}{p}\log\frac{1}{q}\right),
\\
Q_P&
=\mathcal{O}\left(\frac{\log(\lambda/\epsilon^{\prime})}{\sqrt{p}}\log\frac{\log(\lambda/\epsilon')}{q}\right).
\end{align}
\end{proof}

\subsection{Optimality of spectral amplification}

Some results in \Cref{tab:results} are presented using $\Theta(.)$ notation and thus far we only commented on the upper bounds on the query complexity for quantum phase estimation and energy estimation. Indeed, these results can be shown to be optimal and tight lower bounds can be found. To this end, we use a similar problem than that used for the lower bounds in Ref.~\cite{zlokapa2024hamiltonian}, which essentially involves solving $K$ instances of unstructured search and then using a query lower bound for these. We note that similar lower bounds can be obtained using  related approaches based on polynomial approximations. 

\begin{definition}[$K$-partition ${{\rm PARITY} \circ {\rm OR}}$ decision problem]
    We denote the $K$-partition decision problem over a length-$KN$ bitstring by ${{\rm PARITY} \circ {\rm OR}}_{K,N}$, where $N=2^n$ and $K \ge 1, n \ge 1$ are integer. Each partition $k \in [K]:=\{1,\ldots,K \}$ consists of $N$ bits, where one bit is marked $f_k(x_k) = 1$ for some $x_k \in \{0, \ldots, N-1\}$, and all other bits $x \ne x_k$ are marked $f_k(x) = 0$. Access to the functions $f_1, \ldots , f_K : \{0, 1\}^N\mapsto \{0, 1\}$  is provided through a phase oracle $O_f$ that implements $\ket k \ket x \mapsto -1^{f_k(x)} \ket k \ket x$. Let $g:\{0,1\}^N \mapsto \{0,1\}$ be such that $g(x)$ is the OR of the first $N/2$ bits of $x$. The ${{\rm PARITY} \circ {\rm OR}}$ decision problem is to return YES if ${\rm PARITY}(g(x_1), \ldots, g(x_K)) = 1$ and NO otherwise, with probability at least 2/3.
\end{definition}
  
Reference~\cite{zlokapa2024hamiltonian} shows a lower bound $\Omega(K \sqrt N)$ on the number of uses of $O_f$. The result is intuitive from the $\Omega(\sqrt N)$ lower bound on Grover's search but the proof requires  some technical considerations that are addressed in that work. We now show how ${{\rm PARITY} \circ {\rm OR}}$ can be reduced to a phase estimation problem involving positive semidefinite Hamiltonians that can be used in conjunction with SA. Among other properties, these Hamiltonians can be used to solve $K$ instances of SEARCH. 

For a bitstring $x$, let 
\begin{align}
    H_x : = ( \one - \ketbra x - \ketbra s)^2 \; , \; \ket s:= \sqrt{\frac 2 N }\sum_{x'=0}^{N/2-1}\ket x' \;.
\end{align}
(The operator $\one$ is the identity of dimension $N$.)
Consider the two-dimensional subspace spanned by $\ket x$ and $\ket s$. In that subspace, the eigenvalues of $H_x$ are
\begin{align}
   \frac 2 N \ & {\rm if} \ \bra x s\rangle = \sqrt{2/N} \;, \\
    0 \ & {\rm if} \ \bra x s\rangle = 0 \;.
\end{align}
Furthermore, $\ket s$ is an eigenvector since the subspaces are doubly degenerate. We extend the definition to consider  $K$ subsystems in the following way:
\begin{align}
    H_{K,N}:=\sum_{k=1}^K \one^{\otimes k-1} \otimes H_{x_k} \otimes \one^{K-k} \;.
\end{align}
This Hamiltonian has the terms $H_{x_k}$ acting independently on the subsystems and $\ket s^{\otimes K}$ is an eigenvector of eigenvalue $\Delta \le 2K/N$. Since $\sqrt{H_{x_k}}=\one - \ketbra{x_k}-\ketbra s$, we can accordingly define the square root of $H_{K,N}$ for SA:
\begin{align}
    H^{\rm SA}_{K,N}:=\sum_{k=1}^K \ket k \otimes 
    \one^{\otimes k-1} \otimes (\one - \ketbra{x_k}-\ketbra s) \otimes \one^{K-k} \;,
\end{align}
and note that $H_{K,N} =   (H^{\rm SA}_{K,N})^\dagger   H^{\rm SA}_{K,N}$. Also, since $\|(\one - \ketbra{x_k}-\ketbra s)\|\le 1$, we obtain $\| H_{K,N}\|\le K$ and $\| H^{\rm SA}_{K,N}\|\le \sqrt K$.

\begin{theorem}
    Let $N\ge 1$, $K \ge 1$, $\lambda_K:=K$. Assume access to the block-encoding $\Be[H^{\rm SA}_{K,N}/\sqrt{\lambda_K}]$.  Let the initial low-energy eigenstate of $H_{K,N}$ be $\ket s^{\otimes K}$ that has eigenvalue at most $\Delta_{K,N}:=2K /N$. Then, performing quantum phase estimation on $H_{K,N}$ with precision $\epsilon_N= 1/ N$ requires $\Omega(\sqrt{\lambda_K \Delta_K}/\epsilon_N)$ calls to $\Be[H^{\rm SA}_{K,N}/\sqrt{\lambda_K}]$.
\end{theorem}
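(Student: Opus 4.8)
The plan is to prove the lower bound by reduction from ${\rm PARITY}\circ{\rm OR}_{K,N}$, whose $O_f$-query complexity is $\Omega(K\sqrt N)$ by Ref.~\cite{zlokapa2024hamiltonian}. I would first record the arithmetic that makes the reduction meaningful: with $\lambda_K=K$, $\Delta_K=2K/N$ and $\epsilon_N=1/N$ one has $\sqrt{\lambda_K\Delta_K}/\epsilon_N=\sqrt{2K^2/N}\,N=\sqrt2\,K\sqrt N=\Theta(K\sqrt N)$. Hence it suffices to show that phase estimation of $H_{K,N}$ to precision $1/N$ solves ${\rm PARITY}\circ{\rm OR}_{K,N}$ while spending only $O(1)$ queries to $O_f$ per use of $\Be[H^{\rm SA}_{K,N}/\sqrt{\lambda_K}]$.

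The key technical step is that one call to $\Be[H^{\rm SA}_{K,N}/\sqrt{\lambda_K}]$ costs only $O(1)$ queries to $O_f$. The point is that the phase oracle is exactly the $k$-conditioned reflection about the marked item, $O_f=\sum_{k=1}^K\ketbra k\otimes(\one-2\ketbra{x_k})$, so a single query exposes all $K$ reflections at once and no $k$-dependent data enters the construction except through $O_f$. Writing $A_k=\one-\ketbra{x_k}-\ketbra s=\tfrac12\one+\tfrac12(\one-2\ketbra{x_k})-\ketbra s$, with $\tfrac12\one$ and $\ketbra s$ oracle-independent, I would assemble $\Be[H^{\rm SA}_{K,N}/\sqrt{\lambda_K}]$ following \Cref{thm:sa}: PREPARE the uniform clock state $\tfrac1{\sqrt K}\sum_k\ket k$, route the selected subsystem into the oracle register with oracle-free controlled swaps, apply one query to $O_f$ for the reflection piece, and handle the $\tfrac12\one$ and $\ketbra s$ pieces with oracle-free LCU sub-steps. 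Every use of the block-encoding (and its inverse) then reduces to $O(1)$ queries to $O_f$, with the routing and LCU glue contributing only gates.

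It then remains to argue that phase estimation to precision $1/N$ decides the problem. The state $\ket s^{\otimes K}$ is an exact eigenstate of $H_{K,N}=\sum_k \one^{\otimes k-1}\otimes H_{x_k}\otimes\one^{K-k}$ with eigenvalue $\tfrac2N m$, where $m=|\{k:\langle x_k|s\rangle\neq0\}|=|\{k:g(x_k)=1\}|$ counts the partitions whose marked bit lies in the first half. Since the $K+1$ possible eigenvalues $\tfrac2N m$ are spaced by $2/N=2\epsilon_N$, an estimate of additive error $\epsilon_N=1/N$ determines $m$, and hence its parity, which is precisely the ${\rm PARITY}\circ{\rm OR}$ answer.

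Combining the pieces, any algorithm performing phase estimation on $H_{K,N}$ to precision $\epsilon_N$ with $Q$ calls to $\Be[H^{\rm SA}_{K,N}/\sqrt{\lambda_K}]$ yields an $O(Q)$-query algorithm for ${\rm PARITY}\circ{\rm OR}_{K,N}$ on input $\ket s^{\otimes K}$, which forces $O(Q)=\Omega(K\sqrt N)$ and therefore $Q=\Omega(\sqrt{\lambda_K\Delta_K}/\epsilon_N)$. I expect the main obstacle to be the second paragraph: making the reduction query-preserving, that is, verifying that the SA block-encoding inherits the marked-item data only through the single conditional reflection $O_f$ so the per-query overhead is a constant independent of $K$ and $N$. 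The boundary subtlety that $\epsilon_N$ equals half the eigenvalue spacing is harmless and can be absorbed into constants by taking a slightly finer precision.
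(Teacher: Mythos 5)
Your proposal is correct and follows essentially the same route as the paper: reduce from ${\rm PARITY}\circ{\rm OR}_{K,N}$ using its $\Omega(K\sqrt N)$ oracle lower bound, observe that $\ket s^{\otimes K}$ has eigenvalue $\tfrac{2}{N}m$ so precision $1/N$ recovers the parity of $m$, and show each use of $\Be[H^{\rm SA}_{K,N}/\sqrt{\lambda_K}]$ costs $O(1)$ queries to $O_f$ because the only instance-dependent piece of each $A_k$ is the reflection $\one-2\ketbra{x_k}=\bra k O_f\ket k$, exposed for all $k$ simultaneously by the conditional oracle. Your LCU decomposition of $A_k$ is a trivially rearranged version of the paper's $\tfrac12(\one-2\ketbra{x_k})+\tfrac12(\one-2\ketbra s)$, so there is no substantive difference.
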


\begin{proof}
    The first step is to reduce $K$-partition ${\rm PARITY} \circ {\rm OR}$ to this quantum phase estimation problem. To this end we note that whenever the marked element is $x_k <N/2$ then this increments the eigenvalue of $H_{K,N}$ by $2/N$. Hence, by determining the eigenvalue of $H_{K,N}$ to within precision $1/N$ we can count the number of instances where $x_k <N/2$ and compute the parity of that string to solve $K$-partition ${\rm PARITY} \circ {\rm OR}$.

    The second step is to  note that $\Be[H^{\rm SA}_{K,N}/\sqrt{\lambda_K}]$ can be constructed with a constant number of calls to the oracle $O_f$. To this end, consider 
    \begin{align}
    \frac 1{\sqrt {K}} H^{\rm SA}_{K,N}:=  \frac 1{\sqrt {K}} \sum_{k=1}^K \ket k \otimes 
    \one^{\otimes k-1} \otimes (\one - \ketbra{x_k}-\ketbra s) \otimes \one^{K-k} \;.
    \end{align}
    A block-encoding can be prepared by first preparing the state
    \begin{align}
       {\rm PREPARE}\ket 0 \mapsto    \frac 1 {\sqrt K} \sum_{k=1}^K \ket k \;,
    \end{align}
    and then applying the SELECT unitary
    \begin{align}
        {\rm SELECT}:= \sum_{k=1}^K \ketbra k \otimes   \one^{\otimes k-1}  \otimes \Be[
          (\one - \ketbra{x_k}-\ketbra s) ] \otimes \one^{K-k}\;.
    \end{align}
    Since $\one - \ketbra{x_k}-\ketbra s$ is a linear combination of at most two unitaries, the block-encoding
    $\Be[(\one - \ketbra{x_k}-\ketbra s)]$ can be obtained
    with one evaluation of $f_k$:
    \begin{align}
         \one - \ketbra{x_k}-\ketbra s= \frac 1 2 \left(  (\one -2 \ketbra{x_k}) +   (\one -2 \ketbra{s}) \right) = \frac 1 2 \left(  \bra k O_f \ket k +   (\one -2 \ketbra{s}) \right) \;.
    \end{align}
    But since we have access to $O_f$ that computes $f_k$ on input $k$ (i.e., it is a conditional operation), we can construct the global ${\rm SELECT}$ with one use of $O_f$. The result is that $\Be[H^{\rm SA}_{K,N}/\sqrt{\lambda_K}] \equiv 
    \Be[{\rm SELECT}.{\rm PREPARE}\ket 0]$ only requires one use of $O_f$.
    
    The lower bound on $K$-partition ${\rm PARITY} \circ {\rm OR}$ now implies that at least $\Omega(K \sqrt N)$ uses of $\Be[H^{\rm SA}_{K,N}/\sqrt{\lambda_K}]$ are needed for quantum phase estimation in this example. This can be alternatively expressed as
    \begin{align}
        \Omega \left(\frac {\sqrt{\lambda_K \Delta_K}}{\epsilon_N} \right) = \Omega \left( \sqrt{K \times \frac{2K}N} \times N \right) = \Omega (K \sqrt N)
    \end{align}
    as desired.
\end{proof}

The same result automatically applies to expectation estimation, since quantum phase estimation is an example of the former. That is, the previous result is a case of estimating $\bra \phi H_{K,N} \ket \phi$ within precision $1/N$, where $\ket \phi \equiv \ket s^{\otimes K}$.

\subsection{Spectral amplification and linear combination of unitaries}

Thus far we provided the query complexities for various simulation tasks using spectral amplification. These results required access to $\Be[H_{\rm SA}/\sqrt \lambda]$, which can be constructed from the $\Be[A_j/ a_j]$ as explained in~\Cref{thm:sa}. In applications we will need to construct these block-encodings from some presentation of the Hamiltonian terms and here we discuss the LCU presentation within the context of SA. To this end, we will assume that the terms $h_j \succeq 0$ in Eq.~\eqref{eq:sa_form} can be expressed as $h_j = A^\dagger_j A^{}_j$, for some operators $A_j$ that are expressed as
\begin{align}
\label{eq:BjforLCUSA}
    A_j = \sum_{l=0}^{L-1} a_{jl} \sigma_{jl} \;,
\end{align}
where the coefficients $a_{jl} \ge 0$ without loss of generality and 
the $ \sigma_{jl}$'s are unitary, e.g., Pauli strings. We seek to construct
a block-encoding for $H_{\rm SA}$ from accessing these unitaries. We can obtain the following result.
\begin{lemma}
\label{lem:salcu}
    Let $H_{\rm SA}=\sum_{j=0}^{R-1} \ket j _{\rm a} \otimes A_j$ be as in Eq.~\eqref{eq:sga_def}, where $A_j$ is presented in Eq.~\eqref{eq:BjforLCUSA}.
    Then, we can implement a block-encoding $\Be[H_{\rm SA}/\sqrt{\lambda}]$, where
    \begin{equation}
        \lambda:= \sum_{j=0}^{R-1} (\|\vec a_j \|_1 )^2  \; , \; \|\vec a_j \|_1 := \sum_{l=0}^{L-1} a_{jl} \;.
    \end{equation}
    This requires $\cO(R \times L \times C)$
    gates, where $C$ is the gate cost of the $\sigma_{jl}$'s.
\end{lemma}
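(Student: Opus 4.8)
The plan is to obtain the result by composing two constructions already established above: the LCU compilation of \Cref{thm:lcu}, which builds a block-encoding of each $A_j$ from its constituent unitaries, and the spectral-amplification block-encoding of \Cref{thm:sa}, which assembles individual block-encodings $\Be[A_j/a_j]$ into a block-encoding of $H_{\rm SA}$. The two results plug together almost verbatim once the right per-term normalization is identified.

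First I would set $a_j := \|\vec a_j\|_1 = \sum_l a_{jl}$ and check that this is an admissible LCU normalization. Since each $\sigma_{jl}$ is unitary, the triangle inequality gives $\|A_j\| \le \sum_l a_{jl}\|\sigma_{jl}\| = \|\vec a_j\|_1 = a_j$, so \Cref{thm:lcu} applies and produces $\Be[A_j/a_j]$ (and, by running the circuit in reverse, its inverse) using $\cO(L\cdot C)$ gates, where $C$ bounds the gate cost of the $\sigma_{jl}$'s. Feeding these into \Cref{thm:sa} with $h_j = A_j^\dagger A_j$ then yields $\Be[H_{\rm SA}/\sqrt\lambda]$ with the normalization $\lambda = \sum_j a_j^2 = \sum_j (\|\vec a_j\|_1)^2$, exactly as claimed. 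I would also record that this $\lambda$ is a legitimate normalization: using $H_{\rm SA}^\dagger H_{\rm SA} = \sum_j A_j^\dagger A_j$ from \Cref{eq:samainproperty}, one has $\|H_{\rm SA}\|^2 \le \sum_j \|A_j\|^2 \le \sum_j (\|\vec a_j\|_1)^2 = \lambda$, so $\sqrt\lambda \ge \|H_{\rm SA}\|$.

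For the gate count I would trace the \textsc{SELECT}/\textsc{PREPARE} structure of \Cref{thm:sa}. The \textsc{PREPARE} on the clock register costs $\cO(R)$ gates, while the \textsc{SELECT} applies one controlled copy of each of the $R$ block-encodings $\Be[A_j/a_j]$; since each costs $\cO(L\cdot C)$ gates, the \textsc{SELECT} costs $\cO(R\cdot L\cdot C)$ and dominates, giving the stated $\cO(R\times L\times C)$ total.

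The only place requiring care --- and the main, if minor, obstacle --- is the overhead of \emph{controlling} the inner LCU block-encodings inside \textsc{SELECT}, which I would argue adds at most a constant factor since the control can be pushed onto the $\sigma_{jl}$'s within each LCU. A cleaner and equivalent route, which I would present to make the $\cO(RLC)$ count manifest, is to collapse the nested structure into a single \textsc{PREPARE}--\textsc{SELECT} pair over the joint index $(j,l)$: a \textsc{PREPARE} producing amplitudes $a_{jl}/\sqrt\lambda$ on an $R\times L$-dimensional register, followed by a \textsc{SELECT} applying $\ket j_{\rm a}\otimes\sigma_{jl}$ that \emph{retains} the $j$-register as the output clock register of the rectangular operator $H_{\rm SA}$ while \emph{consuming} the $l$-register as the LCU ancilla. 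This single-layer circuit makes both the normalization $\lambda$ and the gate count transparent.
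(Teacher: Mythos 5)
Your proposal is correct and follows essentially the same route as the paper: instantiate the per-term normalization as $a_j=\|\vec a_j\|_1$, build each $\Be[A_j/\|\vec a_j\|_1]$ via a standard LCU \textsc{PREPARE}/\textsc{SELECT} pair at cost $\cO(L\cdot C)$, and feed these into the \textsc{SELECT}/\textsc{PREPARE} construction of \Cref{thm:sa} to obtain $\lambda=\sum_j(\|\vec a_j\|_1)^2$ and the $\cO(R\times L\times C)$ gate count. Your added checks that the normalizations are admissible and your remark on flattening the nested structure into a single \textsc{PREPARE}--\textsc{SELECT} over the joint index $(j,l)$ are harmless refinements of the same argument.
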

\begin{proof}
    The proof follows the steps of \Cref{thm:sa}. To implement SELECT, we need to replace $\Be[A_j/a_j]$ in that proof 
    by
    \begin{align}
        \Be \left [\frac{A_j} {\sum_{l=0}^{L-1}  a_{jl} } \right ]= \Be \left [\frac{\sum_{l=0}^{L-1} a_{jl}\sigma_{jl}} {\sum_{l=0}^{L-1}  a_{jl} } \right ] \;.
    \end{align}
    Note that we have effectively replaced $a_j$ in  \Cref{thm:sa} by $\sum_{l=0}^{L-1}  a_{jl} \equiv \|\vec a_j\|_1$ and $A_j$ by the LCU. In that proof, $\lambda=\sum_j a_j$ and this would give the normalization factor $\lambda =\sum_j (\|\vec a_j\|_1)^2$ in this case.

    The block-encoding can be constructed using standard techniques since it involves an LCU. We can define
    \begin{align}
        {\rm SELECT}_j : = \sum_{l=0}^{L-1} \ketbra l_{\rm c} \otimes \sigma_{jl}
    \end{align}
    and
    \begin{align}
        {\rm PREPARE}_j \ket 0_{\rm c}\mapsto \frac 1 {\sqrt{\sum_{l=0}^{L-1}  a_{jl} }} \sum_{l=0}^{L-1} \sqrt{a_{jl}} \ket l_{\rm c} \;,
    \end{align}
    so that
    \begin{align}
        \bra 0_{\rm c} ( {\rm PREPARE}_j)^\dagger \cdot {\rm SELECT} \cdot {\rm PREPARE}_j \ket 0_{\rm c} = \frac{\sum_{l=0}^{L-1} a_{jl}\sigma_{jl}} {\sum_{l=0}^{L-1}  a_{jl}} \;.
    \end{align}
    That is, $({\rm PREPARE}_j)^\dagger \cdot {\rm SELECT}\cdot {\rm PREPARE}_j$ gives the desired $\Be[{A_j}/ {\sum_{l=0}^{L-1}  a_{jl} } ]=\Be[{A_j}/ \|\vec a_{j}\|_1 ]$.
    The total gate cost is then dominated by the query cost of \Cref{thm:sa}, which is $\cO(R)$, times the gate cost of this block-encoding $\Be[{A_j}/ \|\vec a_{j}\|_1 ]$,
    which is $\cO(L \times C)$, where $C$ is the gate cost of the $\sigma_{jl}$.
\end{proof}

\section{SOS optimization and example SOSSA block-encoding}

In this section we provide necessary details for constructing the mathematical program that determines an SOS representation of a Hamiltonian given an operator basis for the SOS generators such that the lower bound energy is maximized. Recall, that maximizing the lower bound is an important algorithmic component in the SOSSA protocol that can lower the query complexity and end-to-end gate complexity if balanced with the cost of the block-encoding implementation. We also provide an example block-encoding for $H_{\sossa}$ expressed as a polynomial of Pauli operators.

\subsection{SOS optimization} \label{sec:sos}

Section~\ref{mainsec:sos} advocates for an SOS representation with increased complexity beyond termwise SA, which resulted in a loose lower bound that potentially negates an advantage through SA. In the following, we will restrict our exposition to Hamiltonians composed of Pauli strings but the construction can be applied more generally. We will also start with the formulation of the SOS optimization as a mathematical program and return later to the motivation of this form. Termwise SA uses a restricted algebra to form an SOS operator (in that case a projector) made from the linear combination of an identity operator and the Pauli operator of the Hamiltonian (Eq.~\eqref{eq:pauli_sos_projector}). The lower bound $-\beta$ on the ground state energy can be improved by providing more variational freedom in the SOS generators, the $B_{j}$ operators of Eq.~\eqref{eq:ham_sos_equality}. The variational protocol can be formulated as a semidefinite program (SDP). As an illustrative example, consider the set of operators constituting the span of one and two-qubit Pauli operators over $N$-qubits organized into a column vector
\begin{equation}
\vec{X}^{(2)} = \big( \one, X_1, \dots, Z_N, X_1 X_2, \dots, Z_{N-1} Z_N \big)^{T}.
\end{equation}
If our Hamiltonian is $2$-local, then this algebra is sufficient to represent the Hamiltonian. In the following, we consider $\vec{X}^{(k)}$ which involves operators involving degree-$k$ products of Pauli operators. A particular SOS generator is defined as $B_{j}\coloneqq \vec{b}_{j}\vec{X}^{(k)}$ for a row vector, $\vec b_j \in \mathbb{C}^{\binom{N}{k}}$, of complex coefficients to be optimized in order to satisfy Eq.~\eqref{eq:ham_sos_equality}. To show the optimization protocol is an SDP, we form the SOS representation of the Hamiltonian from a set of operators $\{B_{j}\}$:
\begin{align}\label{eq:sos_form_vec}
\sum_{j=0}^{R-1}B_{j}^{\dagger}B^{\!}_{j} &= \sum_{j=0}^{R-1}\left(\vec{b}_{j}\vec{X}^{(k)}\right)^{\dagger}\left(\vec{b}_{j}\vec{X}^{(k)}\right) \\
&=  \left(\vec{X}^{(k) \dag}\right)^{T} \Big(\sum_j \vec{b}_j^{\dag} \vec{b}_j \Big) \vec{X}^{(k)} \\
&= \left(\vec{X}^{(k) \dag}\right)^{T} G \vec{X}^{(k)} \;,
\end{align}
which can be used to  construct the program
\begin{align}\label{eq:app:sossdp}
&\min_{G}\;\beta \\
&\mathrm{s.t.}\;\; H + \beta \one = \left(\vec{X}^{(k) \dag}\right)^{T} G \vec{X}^{(k)} \nonumber \\
&G \succeq 0. \nonumber
\end{align}
The above SDP equality constraints are short-hand for relating the coefficients of the operators obtained by expanding the right-hand side of the equality to the left-hand side (the Hamiltonian and the shift). This computationally the equality constraint is represented by letting the Hamiltonian coefficients in matrix form, $\mathbf{H} \in \mathbb{C}^{L \times L}$ where $L = |\vec{X}^{(k)}|$ provide the constants resulting from the Hilbert-Schmidt inner products, $\langle \mathbf{A}|G\rangle$, between constraint matrix $\mathbf{A}$ and $G$ that encodes the coefficient relationship of the equality constraint. Similarly, the cost function corresponds to minimizing the coefficient associated with the identity operators, which in this case corresponds to the diagonal elements of the Gram matrix $G$. The SDP of Eq.~\eqref{eq:app:sossdp} can be solved in polynomial time with respect to the linear dimension of $G$. Once the SDP is solved an SOS representation can be recovered by expressing $G$ in its eigenbasis $G = \sum_j \mu_j \vec{a}_j \vec{a}_j^\dag$, where $\{\vec{a}_j\}$ which yields a definition for each $B_{j}$
\begin{equation} \label{eq:gram}
\vec{X}^{(k) \dag} G \vec{X}^{(k)} = \sum_{j=0}^{R-1} (\sqrt{\mu_j} \vec{a}_j^\dag \vec{X}^{(k)})^\dag (\sqrt{\mu_j} \vec{a}_j^\dag \vec{X}^{(k)}) = \sum_{j=0}^{R-1}B_{j}^{\dagger}B^{\!}_{j}.
\end{equation}
The number of $B_{j}$'s in the generating polynomials is equal to the rank $R$ of the Gram matrix $G$.

The intuition behind providing more variational freedom through the form of the SOS generator is that for any $-\beta$ that is less than the ground state energy $E$ of some Hamiltonian $H$ we can define their sum $\tilde{H} = H + \beta \one$ to be a positive semidefinite operator whose square root $\tilde{H} = \sqrt{\tilde{H}}\sqrt{\tilde{H}}$ which is potentially a different, and potentially more complicated, many-body body operator. Thus if the equality of Eq.~\eqref{eq:ham_sos_equality} is satisfied then the SOS Hamiltonian serves as a certificate that the energy is lower bounded by $-\beta$.  The certificate perspective is the primary method for constructing constraints on pseudomoment matrices in `outer' approximation methods that seek to minimize representations of marginals and approximately constrain them through the of sum-of-square construction~\cite{PhysRevA.75.032102, wittek2015algorithm, erdahl1978representability, nakata2001variational}.  In fact, it can be shown that the SOS Hamiltonians can be determined from the primal problem directly~\cite{low2025fast}.

\subsection{SOSSA block-encoding for Pauli operators} \label{sec:sossa}

Utilizing the solution of the SDP to construct the SOS generators  $B_{j}$, we can now replace these in the spectral amplified Hamiltonian; that is, we can replace $A_j \rightarrow B_j$ in Eqs.~\eqref{eq:sga_def},~\eqref{eq:samainproperty}.
Expanding Eq.~\eqref{eq:gram}, each SOS generator is a linear combination of Pauli operators
\begin{equation}
\label{eq:bjpaulis}
B_j = \sum_{l=0}^{L-1} b_{jl} \sigma_l \;,
\end{equation}
where $\{\sigma_l\}$ is the monomial basis and $L$ is the number of monomials in the SOS optimization. To block encode $H_{\sossa}$ of Eq.~\eqref{maineq:sossa_sqrt}, which is essentially $H_{\rm SA}$ in Eq.~\eqref{eq:sga_def} after replacing $A_j \rightarrow B_j$, gives us the following block-encoding normalization factor.
\begin{definition}[SOS $\lambda$] \label{def:sos_lambda}
    Define $\lambda_{\sos}$ to be
    \begin{equation}
        \lambda_{\sos} := \sum_{j=0}^{R-1} \|\vec{b}_j\|_1^2 \;, \; \|\vec b_j\|_1=\sum_{l=0}^{L-1} b_{jl} \;.
    \end{equation}
\end{definition}

This is the normalization factor in~\Cref{lem:salcu} for this choice of operators $B_j$, which readily implies the following.
\begin{corollary}[Compilation of SOSSA with Pauli strings] \label{thm:sossa}
    Let $H_{\sossa}=\sum_{j=0}^{R-1}\ket  j \otimes B_j$, where the $B_j$'s are in Eq.~\eqref{eq:bjpaulis}. 
    Then, we can construct the block-encoding  $\Be[H_{\sossa} / \sqrt{\lambda_{\sos}}]$
    with  $\mathcal{O}(R \times L)$ gates.
\end{corollary}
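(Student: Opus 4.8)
The plan is to obtain this corollary as an immediate specialization of \Cref{lem:salcu}, which already block-encodes a spectral-amplified operator whose factors are themselves linear combinations of unitaries. The essential observation is that $H_{\sossa}$ in \Cref{maineq:sossa_sqrt} is literally $H_{\sa}$ of \Cref{eq:sga_def} after the substitution $A_j \to B_j$, and that each $B_j=\sum_{l=0}^{L-1} b_{jl}\sigma_l$ in \Cref{eq:bjpaulis} has exactly the LCU form required by \Cref{eq:BjforLCUSA}. The only difference from the generic statement is that all the $B_j$ share a common monomial basis $\{\sigma_l\}$, so the double index $jl$ collapses to $l$; this is merely a special case and does not affect the argument. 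Note that the corollary is purely about compilation, so the SOS equality $H_{\sossa}^\dagger H_{\sossa}=H+\beta\one$ from \Cref{sec:sos} plays no role here.

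First I would reduce to nonnegative coefficients. The $b_{jl}\in\mathbb C$ may be complex, but since each $\sigma_l$ is a Pauli string, $e^{i\arg(b_{jl})}\sigma_l$ is again unitary, so each phase can be absorbed into the corresponding term of the inner ${\rm PREPARE}_j$/${\rm SELECT}_j$ pair. After this absorption the effective coefficients are $|b_{jl}|\ge 0$, and the relevant $\ell_1$-norm is $\|\vec b_j\|_1=\sum_{l} |b_{jl}|$ as in \Cref{def:sos_lambda}, matching the WLOG nonnegativity assumption of \Cref{lem:salcu}.

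Next I would invoke \Cref{lem:salcu} verbatim with $a_{jl}\to |b_{jl}|$ and $\sigma_{jl}\to\sigma_l$. This yields the block-encoding $\Be[H_{\sossa}/\sqrt{\lambda_{\sos}}]$ with the stated normalization $\lambda_{\sos}=\sum_{j=0}^{R-1}\|\vec b_j\|_1^2$ and gate cost $\cO(R\times L\times C)$, where $C$ is the cost of a controlled Pauli string $\sigma_l$. Concretely the construction is $\cO(R)$ invocations (from the outer ${\rm SELECT}$/${\rm PREPARE}$ over the clock register) of the inner block-encodings $\Be[B_j/\|\vec b_j\|_1]=({\rm PREPARE}_j)^\dagger\cdot{\rm SELECT}_j\cdot {\rm PREPARE}_j$, each costing $\cO(L\times C)$.

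Finally I would bound $C=\cO(1)$. Since the monomial basis $\{\sigma_l\}$ consists of degree-$k$ products of Pauli operators (\Cref{sec:sos}) for constant locality $k$, each $\sigma_l$ is a weight-$\cO(1)$ Pauli string whose controlled implementation uses $\cO(1)$ two-qubit gates; substituting $C=\cO(1)$ into $\cO(R\times L\times C)$ gives the claimed $\cO(R\times L)$ complexity. I do not expect any genuine obstacle, as the content is entirely bookkeeping; the single point requiring care is the monomial cost $C$, since for general high-weight Pauli strings one would instead obtain $\cO(R\times L\times N)$, so the clean $\cO(R\times L)$ bound relies specifically on the constant-weight structure of the degree-$k$ monomial basis.
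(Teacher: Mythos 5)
Your proposal is correct and follows essentially the same route as the paper: the paper also obtains the corollary by instantiating \Cref{lem:salcu} with $A_j \to B_j$ and then noting that the constant weight of the Pauli monomials gives $C=\cO(1)$, yielding $\cO(R\times L)$. Your explicit absorption of the phases of the complex coefficients $b_{jl}$ into the unitaries is a small bookkeeping detail the paper leaves implicit, but it does not change the argument.
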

Since the Pauli strings are of constant weight, $C$ is a constant in ~\Cref{lem:salcu}.

\vspace{0.1cm}

In \Cref{app:sec:SYK} we rely on upper and lower bounds on $\lambda_{\sos}$. To derive these, consider a general Hamiltonian $H \in \mathbb C^{M \times M}$ such that after the shift it produces $H+\beta \one =\sum_{j=0}^{R-1} B^\dagger_j B^{\!}_j$ as in Eq.~\eqref{eq:ham_sos_equality}.
Pauli strings are orthogonal and their trace is zero, implying
\begin{equation}
\sum_{j=0}^{R-1} \|\vec{b}_j\|_2^2 = \frac 1 M {\Tr}{H} + \beta \;.
\end{equation}
This gives us bounds of $\lambda_{\sos}$ via Cauchy-Schwarz:
\begin{equation}
\frac 1 M{\Tr}{H} + \beta \ \leq \ \lambda_{\sos} \ \leq \ L   \left(\frac 1 M {\Tr}{H} + \beta \right).
\end{equation}

\section{Application to SYK model}\label{app:sec:SYK}

In this section we describe the application of the SOSSA framework to the SYK model. We will see a factor of $\sqrt{N}$ speedup in both query and gate complexities when compared to the standard LCU compilation.  This is accomplished by showing that $\Delta_{\sos}$ scales linearly in system size with high probability and $\lambda_{\sos}$ scales quadratically. 

Let $\{\gamma_1,\dots,\gamma_N\}$ be Majorana operators satisfying anticommutation relations:
\begin{equation}
    \gamma_a \gamma_b + \gamma_b \gamma_a = 2 \delta_{ab} \one \;.
\end{equation}
The SYK model is described by a fermionic Hamiltonian containing all degree-4 Majorana terms whose coefficients are random Gaussians where
\begin{align}
H_{\syk} &= \frac{1}{\sqrt{N \choose 4}} \sum_{a,b,c,d} g_{abcd} \gamma_a \gamma_b \gamma_c \gamma_d \;, \\
g_{abcd} &\sim \mathcal{N}(0,1) \ \text{i.i.d.} \;.
\end{align}

To implement SOSSA, we will use the degree-2 Majorana SOS, which we describe in \Cref{sec:sos_Majorana_2}. We will then compile the resulting SOSSA Hamiltonian using the double factorization technique, which we recap in \Cref{sec:double_factorization}. In \Cref{sec:syk}, we use random matrix theory to analyze the energy gap $\Delta_{\sos}$ of the degree-2 Majorana SOS on the SYK model, and we also analyze the normalization factor $\lambda_{\sos}$ resulting from the double factorization block encoding. Putting these together demonstrate our asymptotic improvements.

\subsection{Degree-2 Majorana SOS} \label{sec:sos_Majorana_2}

For fermionic systems, a natural SOS ansatz is the degree-2 Majorana SOS, which we will use for the SYK model. In this section we describe how to implement degree-2 Majorana SOS for a general fermionic Hamiltonians of the form:
\begin{equation}
H = i \sum_{a,b=1}^N K_{ab} \gamma_a \gamma_b - \sum_{a,b,c,d=1}^N J_{abcd} \gamma_a \gamma_b \gamma_c \gamma_d
\end{equation}
Our $\sos$ is defined by the basis $\{\one, \gamma_a, \gamma_a \gamma_b\}$. Using \Cref{sec:sos}, we can write the SOS relaxation as ($\beta \in \mathbb R$)
\begin{align}
\sos(H) \quad = \qquad & \min{\beta}  \\
\text{s.t.} \quad & H + \beta \one = \vec{X}^\dag G \vec{X} \nonumber\\
& G \succeq 0 \nonumber
\end{align}
where
\begin{equation}
\vec{X} = \big(\one, \gamma_1, \dots, \gamma_N, i \gamma_1 \gamma_2, \dots, i \gamma_{N-1} \gamma_N\big)
\end{equation}
and $G$ is the Gram matrix with dimension $1 + N + {N \choose 2}$. Comparing the coefficients of $\one$ in the polynomial constraint equation reveals that $\beta = \Tr{G}$, since all other operators in the generating set are traceless. Comparing coefficients of the other monomials gives us the linear constraints on the SDP. Thus we can explicitly write the SDP
\begin{align}
\min{\Tr{G}} \\
\text{s.t.} \quad & 0 = G(\one, \gamma_a) + G(\gamma_a, \one) + i \sum_c G(i \gamma_a \gamma_c, \gamma_c) - i \sum_c G(\gamma_c, i \gamma_a \gamma_c) & \forall a \nonumber\\
& K_{ab} = G(\one, i \gamma_a \gamma_b) + G(i \gamma_a \gamma_b, \one) - i G(\gamma_a, \gamma_b) + i G(\gamma_b, \gamma_a) \nonumber\\
&\qquad - i \sum_c G(i \gamma_a \gamma_c, i \gamma_b \gamma_c) + i \sum_c G(i \gamma_b \gamma_c, i \gamma_a \gamma_c) & \forall a,b \nonumber\\
& 0 = G(\gamma_a, i \gamma_b \gamma_c) + G(\gamma_b, i \gamma_c \gamma_a) + G(\gamma_c, i \gamma_a \gamma_b) \nonumber\\
&\qquad + G(i \gamma_b \gamma_c, \gamma_a) + G(i \gamma_c \gamma_a, \gamma_b) + G(i \gamma_a \gamma_b, \gamma_c) & \forall a,b,c \nonumber\\
& J_{abcd} = G(i \gamma_a \gamma_b, i \gamma_c \gamma_d) + G(i \gamma_c \gamma_d, i \gamma_a \gamma_b) \nonumber\\
&\qquad - G(i \gamma_a \gamma_c, i \gamma_b \gamma_d) - G(i \gamma_b \gamma_d, i \gamma_a \gamma_c) \nonumber\\
&\qquad + G(i \gamma_a \gamma_d, i \gamma_b \gamma_c) + G(i \gamma_b \gamma_c, i \gamma_a \gamma_d) & \forall a,b,c,d \nonumber\\
& G \succeq 0 \nonumber
\end{align}

\subsection{Double factorization} \label{sec:double_factorization}

The degree-2 Majorana SOS leads to a representation of the Hamiltonian that takes the form
\begin{align}
&H + \beta\one = \sum_j B_j^\dag B_j^{\!} \;,\\
&B_j^{\!} = e_j \one + \sum_a f_{j,a} \gamma_a + \sum_{ab} g_{j,ab} \gamma_a \gamma_b
\end{align}
A similar factorization occurs in the 2-body term of quantum chemistry Hamiltonians. Using the direct block encoding strategy, \Cref{def:sos_lambda} gives for this SOS representation a SOS $\lambda$ of
\begin{equation} \label{eq:SF_lambda_sos}
\lambda_{\sos} = \sum_{j=0}^{R-1} \Big(|e_j| + \|f_j\|_1 + \sum_{ab} |g_{j,ab}|\Big)^2
\end{equation}
where $R$ is the rank of the Gram matrix $G$.

We can achieve a much better $\lambda_{\sos}$ using the concept of double factorization from quantum chemistry \cite{von2021quantum}, which we now describe. See \cite{lee2021even} for a comprehensive discussion of quantum algorithms techniques for quantum chemistry. Given a quadratic polynomial in the Majorana operators $\sum_{ab} g_{ab} \gamma_a \gamma_b$, let's first decompose $g_{ab}$ into it's real and imaginary parts $g_{ab} = g^{(R)}_{ab} + i g^{(I)}_{ab}$. Without loss of generality, $g^{(R)}_{ab}$ and $g^{(I)}_{ab}$ are real antisymmetric, since $\gamma_a$ and $\gamma_b$ always anticommute. There are Gaussian unitaries $U^{(R)}, U^{(I)}$ which rotate the quadratic polynomials into the block-diagonal forms
\begin{align}
    \Big(\sum_{ab} g^{(R)}_{ab} \gamma_a \gamma_b\Big) &= {U^{(R)}}^\dag \sum_a \tilde{g}^{(R)}_a \gamma_{2a-1} \gamma_{2a} U^{(R)} \\
    \Big(\sum_{ab} g^{(I)}_{ab} \gamma_a \gamma_b\Big) &= {U^{(I)}}^\dag \sum_a \tilde{g}^{(I)}_a \gamma_{2a-1} \gamma_{2a} U^{(I)}
\end{align}
This is because a Gaussian unitary $U$ acts on the $N$-vector of Majorana operators $(\gamma_1,\dots,\gamma_N)$ via a real orthogonal matrix $O$, and we can always design $U$ so that $O$ block diagonalizes a given antisymmetric matrix $g_{ab}$:
\begin{equation}
\begin{pmatrix}
g_{11} & g_{12} & \dots & g_{1,N} \\
g_{21} & g_{22} & & \vdots \\
\vdots & & \ddots & \\
g_{N,1} & \dots & & g_{N,N}
\end{pmatrix}
=
O^T \begin{pmatrix}
0 & g_1 & & & \\
-g_1 & 0 & & & \\
 & & \ddots & & \\
& & & 0 & g_N \\
& & & -g_N & 0 \\
\end{pmatrix} O
\end{equation}

Let's find such Gaussian unitaries $U^{(R)}_j,U^{(I)}_j$ for each $g_{j,ab}$. This let's us write $B_j$ as
\begin{equation}
B_j = e_j \one + \sum_a f_{j,a} \gamma_a + {U^{(R)}}^\dag \sum_a \tilde{g}^{(R)}_{(j,a)} \gamma_{2a-1} \gamma_{2a} U^{(R)} + {U^{(I)}}^\dag \sum_a i \tilde{g}^{(I)}_{(j,a)} \gamma_{2a-1} \gamma_{2a} U^{(I)}
\end{equation}
We can efficiently implement the Gaussian unitaries $U_j$ on the quantum computer. Thus we can follow \Cref{def:sos_lambda} to derive $\lambda_{\sos}$ for the double factorized SOS representation, and the proof of \Cref{thm:sossa} will go through. We get
\begin{equation} \label{eq:DF_lambda_sos}
\lambda_{\sos} = \sum_{j=0}^{R-1} \big(|e_j| + \|f_j\|_1 + \|\tilde{g}^{(R)}_j\|_1 + \|\tilde{g}^{(I)}_j\|_1\big)^2
\end{equation}
This could be significantly cheaper than directly block encoding $B_j$, which would have $\lambda_{\sos}$ as give in \Cref{eq:SF_lambda_sos}.

\subsection{Application to SYK model} \label{sec:syk}

We can now bring all of these ingredients together to analyze the performance of degree-2 Majorana SOSSA with double factorization on the SYK model, when compared to the standard LCU approach.

The asymptotic gate complexities to construct the necessary block-encodings for LCU and SOSSA are both $\cO(N^4)$. First, notice that the number of terms in $H_{\syk}$ is $\sim N^4$, with each coefficient independently random, and each of the terms are independent. This creates an $\cO(N^4)$ gate cost for block encoding $H_{\syk}$ using known methods based on \textsc{SELECT} and \textsc{PREPARE} unitaries which is expected to be optimal. 
Double factorization also allows for a $\mathcal{O}(N^4)$ gate cost: the cost of the $B_j$ are dominated by the costs of the Gaussian unitaries, which are $\mathcal{O}(N^2)$, and the number of $B_j$ is at most $\mathcal{O}(N^2)$. The asymptotic equivalence of the block encoding gate costs means that we can focus on comparing the query complexities $\lambda_{\lcu}$ and $\sqrt{\Delta_{\sos} \lambda_{\sos}}$.

For the SYK model,
\begin{align}
    \lambda_{\lcu} = \frac 1 {\sqrt{\binom{N}{4}}}\sum_{a,b,c,d} |g_{abcd}|
\end{align}
which scales like $\lambda_{\lcu} \sim N^2$. We will see that $\sqrt{\Delta_{\sos} \lambda_{\sos}} \sim N^{\frac{3}{2}}$, giving a factor of $\sqrt{N}$ improvement in the query complexity.

We begin by analyzing the scaling of $\Delta_{\sos}$. This will depend on the quality of the SOS lower bound $-\beta$ arising from the degree-2 Majorana SOS described in \Cref{sec:sos_Majorana_2}. The SYK Hamiltonian is traceless, $\Tr(H_{\syk}) = 0$, implying that the ground state energy is negative, and it is known that with high probability the minimum and maximum eigenvalues 
are proportional to $\sqrt N$~\cite{hastings2022optimizing}.
The following lemma shows that applying degree-2 Majorana SOS to the SYK models gives a lower bound $-\beta$, where  $\beta= \mathcal{O}(N)$, giving an energy gap of $\Delta_{\sos} = \mathcal{O}(N)$.
\begin{lemma} \label{lem:syk_sos} 
Degree-2 Majorana SOS achieves a lower bound of $-\beta$ on the ground energy  of $H_{\syk}$, for $\beta= \mathcal{O}(N)$,  with high probability.
\end{lemma}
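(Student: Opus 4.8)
The plan is to prove the equivalent statement $\beta = \sos(H_{\syk}) = \cO(N)$ by passing to the dual (pseudomoment) formulation of the degree-2 SOS program and lower bounding its objective using random matrix theory, as anticipated in the main text. By SDP duality (\Cref{sec:sos}), the optimal shift satisfies $-\beta = \min_{\tilde\rho}\tilde\rho(H_{\syk})$, where $\tilde\rho$ ranges over degree-2 pseudo-expectations: normalized linear functionals on Majorana monomials of degree at most $4$, with $\tilde\rho(\one)=1$, whose degree-2 moment (Gram) matrix $M$ is positive semidefinite. To establish $\beta = \cO(N)$ it therefore suffices to show that $\tilde\rho(H_{\syk}) \ge -\cO(N)$ for \emph{every} feasible $\tilde\rho$, with high probability over the disorder $\{g_{abcd}\}$.

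First I would express the quartic objective as an inner product against the moment matrix. Indexing rows and columns by the $\binom{N}{2}$ Hermitian degree-2 generators $i\gamma_a\gamma_b$ ($a<b$), the moment matrix has entries $M_{(ab),(cd)} = \tilde\rho\big((i\gamma_a\gamma_b)(i\gamma_c\gamma_d)\big)$ and unit diagonal, since $(i\gamma_a\gamma_b)^2=\one$; hence $\Tr(M)=\binom{N}{2}$. Collecting the SYK couplings into a real symmetric coefficient matrix $\Gamma$ of the same dimension, with $\Gamma_{(ab),(cd)} \propto g_{abcd}/\sqrt{\binom{N}{4}}$ on the disjoint-index entries and zeros on the shared-index entries (these would encode quadratic terms, which are absent in $H_{\syk}$), one obtains $\tilde\rho(H_{\syk}) = \langle\Gamma,M\rangle$ up to a deterministic $\cO(1)$ constant. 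Because $M\succeq 0$,
\begin{align}
\tilde\rho(H_{\syk}) = \langle\Gamma,M\rangle \ge \lambda_{\min}(\Gamma)\,\Tr(M) \ge -\|\Gamma\|\,\binom{N}{2}.
\end{align}

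It then remains to bound the spectral norm of the random matrix $\Gamma$, and I would argue that $\|\Gamma\| = \cO(1/N)$ with high probability. Here $\Gamma$ is symmetric of dimension $D=\binom{N}{2}=\Theta(N^2)$, with (essentially independent) mean-zero Gaussian entries of variance $\Theta(1/\binom{N}{4}) = \Theta(1/N^4)$, so the Wigner scaling $\|\Gamma\| \approx 2\sigma\sqrt{D} = \Theta(\sqrt{N^2}/N^2) = \Theta(1/N)$ should hold. Substituting yields $\tilde\rho(H_{\syk}) \ge -\cO(1/N)\cdot\binom{N}{2} = -\cO(N)$, which is exactly the claimed bound. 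The main obstacle is this norm estimate: $\Gamma$ is \emph{not} a clean GOE/Wigner matrix, because its entries inherit the combinatorial structure of the Majorana indices—entries indexed by overlapping pairs $\{a,b\},\{c,d\}$ are forced to vanish, and distinct entries sharing a single index draw on the same couplings, introducing dependencies. I would control these either by a method-of-moments computation of $\mathbb{E}\,\Tr(\Gamma^{2k})$, tracking the non-crossing pairings that dominate (as in the proof of the semicircle law) and checking that the disjointness constraints do not alter the leading order, or by a non-commutative concentration inequality (matrix Bernstein / noncommutative Khintchine) applied to a decomposition of $\Gamma$ into independent structured summands, yielding a high-probability tail. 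This completes the bound $\beta=\cO(N)$, matching the known degree-2 limitation of Ref.~\cite{hastings2022optimizing}.
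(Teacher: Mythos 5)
Your proposal is correct and follows essentially the same route as the paper: pass to the dual pseudomoment problem by strong duality, bound $\tilde\rho(H_{\syk})=\langle \Gamma,M\rangle \ge -\|\Gamma\|\Tr(M)$ using positive semidefiniteness of the moment matrix (the paper phrases this as matrix H\"older's inequality with $\|\tilde\rho\|_1=\Tr(\tilde\rho)=\binom{N}{2}$), and then invoke random matrix theory to get $\|\Gamma\|=\cO(1/N)$ with high probability. Your extra care about the coefficient matrix not being a clean Wigner ensemble (vanishing overlapping-pair entries and repeated couplings across the three pairings of each $4$-set) is a point the paper glosses over by citing a standard spectral norm bound, but it does not change the argument or the conclusion.
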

\begin{proof}
Our proof strategy is to use the dual problem to the SOS optimization. The SOS optimization obeys Slater's condition and thus exhibits strong duality -- see Section 5.2.3 of \cite{boyd2004convex}. Thus, in order to show the lower bound where $\beta = \mathcal{O}(N)$, it is sufficient to show that the lowest achievable energy in the dual problem is at least $-cN$ for some constant $c>0$.  That is, no solution to the dual problem can take a value lower than $-cN$.

Appendix B of \cite{pironio2010convergent} allows us to express the dual optimization problem to the $\sos$. Define a degree-2 pseudoexpectation to be a matrix $\tilde{\rho}$ whose rows and columns are indexed by quadratic Majorana operators, and where we impose that $\tilde{\rho}$ is positive semi-definite and obeys algebraic constraints
\begin{align}
&\tilde{\rho}(i\gamma_a\gamma_b,i\gamma_c\gamma_d) = -\tilde{\rho}(i\gamma_a\gamma_c,i\gamma_b\gamma_d) = \tilde{\rho}(i\gamma_a\gamma_d,i\gamma_b\gamma_c) \\
= &\tilde{\rho}(i\gamma_b\gamma_c,i\gamma_a\gamma_d) = -\tilde{\rho}(i\gamma_b\gamma_d,i\gamma_a\gamma_c) = \tilde{\rho}(i\gamma_c\gamma_d,i\gamma_a\gamma_b).
\end{align}
Let $J$ be the matrix with rows/columns indexed by subsets of $\{1,\dots,N\}$ of size 2, with entries
\begin{equation}
J(ab,cd) = \frac{1}{\sqrt{{N \choose 4}}} g_{abcd}
\end{equation}
where $g_{abcd}$ are the SYK coefficients. The dual problem is to minimize $\Tr(J \tilde{\rho})$ over degree-2 pseudoexpectations $\tilde{\rho}$.

To complete the proof, we show that any degree-2 pseudoexpectation $\tilde{\rho}$ obeys $\Tr(J \tilde{\rho}) \geq -cN$ with high probability. We have
\begin{equation}
\Tr(J \tilde{\rho}) \geq -\|J\|_{} \cdot \|\tilde{\rho}\|_1 = -\|J\|_{} \cdot \Tr(\tilde{\rho}) = -{N \choose 2} \|J\|_{} \;,
\end{equation}
where $\|.\|$ is the operator norm.
The first inequality was an application of matrix Holder's inequality \cite{baumgartner2011inequality}. Random matrix theory tells us that $\|J\|_{} = \mathcal{O}(N^{-1})$ with high probability. For example, see Theorem 4.4.5 in Ref.~\cite{vershynin2018high}. Thus $\Tr(J \tilde{\rho}) \geq -c N$ for any degree-2 pseudoexpectation $\tilde{\rho}$ with high probability over the SYK disorder.
\end{proof}

Next we analyze $\lambda_{\sos}$, which will involve the double factorization block encoding technique described in \Cref{sec:double_factorization}. The degree-2 Majorana SOS with double factorization writes the SYK Hamiltonian as
\begin{align}
&H_{\syk} + \beta \one = \sum_j B_j^\dag B_j \\
&B_j = e_j \one + \sum_a f_{j,a} \gamma_a + {U^{(R)}}^\dag \sum_a \tilde{g}^{(R)}_{(j,a)} \gamma_{2a-1} \gamma_{2a} U^{(R)} + {U^{(I)}}^\dag \sum_a i \tilde{g}^{(I)}_{(j,a)} \gamma_{2a-1} \gamma_{2a} U^{(I)}
\end{align}
where the $U_j$ are Gaussian unitaries. From \Cref{eq:DF_lambda_sos}, we can calculate
\begin{align}
\lambda_{\sos} &= \sum_j \big(|e_j| + \|f_j\|_1 + \|\tilde{g}^{(R)}_j\|_1 + \|\tilde{g}^{(I)}_j\|_1\big)^2 \\
&\leq 4N \Big(\sum_j |e_j|^2 + \sum_j \|\tilde{f}_j\|_2^2 + \sum_j \|\tilde{g}^{(R)}_j\|_2^2 + \sum_j \|\tilde{g}^{(I)}_j\|_2^2\Big) & \text{(Cauchy-Schwarz)}\\
&\leq 4N \cdot \overline{\Tr}\Big(\sum_j B_j^\dag B_j\Big) \\
&= 4N \cdot \overline{\Tr}\big(H_{\syk} + \beta \one\big) \\
&= 4N \beta \\
&= \mathcal{O}(N^2) & \text{(\Cref{lem:syk_sos})}
\end{align}
where $\overline{\Tr}$ denotes the normalized trace so that $\overline{\Tr}(\one) = 1$. Putting this together with the above analysis showing $\Delta_{\sos} = \mathcal{O}(N)$ gives $\sqrt{\Delta_{\sos} \lambda_{\sos}} = \mathcal{O}(N^{\frac{3}{2}})$.

\section{Bonus: Low-depth amplified amplitude estimation}\label{sec:low_depth_fast_expectation}
In this section, we use spectral amplification the reduce the sample complexity of estimating to standard deviation $\sigma$, the expectations of observables $E=\bra{\psi}H\ket{\psi}$ represented by a sum-of-squares $H=\sum_{j=0}^{R-1}A^\dagger_j A_j$~\cref{eq:samainproperty}, using circuits of very low-depth, and given copies of $\ket{\psi}$.
The expectation estimation results of in~\cref{sec:sa_energy_estimation} perform phase estimation using circuits with query and depth complexity $\mathcal{O}(\sqrt{\Delta\lambda}/\epsilon)$. 
Here, we instead perform Hadamard tests on the block-encoding $\textsc{Be}\left[\frac{A^\dagger_j A_j}{\frac{1}{2}|a_j|^2}-\mathbbm{1}\right]$, which can be formed by~\cref{lem:sos_block_encoding} using only one query to $\textsc{Be}[A_j/a_j]$ and its inverse.
This leads worse scaling with accuracy, but the low-depth circuits make this application more relevant to near-term applications.
By estimating terms with smaller normalization $a_j$ to lower precision and thus fewer samples, previous approaches would have a sample complexity $\cO\left(\left(\sum_{j=0}^{R-1}{|a_j|^2}/\sigma\right)^2\right)$~\cite{rubin2018marginals}.
However, using prior knowledge of upper bounds on the expectation of each term, say $\Delta_{j}\ge\bra{\psi}A^\dagger_jA_j\ket{\psi}$, SA allows us to achieve a sample complexity of  $\cO\left(\left(\sum_{j=0}^{R-1}{\sqrt{\Delta_j}|a_j|}/\sigma\right)^2\right)$ as follows from the following theorem.
\begin{theorem}[Expectation estimation by sums of squares at the standard quantum limit]\label{thm:low_depth_fast_expectation}
Let $H$ be any operator with a SOS decomposition $H=\sum_{j=0}^{R-1}A_j^\dagger A_j$.
Let $\Be[A_j/a_j]$ be a block-encoding of $A_j$.
Let each $\Delta_j\ge\bra{\psi}A^\dagger_j A_j|\psi\rangle=\langle H_j\rangle$, be any upper bound on the respective expectations.
Then the expectation $\bra{\psi}H|\psi\rangle$ can be estimated with variance at most $\sigma^2$ and no bias by repeating $N_j$ times the Hadamard test on $\Be[2A^\dagger_j A_ja^{-2}-\one]$ for each $j$ for total number of times
\begin{align}
\mathrm{Cost}=\sum_{j=0}^{R-1} N_j=2\left(\frac{\sum_{j=0}^{R-1}\sqrt{a_j^2\Delta_j}}{\sigma}\right)^2.
\end{align}
\end{theorem}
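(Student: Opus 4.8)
The plan is to estimate each term $\langle H_j\rangle := \bra\psi A_j^\dagger A_j\ket\psi$ separately by a cheap, constant-depth Hadamard test, to combine the per-term estimators into an unbiased estimator of $E=\sum_j \langle H_j\rangle$, to bound the single-shot variance of each term using the \emph{a priori} upper bound $\Delta_j$, and finally to optimize the shot allocation $\{N_j\}$ by Cauchy--Schwarz.

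First I would construct, for each $j$, the self-inverse Hermitian block-encoding $\Be[M_j]$ with $M_j := 2A_j^\dagger A_j a_j^{-2} - \one$, using \cref{lem:sos_block_encoding} instantiated with $H_{\sa}\leftarrow A_j$ and $\lambda \leftarrow a_j^2$. This costs one query to $\Be[A_j/a_j]$ and its inverse, so the resulting circuit has constant query depth -- this is precisely what makes the scheme low-depth. Since $M_j$ is Hermitian with $\|M_j\|\le 1$, a single Hadamard test on $\Be[M_j]$ with input $\ket{0}_{\rm a}\ket\psi$ produces a $\pm 1$ outcome $X_j$ with $\mathbb{E}[X_j] = \bra\psi M_j\ket\psi = 2\langle H_j\rangle/a_j^2 - 1$, because the ancilla projection extracts exactly $\bra\psi M_j\ket\psi$, which is real.

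Next I would define the per-shot estimator $Y_j := \tfrac{a_j^2}{2}(X_j+1)$, which is unbiased for $\langle H_j\rangle$, and observe that $Y_j$ is a scaled Bernoulli taking values in $\{0,a_j^2\}$ with success probability $p_j = \langle H_j\rangle/a_j^2$. Hence $\mathrm{Var}(Y_j)=a_j^4 p_j(1-p_j)\le a_j^4 p_j = a_j^2\langle H_j\rangle \le a_j^2\Delta_j$, where the last inequality uses the assumed upper bound. Averaging $N_j$ independent copies and summing over $j$ yields the global estimator $\hat E=\sum_j \tfrac{1}{N_j}\sum_s Y_j^{(s)}$, which is unbiased for $E$ and, by independence, has variance $\mathrm{Var}(\hat E)=\sum_j \mathrm{Var}(Y_j)/N_j \le \sum_j a_j^2\Delta_j/N_j$.

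Finally I would choose the allocation to minimize the total number of shots $\sum_j N_j$ subject to $\sum_j a_j^2\Delta_j/N_j \le \sigma^2$. The optimal choice is $N_j \propto a_j\sqrt{\Delta_j}$, and substituting it back -- equivalently applying Cauchy--Schwarz to the identity $\sum_j \sqrt{a_j^2\Delta_j} = \sum_j \sqrt{a_j^2\Delta_j/N_j}\cdot\sqrt{N_j}$ -- gives $\sum_j N_j \le (\sum_j\sqrt{a_j^2\Delta_j})^2/\sigma^2$, with the stated constant absorbing the rounding of each $N_j$ up to a positive integer. The main obstacle -- the only step requiring genuine care -- is the single-shot variance accounting: one must verify that the Hadamard-test outcome is an exact, bias-free Bernoulli whose variance is controlled by $\langle H_j\rangle$, and hence by $\Delta_j$, rather than by the larger $a_j^2$; the Cauchy--Schwarz optimization and the integrality overhead are then routine.
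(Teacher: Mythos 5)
Your proposal is correct and follows essentially the same route as the paper's proof: form the self-inverse block-encoding of $2A_j^\dagger A_j/a_j^2-\one$ via \cref{lem:sos_block_encoding}, run a Hadamard test whose outcome is an exact Bernoulli with mean $\langle H_j\rangle/a_j^2$, bound the per-shot variance by $a_j^2\Delta_j$, and optimize the shot allocation $N_j\propto a_j\sqrt{\Delta_j}$ (you via Cauchy--Schwarz, the paper via a Lagrange multiplier --- the same computation). Your slightly tighter variance bookkeeping even saves the factor of $2$ that appears in the paper's stated cost, so the claimed bound follows with room to spare.
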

\begin{proof}
Let $\Lambda_j:=\frac{1}{2}a_j^2$, $\phi_j:=\langle H_j\rangle/\Lambda_j$, and $p_j:=1-(1-\phi_j)^2$.
Using one query to $\Be[A_j/a_j]$ and $\Be[A_j/a_j]^\dagger$, we can construct the controlled block-encoding
\begin{align}
\textsc{CBe}_j:=\ket{0}\bra{0}\otimes I+\ket{1}\bra{1}\otimes \Be_j,
\end{align}
where $\textsc{Be}_j=\Be\left[\frac{A_j^\dagger A_j}{\Lambda_j}-\one\right]$ is self-inverse.

The Hadamard test applies a controlled unitary $U$ on the $\ket{+}$ and $\ket{\psi}$ states to produce
\begin{align}
( H\otimes \one)U\ket{+}\ket{\psi}
=\frac{1}{2}\left(\ket{0}(I+U)\ket{\psi}+\ket{1}(I-U)\ket{\psi}\right).
\end{align}
The probability of measuring $\ket{0}$ is $p=\frac{1}{4}\bra{\psi}(I+U)^\dagger(I+U)\ket{\psi}=\frac{1}{2}(1+\bra{\psi}\mathrm{Re}[U]\ket{\psi})$.
Now choose $U$ to be $\Be\left[\frac{A_j^\dagger A_j}{\Lambda_j}-\one\right]$ and input state to be $\ket{+}\ket{0}\ket{\psi}$. 
Then we obtain $\ket{0}$ with probability
\begin{align}\nonumber
p_{j}&=\frac{1}{2}\left(1+\bra{0}\bra{\psi}\mathrm{Re}\left[\Be\left[\frac{A_j^\dagger A_j}{\Lambda_j}-\one\right]\right]\ket{0}\ket{\psi}\right)\\\nonumber
&=\frac{1}{2}\left(1+\bra{0}\bra{\psi}\left[\Be\left[\frac{A_j^\dagger A_j}{\Lambda_j}-\one\right]\right]\ket{0}\ket{\psi}\right)\\\nonumber
&=\frac{1}{2}\left(1+\bra{0}\bra{\psi}\left[\frac{A_j^\dagger A_j}{\Lambda_j}-\one\right]\ket{0}\ket{\psi}\right)\\\nonumber
&=\frac{1}{2}\bra{\psi}\left[\frac{A_j^\dagger A_j}{\Lambda_j}\right]\ket{\psi}
\\
&=\frac{1}{2}\phi_j
\end{align}
Let $X_{j,x}\sim \mathcal{B}(p_j)$ be identical independently distributed random variables following the Bernouli distribution.
If we repeat $N_j$ times, let us estimate $\phi_j$ using the estimator $\hat{\phi}_j=\frac{2}{N_j}\sum_{x\in[N_j]}X_{j,x}$.
This estimator has expectation $\mathbb{E}[\hat{\phi}_j]=\phi_j$ and variance $\mathbb{E}[(\hat{\phi}_j-\mathbb{E}[\hat\phi_j])^2]=\frac{4}{N_j} p_j(1-p_j)=\frac{2}{N_j} \phi_j(1-\frac{1}{2}\phi_j)\le\frac{2}{N_j} \phi_j.$.

Let us estimate $\bra{\psi}H|\psi\rangle$ by the sum $\langle\hat{H}\rangle=\sum_{j=0}^{R-1}\Lambda_j\hat{\phi}_j$.
Then this estimator has expectation and variance
\begin{align}
\langle\hat{H}\rangle
&=\sum_{j=0}^{R-1}\Lambda_j{\phi}_j
=\bra{\psi}H|\psi\rangle,\\
\langle(\hat{H}-\langle H\rangle)^2\rangle
&=
\sum_{j=0}^{R-1} \Lambda^2\mathrm{Var}[\hat{\phi}_j]
\le\sum_{j=0}^{R-1}\frac{4\Lambda_j^2 \phi_j}{N_j}
=\sum_{j=0}^{R-1}\frac{4\Lambda_j \langle H\rangle}{N_j}
\end{align}
Let us now choose $N_j$ so that the variance is bounded by $\sigma^2$.
We also want to choose $N_j$ in a way that minimizes the overall number of measurements $\mathrm{Cost}=\sum_{j=0}^{R-1}N_j$.
The optimal $N_j$ is obtained using a Lagrange multiplier $\lambda$ with the constraint $\sum_{j=0}^{R-1}\frac{4\Lambda_j\langle H_j\rangle}{N_j}=\sigma^2$.
However, we do not know {\textit a priori} the value of $\langle H_j\rangle$.
Hence, we use the upper bound $\langle H_j\rangle\le \Delta_j$ to determine $N_j$ and extremize $F=\sum_{j=0}^{R-1}N_j+\lambda(\sum_{j=0}^{R-1}\frac{4\Lambda_j \Delta_j}{N_j}-\sigma^2)$.
At the stationary point,
\begin{align}
\frac{\partial F}{\partial N_j}&=1-\lambda\frac{4\Lambda_j \Delta_j}{N_j^2}=0&\Rightarrow\quad N_j&=\sqrt{\lambda}\sqrt{{4\Lambda_j \Delta_j}},\\
\sigma^2&=\frac{1}{\sqrt{\lambda}}\sum_{j=0}^{R-1}\sqrt{4\Lambda_j \Delta_j}
&\Rightarrow\sqrt{\lambda}&=\frac{1}{\sigma^2}\sum_{j=0}^{R-1}\sqrt{4\Lambda_j \Delta_j},
\\
\mathrm{Cost}&=\frac{1}{\sigma^2}\left(\sum_{j=0}^{R-1}\sqrt{4\Lambda_j \Delta_j}\right)^2
=\lambda\sigma^2.
\end{align}
Hence, the optimal
\begin{align}
N_j=\frac{\sqrt{\mathrm{Cost}}}{\sigma}\sqrt{4\Lambda_j \Delta_j}=\frac{4}{\sigma^2}\sqrt{\Lambda_j \Delta_j}\sum_{j=0}^{R-1}\sqrt{\Lambda_j \Delta_j}.
\end{align}
\end{proof}

\end{document}